\tikzstyle{vertex}=[circle, draw, inner sep=1pt, minimum size=18pt]
\tikzset{>={Latex[width=2mm,length=2mm]}}
\definecolor{darkgreen}{RGB}{50,150,50}
\def\final{0}  
\def\iflong{\iffalse}
\newcommand{\knote}[1]{{\color{red}[{\tiny Karthik: \bf #1}]\marginpar{\color{red}*}}}
\newcommand{\enote}[1]{{\color{darkgreen}[{\small Elena: \bf #1}]\marginpar{\color{red}*}}}
\newcommand{\gnote}[1]{{\color{red}[{Gaby: \bf #1}]\marginpar{\color{red}*}}}
\newcommand{\snote}[1]{{\color{red}[{Shubhang: \bf #1}]\marginpar{\color{red}*}}}
\newcommand{\ynote}[1]{{\color{blue}[{\small Young-San: \bf #1}]\marginpar{\color{red}*}}}
\newcommand{\mnote}[1]{{\color{purple}[{Minshen: \bf #1}]\marginpar{\color{red}*}}}
\newcommand{\todo}[1]{{\color{red}[{ TODO: \bf #1}]\marginpar{\color{red}*}}}
\newcommand{\knote}[1]{}
\newcommand{\enote}[1]{}
\newcommand{\gnote}[1]{}
\newcommand{\snote}[1]{}
\newcommand{\ynote}[1]{}
\newcommand{\mnote}[1]{}
\newcommand{\todo}[1]{}
\def\Z{\mathbb{Z}}
\newtheorem{theorem}{Theorem}[section]
\newtheorem{lemma}[theorem]{Lemma}
\newtheorem{claim}[theorem]{Claim}
\newtheorem{corollary}[theorem]{Corollary}
\theoremstyle{definition}
\newtheorem{definition}[theorem]{Definition}
\newcommand{\umaxBT}{\textsc{Undir\-Max\-Binary\-Tree}\xspace}
\newcommand{\dmaxBT}{\textsc{Dir\-Max\-Binary\-Tree}\xspace}
\newcommand{\rdmaxBT}{\textsc{rooted-\-Dir\-Max\-Binary\-Tree}\xspace}
\newcommand{\dagmaxBT}{\textsc{DAG\-Max\-Binary\-Tree}\xspace}
\newcommand{\kBT}{$k$-\textsc{Binary\-Tree}\xspace}
\newcommand{\kpath}{$k$-\textsc{Path}\xspace}
\newcommand{\abs}[1]{\left\vert#1\right\vert}
\newcommand{\set}[1]{\left\{#1\right\}}
\newcommand{\tuple}[1]{\left(#1\right)} 
\newcommand{\eps}{\varepsilon}
\newcommand{\tp}{\tuple}
 \renewcommand{\P}{\mathbf{P}}
\newcommand{\NP}{\mathbf{NP}}
\newcommand{\DTIME}[1]{\mathbf{DTIME}\tp{#1}}
\def\*#1{\mathbf{#1}}
\def\+#1{\mathcal{#1}}
\newcommand{\ceilfit}[1]{\left\lceil{#1}\right\rceil}
\newcommand{\itn}[1]{^{(#1)}}
\newcommand{\mbt}{\textsf{MBT}}
\newcommand{\opt}{\textsf{OPT}}
\newcommand{\GL}[1]{\mathbb{F}_{#1}}
\newcommand{\poly}{\ensuremath{\mathsf{poly}}}
\begin{document}

\title{The Maximum Binary Tree Problem}

\author{Karthekeyan Chandrasekaran\thanks{University of Illinois, Urbana-Champaign, Email: karthe@illinois.edu}
\and Elena Grigorescu\thanks{Purdue University, Email: \{elena-g, kulkar17, lin532, zhu628\}@purdue.edu}
\and Gabriel Istrate\thanks{West University of Timi\c{s}oara, Romania, and the e-Austria Research Institute. Email: gabrielistrate@acm.org}\\
\and Shubhang Kulkarni\footnotemark[2]
\and Young-San Lin\footnotemark[2]
\and Minshen Zhu\footnotemark[2]
}
\date{\today}

\maketitle

\begin{abstract}

We introduce and investigate the approximability of the {\em maximum binary tree problem} (MBT) in directed and undirected graphs. The goal in MBT is to find a maximum-sized binary tree in a given graph. MBT is a natural variant of the well-studied longest path problem, since both can be viewed as finding a maximum-sized tree of bounded degree in a given graph. 

The connection to longest path motivates the study of MBT in directed acyclic graphs (DAGs), since the longest path problem is solvable efficiently in DAGs. In contrast, we show that MBT in DAGs is in fact hard: it has no efficient $\exp(-O(\log n/ \log \log n))$-approximation algorithm under the exponential time hypothesis, where $n$ is the number of vertices in the input graph. In undirected graphs, we show that MBT has no efficient  $\exp(-O(\log^{0.63}{n}))$-approximation under the exponential time hypothesis. Our inapproximability results rely on self-improving reductions and structural properties of binary trees. We also show constant-factor inapproximability assuming $\P\neq \NP$.

In addition to inapproximability results, we present algorithmic results along two different flavors: (1) We design a randomized algorithm to verify if a given directed graph on $n$ vertices contains a binary tree of size $k$ in $2^k \poly(n)$ time. (2) Motivated by the longest heapable subsequence problem, introduced by Byers, Heeringa, Mitzenmacher, and Zervas, \emph{ANALCO 2011}, which is equivalent to MBT in \emph{permutation DAGs}, we design efficient algorithms for MBT in bipartite permutation graphs.

\end{abstract}

\section{Introduction}
A general degree-constrained subgraph problem asks for an optimal subgraph of a given graph with specified properties while also satisfying degree constraints on all vertices. 
Degree-constrained subgraph problems have numerous applications in the field of network design and consequently, have been studied extensively in the algorithms and approximation literature \cite{Gab83,EFRS90,FR94,RMRRH01,ADR08, LNSS09, SL15}. In this work, we introduce and study the \emph{maximum binary tree problem} in directed and undirected graphs. In the maximum binary tree problem (MBT), we are given an input graph $G$ and the goal is to find a \emph{binary tree} in $G$ with maximum number of vertices.

Our first motivation for studying MBT arises from the viewpoint that it is 
a variant of the longest path problem: In the longest path problem, the goal is to find a maximum-sized tree in which every vertex has degree at most $2$. In MBT, the goal is to find a maximum-sized tree in which every vertex has degree at most $3$. Certainly, one may generalize both these problems to finding a {\em maximum-sized degree-constrained tree} in a given graph. In this work we focus on binary trees; however, all our results extend to the maximum-sized degree-constrained tree problem for \emph{constant} degree bound.

Our second motivation for studying MBT is its connection to the \emph{longest heapable subsequence problem} introduced by Byers, Heeringa, Mitzenmacher, and Zervas \cite{byers2010heapable}. Let $\sigma = (\sigma_1 , \sigma_2, \ldots, \sigma_n)$ be a permutation  on $n$ elements. Byers et al. define a subsequence (not necessarily contiguous) of $\sigma$ to be \emph{heapable} if the elements of the subsequence can be sequentially inserted to form a binary \emph{min-heap} data structure. Namely, insertions subsequent to the first element, which takes the root position, happen below previously placed elements. The longest heapable subsequence problem asks for a maximum-length heapable subsequence of a given sequence. This generalizes the well-known longest increasing subsequence problem. Porfilio \cite{porfilio2015combinatorial} showed that the longest heapable subsequence problem is equivalent to MBT in permutation directed acyclic graphs (abbreviated \emph{permutation DAGs}): a permutation DAG associated with the sequence $\sigma$ is obtained by introducing a vertex $u_i$ for every sequence element $\sigma_i$, and arcs $(u_i,u_j)$ for every pair $(i,j)$ such that $i>j$ and $\sigma_i \ge \sigma_j$. On the other hand, for sequences of intervals the maximum binary problem is easily solvable by a greedy algorithm \cite{balogh2017heapability} (see also \cite{istrate2016heapability} for further results and open problems on the heapability of partial orders). These results motivate the study of MBT in restricted graph families.

We now formally define MBT in undirected graphs, which we denote as \umaxBT. A \emph{binary tree} of an {\em undirected} graph $G$ is a subgraph $T$ of $G$ that is connected and acyclic with the degree of $u$ in $T$ being at most $3$ for every vertex $u$ in $T$.  In \umaxBT, the input is an undirected graph $G$ and the goal is to find a binary tree in $G$ with maximum number of vertices. In the rooted variant of this problem, the input is an undirected graph $G$ along with a specified root vertex $r$ and the goal is to find a binary tree containing $r$ in $G$ with maximum number of vertices such that the degree of $r$ in the tree is at most $2$. We focus on the unrooted variant of the problem and mention that it reduces to the rooted variant.
We emphasize that a binary tree $T$ of $G$ is not necessarily spanning (i.e., may not contain all vertices of the given graph).
The problem of verifying whether a given undirected graph has a {\em spanning} binary tree is $\NP$-complete. This follows by a reduction from the Hamiltonian path problem: Given an undirected graph $G=(V,E)$, create a pendant vertex $v'$ adjacent to $v$ for every vertex $v\in V$. The resulting graph has a spanning binary tree if and only if $G$ has a Hamiltonian path.

Next, we formally define MBT in directed graphs. A \emph{tree of a directed graph} $G$ is a subgraph $T$ of $G$ such that $T$ is acyclic and has a unique vertex, termed as the root, with the property that every vertex $v$ in $T$ has a unique directed path \emph{to the root} in $T$. A \emph{binary tree of a directed graph} $G$ is a tree $T$ such that the incoming-degree of every vertex $u$ in $T$ is at most $2$ while the outgoing-degree of every vertex $u$ in $T$ is at most $1$. In the rooted variant of the maximum binary tree problem for directed graphs, the input is a directed graph $G$ along with a specified root $r$ and the goal is to find an $r$-rooted binary tree $T$ in $G$ with maximum number of vertices. The problem of verifying whether a given directed graph has a \emph{spanning} binary tree is $\NP$-complete (by a similar reduction as that for undirected graphs). 

The connection to the longest path problem as well as the longest heapable subsequence problem motivates the study of the maximum binary tree problem in directed acyclic graphs (DAGs). In contrast to directed graphs, the longest path problem in DAGs can be solved in polynomial-time (e.g., using dynamic programming or LP-based techniques). Moreover, verifying whether a given DAG contains a spanning binary tree is solvable in polynomial-time using the following characterization: a given DAG on vertex set $V$ contains a spanning binary tree if and only if the partition matroid corresponding to the in-degree of every vertex being at most two and the partition matroid corresponding to the out-degree of every vertex being at most one have a common independent set of size $|V|-1$. These observations raise the intriguing possibility of solving the maximum binary tree problem in DAGs in polynomial-time. For this reason, we focus on DAGs within the family of directed graphs in this work. We denote the maximum binary tree problem in DAGs as \dagmaxBT.

The rooted and the unrooted variants of the maximum binary tree problem in DAGs are polynomial-time equivalent by simple transformations. Indeed, the unrooted variant can be solved by solving the rooted variant for every choice of the root. To see the other direction, suppose we would like to find a maximum $r$-rooted binary tree in a given DAG $G=(V,E)$. 
Then, we discard from $G$ all outgoing arcs from $r$ and all vertices that cannot reach $r$ (i.e., we consider the sub-DAG induced by the descendents of $r$) and find an unrooted maximum binary tree in the resulting DAG. If this binary tree is rooted at a vertex $r' \neq r$, then it can be extended to an $r$-rooted binary tree by including an arbitrary $r' \rightarrow r$ path $P$---since the graph is a DAG, any such path $P$ will not visit a vertex that is already in the tree (apart from $r'$). The equivalence is also approximation preserving. For this reason, we only study the rooted variant of the problem in DAGs.

We present inapproximability results for MBT in DAGs and undirected graphs. On the algorithmic side, we show that MBT in directed graphs is fixed-parameter tractable when parameterized by the solution size. 
We observe that the equivalence of the longest heapable subsequence to MBT in permutation DAGs motivates the study of MBT even in restricted graph families. As a first step towards understanding MBT in permutation DAGs, we 
design an algorithm for 
bipartite permutation graphs. 
We use a variety of tools including self-improving and gadget reductions for our inapproximability results, and algebraic and structural techniques for our algorithmic results. \\


\subsection{Related work} \label{appendix-related-work}
Degree-constrained subgraph problems appeared as early as 1978 in the textbook of Garey and Johnson \cite{GJ-book} and have garnered plenty of attention in the approximation community  \cite{Gab83,EFRS90,FR94,RMRRH01,ADR08, KKN08, LNSS09, SL15}. A rich line of works have addressed the minimum degree spanning tree problem as well as the minimum cost degree-constrained spanning tree problem leading to powerful rounding techniques and a deep understanding of the spanning tree polytope \cite{CRRT09a,CRRT09b,KR02, KR05,Goe06,SL15,FR94}. Approximation and bicriteria approximation algorithms for the counterparts of these problems in directed graphs, namely degree-constrained arborescence and min-cost degree-constrained arborescence, have also been studied in the literature \cite{BKN09}. 

In the \emph{maximum-edge} degree-constrained connected subgraph problem, the goal is to find a connected degree-constrained subgraph of a given graph with maximum number of edges. This problem does not admit a PTAS \cite{APPSS09} and has been studied from the perspective of fixed-parameter tractability \cite{ASS08}. 
MBT could be viewed as a \emph{maximum-vertex} degree-constrained connected subgraph problem, where the goal is to maximize the number of \emph{vertices} as opposed to the number of \emph{edges}---the degree-constrained connected subgraph maximizing the number of vertices may be assumed to be acyclic and hence, a tree. 
It is believed that the \emph{connectivity constraint} makes the maximum-edge degree-constrained connected subgraph problem to become extremely difficult to approximate. Our results formalize this belief when the objective is to maximize the number of vertices. 

Switching the objective with the constraint in the maximum-vertex degree-constrained connected subgraph problem leads to the minimum-degree $k$-tree problem: here the goal is to find a minimum degree subgraph that is a tree with at least $k$ vertices. Minimum degree $k$-tree admits a $O(\sqrt{(k/\Delta^*)\log{k}})$-approximation, where $\Delta^*$ is the optimal degree and does not admit a $o(\log{n})$-approximation \cite{KKN08}. We note that the hardness reduction here (from set cover) crucially requires the optimal solution value $\Delta^*$ to grow with the number $n$ of vertices in the input instance, and hence, does not imply any hardness result for input instances in which $\Delta^*$ is a constant. Moreover, the approximation result implies that a tree of degree $O(\sqrt{k\log{k}})$ containing $k$ vertices can be found in polynomial time if the input graph contains a constant-degree tree with $k$ vertices.

We consider the maximum binary tree problem to be a generalization of the longest path problem as both can be viewed as asking for a maximum-sized degree-constrained connected acyclic subgraph. The longest path problem in undirected graphs admits an $\Omega\tp{{(\log{n}/\log\log{n})^2}/{n}}$-approximation \cite{BHK04}, but it is APX-hard and does not admit a $2^{-O\tp{\log^{1-\eps}{n}}}$-approximation for any constant $\eps>0$ unless $\NP \subseteq \DTIME{2^{\log^{O(1/\eps)}{n}}}$ \cite{karger1997approximating}. Our hardness results for the max binary tree problem in undirected graphs bolsters this connection. The longest path problem in directed graphs is much harder: For every $\eps >0$ it cannot be approximated to within a factor of $1/n^{1-\eps}$ unless $\P = \NP$, and it cannot be approximated to within a factor of $(\log^{2+\eps} n)/n$ under the Exponential Time Hypothesis \cite{BHK04}. However, the longest path problem in DAGs is solvable in polynomial time. Our hardness results for the max binary tree problem in DAGs are in stark contrast to the polynomial-time solvability of the longest path problem in DAGs. 

On the algorithmic side, the color-coding technique introduced by Alon, Yuster, and Zwick \cite{alon1995color} can be used to decide whether an undirected graph $G=(V,E)$ contains a copy of a bounded treewidth pattern graph $H=(V_H, E_H)$ where $|V_H|=O\tp{\log|V|}$, and if so, then find one in polynomial time. The idea here is to randomly color the vertices of $G$ by $O\tp{\log|V|}$ colors and to find a maximum colorful copy of $H$ using dynamic programming. We note that the same dynamic programming approach can be modified to find a maximum colorful binary tree. This algorithm can be derandomized, thus leading to a deterministic $\Omega\tp{({1}/{n})\log{n}}$-approximation to \umaxBT. 

In parameterized complexity, designing algorithms with running time $\beta^k \poly(n)$ ($\beta > 1$ is a constant) for problems like \kpath and $k$-\textsc{Tree} is a central topic.  For \kpath, the color-coding technique mentioned above already implies a $(2e)^k \poly(n)$-time algorithm. Koutis \cite{koutis2008faster} noticed that \kpath can be reduced to detecting whether a given polynomial contains a multilinear term. Using algebraic methods for the latter problem, Koutis obtained a $2^{1.5k} \poly(n)$ time algorithm for \kpath. This was later improved by Williams \cite{williams2009finding} to $2^k \poly(n)$. The current state-of-art algorithm is due to Bj\"orklund, Husfeldt, Kaski and Koivisto 
\cite{bjorklund2017narrow}, which is also an algebraic algorithm with running time $1.66^k \poly(n)$.
All of these algorithms are randomized. 
Our study of the \kBT problem, which is the problem of deciding whether a given graph $G$ contains a binary tree of size at least $k$, is inspired by this line of results. 

Several $\NP$-hard problems are known to be solvable in specific families of graphs. Bipartite permutation graphs 
is one such family which is 
known to exhibit this behaviour 
\cite{Efficient_Alg_Longest_Path, smith2011minimum,spinrad1987bipartite,KLOKS1998313}.
Our polynomial-time solvability result for these families of graphs crucially identifies the existence of structured optimal solutions to reduce the search space and solves the problem over this reduced search space.

\subsection{Our contributions}


\subsubsection{Inapproximability results}

\noindent \textbf{Directed graphs.}
We first focus on directed graphs and in particular, on directed acyclic graphs. 
It is well-known that the longest path problem in DAGs is solvable in polynomial-time. In contrast, we show that \dagmaxBT does not even admit a constant-factor approximation. Furthermore, if \dagmaxBT admitted a polynomial-time $\exp\tp{-O\tp{\log{n}/\log\log{n}}}$-approximation algorithm then the Exponential Time Hypothesis would be violated. 

\begin{restatable}{theorem}{dagmaxBTnoApprox}\label{theorem:dagmaxBT-no-const-approx}
We have the following inapproximability results for \dagmaxBT on $n$-vertex input graphs: 
\begin{enumerate}
\item \label{theorem:dagmaxBT-no-const-approx_1} \dagmaxBT does not admit a polynomial-time constant-factor approximation assuming $\P \neq \NP$.
\item If \dagmaxBT admits a polynomial-time $\exp\tp{-O\tp{\log{n}/\log\log{n}}}$-approximation, then $\NP \subseteq \DTIME{\exp\tp{O\tp{\sqrt{n}}}}$, refuting the Exponential Time Hypothesis.
\item For any $\eps>0$, if \dagmaxBT admits a quasi-polynomial time $\exp\tp{-O\tp{\log^{1-\eps}{n}}}$-approximation, then $\NP \subseteq \DTIME{\exp\tp{\log^{O\tp{1/\eps}}{n}}}$, thus refuting the Exponential Time Hypothesis.
\end{enumerate}

\end{restatable}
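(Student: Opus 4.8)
The plan is to prove all three parts with a single \emph{self-improving} (composition) reduction: part~\ref{theorem:dagmaxBT-no-const-approx_1} supplies the base gap, and a gap-amplifying product iterates it up to the super-constant ratios of parts~(2) and~(3). For the base case I would first give a gadget reduction from a convenient $\NP$-hard problem (e.g.\ a bounded-occurrence $3$-SAT instance, encoded so that a satisfying assignment builds a large binary tree) into a rooted DAG $G_0$ such that a YES instance admits a binary tree of size at least $s$ while a NO instance admits no binary tree larger than $s/c$ for some absolute constant $c>1$. Since the rooted and unrooted variants are polynomial-time and approximation equivalent in DAGs, building a rooted gap instance suffices. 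Distinguishing the two cases is then $\NP$-hard, which rules out any polynomial-time $c$-approximation and hence any constant-factor approximation, giving part~(1).

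\textbf{The composition engine (completeness).} Next I would define a product $G \mapsto G \circledast G$ on rooted DAG instances that makes $\opt$ behave multiplicatively. Take a ``top'' copy of $G$ and replace each vertex $v$ by a fresh ``inner'' copy $G_v$, wiring the root $r_v$ of $G_v$ into the single portal that $v$ used to occupy, and reserving the in-degree-$2$ / out-degree-$1$ budget at portals so that inter-copy connections still respect the binary-tree degree bounds. Completeness is the easy direction: expanding every node of a size-$a$ binary tree in the top copy into a size-$b$ binary tree inside the corresponding inner copy yields a binary tree of size $\Omega(a\cdot b)$, so $\opt(G\circledast G)\ge \opt(G)^2$ up to lower-order terms. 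Iterating $t$ times produces $G^{(t)}$ with $n=|G_0|^{\Theta(t)}$ vertices whose binary tree has size $s^{\Theta(t)}$ in the YES case.

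\textbf{Soundness (the hard part).} The crux is the matching upper bound $\opt(G\circledast G)\le \opt(G)^2$ (up to the portal slack), and it is where I expect the real difficulty. Here I would exploit the directed structure: every vertex of a directed binary tree has out-degree at most $1$, so the ``toward-the-root'' map is functional. Since each inner copy $G_v$ is attached to the rest of the graph only through its portal $r_v$, any binary tree $T$ has at most one edge leaving $G_v$, and every vertex of $T\cap G_v$ must route to the global root through $r_v$; hence $T\cap G_v$ is a \emph{single} rooted subtree of $G_v$, of size at most $\opt(G)$. The set of inner copies that $T$ touches then inherits, through the portals, a binary-tree structure on the top copy (branching bounded by the reserved in-degree-$2$ budget), so at most $\opt(G)$ copies are used. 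Multiplying the two bounds gives the structural decomposition lemma on which the whole argument rests. The genuinely delicate point is controlling the additive/constant slack introduced by each portal gadget so that it does not erode the amplified gap across all $t$ levels, i.e.\ so the per-level loss compounds to at most a constant base in the exponent.

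\textbf{Finishing parts~(2) and~(3).} Combining the two directions, after $t$ levels the gap is $c^{\Theta(t)}=\exp(\Theta(t))$ on an instance with $n=|G_0|^{\Theta(t)}$ vertices, so $\log n=\Theta(t\log|G_0|)$. Starting from a base DAG of size $\poly(m)$ obtained from an $m$-variable $3$-SAT formula and choosing $t=\Theta(\log n/\log\log n)$ makes the gap $\exp\tp{\Omega(\log n/\log\log n)}$; a polynomial-time algorithm beating this ratio would then decide $3$-SAT fast enough to place $\NP\subseteq\DTIME{\exp\tp{O(\sqrt{n})}}$, refuting the ETH and giving part~(2). For part~(3) I would instead set $t=\Theta(\log^{1-\eps} n)$ and allow quasi-polynomial running time, which yields the claimed $\exp\tp{-O(\log^{1-\eps} n)}$-inapproximability together with the consequence $\NP\subseteq\DTIME{\exp\tp{\log^{O(1/\eps)} n}}$. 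In both cases the only quantitative care needed beyond the soundness lemma is tracking how the size blow-up $|G_0|^{\Theta(t)}$ and the gap $\exp(\Theta(t))$ scale jointly as $t$ is tuned.
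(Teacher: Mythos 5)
Your amplification machinery is essentially the paper's: the vertex-copy squared graph, the completeness bound $OPT(G^2)\ge OPT(G)^2$, and the soundness decomposition via the out-degree-$1$ property (each inner copy meets the tree in a single subtree routed through its portal, and the portals project back to a binary tree in the base graph) are exactly Definition~\ref{def:d-squared} and Lemmas~\ref{lem:di-squared-opt} and~\ref{lem:di-sqrtfactor}, and your parameter choices for parts (2) and (3) match the paper's up to reparametrization (your $t$ is the paper's $2^k$). The paper runs the amplification in the contrapositive (a constant-factor approximation yields a PTAS, Theorem~\ref{thm:d-ptas}) rather than as forward gap amplification, but that difference is cosmetic. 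One construction detail you flag but do not resolve: if all inter-copy arcs attach at the copy's root $r_v$, a copy with two children in the top-level tree has its root's in-degree budget exhausted by inter-copy arcs, which breaks the completeness direction. The paper avoids this by augmenting each copy with a source vertex $s_v$ that absorbs the incoming inter-copy arcs and hooks onto a leaf of the inner subtree, so the root keeps its full budget for the inner copy.

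The genuine gap is the base case. You assert a gadget reduction from bounded-occurrence $3$-SAT producing a DAG with a constant multiplicative gap, but give no construction, and this is precisely where the paper does its nontrivial work: since longest path is polynomial-time solvable in DAGs, no path-style or Hamiltonicity-style gadget is available, and the paper instead reduces from \textsc{Max-$3$-Colorable-Subgraph}, encoding the color of vertex $v_i$ as the unique one of three length-$n$ paths $R_i, G_i, B_i$ omitted below $v_i$ (forced by the in-degree bound at $v_i$), and attaching a large tree $T_e^C$ for each edge-color pair whose inclusion certifies that edge $e$ is not monochromatic in color $C$ (Lemmas~\ref{lem:lb-bigtree} and~\ref{lem:optbds}, Theorem~\ref{theorem:dagmaxBT-no-ptas}). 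Without some such construction there is nothing to amplify and all three parts remain unproven; with it, the rest of your outline goes through as in the paper.
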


\paragraph{LP-based approach.} The longest path problem in DAGs can be solved using a  linear program (LP) based on cut constraints. Based on this connection, an integer program (IP) based on cut constraints can be formulated for \dagmaxBT. In Section \ref{sec:IP-gap-dags}, we show that the LP-relaxation of this cut-constraints-based-IP has an integrality gap of $\Omega(n^{1/3})$ in $n$-vertex DAGs.\\

\noindent \textbf{Undirected graphs.} 
Next, we turn to undirected graphs. We show that \umaxBT does not have a constant-factor approximation and does not admit a quasi-polynomial-time exp$(-O(\log^{0.63}{n}))$-approximation under the Exponential Time Hypothesis. 
\begin{restatable}{theorem}{umaxBTnoApprox} \label{theorem:umaxBT-no-const-approx}
We have the following inapproximability results for \umaxBT on $n$-vertex input graphs: 
\begin{enumerate}
    \item \label{theorem:umaxBT-no-const-approx_1} \umaxBT 
does not admit a polynomial-time constant-factor approximation assuming $\P \neq \NP$.
\item For $c=\log_3{2}$ and any $\eps > 0$, if \umaxBT admits a quasi-polynomial time $\exp\tp{-O\tp{\log^{c-\eps}{n}}}$-approximation, then $\NP \subseteq \DTIME{\exp\tp{\log^{O\tp{{1/\eps}}}{n}}}$, thus refuting the Exponential Time Hypothesis. 
\end{enumerate} 
\end{restatable}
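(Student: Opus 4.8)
The plan is to prove both parts through gap-amplifying reductions, with the constant-factor bound of part~(1) serving as the base case that a self-improving reduction then bootstraps into part~(2). For part~(1) I would give a gadget reduction from a canonical $\NP$-hard problem (for instance a bounded-occurrence gap version of satisfiability, or the Hamiltonicity-based hard instances underlying our DAG construction) that encodes the yes/no distinction as a fixed multiplicative gap in the size of the largest binary tree: YES instances admit a binary tree of size at least $s$, while NO instances admit none of size exceeding $s/g_0$ for some absolute constant $g_0>1$. The design principle is to build a backbone whose binary-tree value is forced to be large exactly when the underlying instance is satisfiable, attaching pendant and path gadgets so that the degree-$3$ constraint never lets an adversary recover the lost vertices in the NO case; this already rules out a polynomial-time constant-factor approximation for \umaxBT under $\P\neq\NP$.

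The heart of the theorem is part~(2): an \umaxBT-to-\umaxBT self-improving reduction that \emph{squares the gap while cubing the instance size}. Concretely, I would define a composition $G\mapsto G'$ that substitutes copies of $G$ into a small binary-tree skeleton, where each productive branch (a node together with its two children) is realized by three copies of $G$, so that $|V(G')|=\Theta(|V(G)|^{3})$, while any binary tree of $G'$ decomposes into binary trees of the copies in a way that multiplies only two of the three per-branch contributions, giving $\mbt(G')=\Theta(\mbt(G)^{2})$. Thus the gap evolves as $g\mapsto g^{2}$ on size $n\mapsto n^{3}$, and iterating $k$ times from the constant-gap base instance yields, after eliminating $k$, a gap of $\exp(\Theta((\log N)^{c}))$ on $N$-vertex instances with $c=\log_3 2$, since $3^{c}=2$ is precisely the three-versus-two accounting of the composition. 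The slack $\eps$ in the exponent $c-\eps$ comes from stopping the iteration before the size grows too large, so that the whole reduction (started from an ETH-hard gap instance of satisfiability) runs in time $\exp(\log^{O(1/\eps)} n)$, the regime whose solvability would place $\NP\subseteq\DTIME{\exp(\log^{O(1/\eps)} n)}$ and refute the Exponential Time Hypothesis.

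Two directions must be established for this composition lemma. Completeness requires stitching binary trees of the copies into a single binary tree of $G'$, which is where structural properties of binary trees enter: every binary tree contains many vertices of degree at most $2$ (its leaves, in particular), and these serve as interface ports at which child-copies can be attached without violating the degree-$3$ bound at any junction. Soundness requires showing that no binary tree $T'$ of $G'$ beats the intended value, by charging the portion of $T'$ inside each copy to that copy's maximum binary tree and arguing that the skeleton's branching forces the contributions to combine multiplicatively across the two deep directions while the third copy contributes only a lower-order term.

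I expect the soundness direction to be the main obstacle. An adversarial binary tree in $G'$ may enter and leave a single copy several times through distinct interface vertices, and each such crossing has to be charged against the degree budget and shown not to convert the intended $\mbt(G)^{2}$ behavior into anything larger, since any such gain would break the clean $g\mapsto g^{2}$ amplification and hence the exponent $c=\log_3 2$. Making the interfaces rigid enough to force an essentially unique decomposition, while keeping the blow-up at exactly a cube so that the three-versus-two bookkeeping yields precisely $\log_3 2$, is the delicate core; by contrast the base gap instance and the iteration calculus that produce the $\eps$ slack and the final $\DTIME$ bound are routine amplification arguments.
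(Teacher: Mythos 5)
Your high-level architecture matches the paper's: a base APX-hardness result plus a self-improving squaring whose ``size cubes while the optimum squares'' bookkeeping yields the exponent $c=\log_3 2$. However, the proposal leaves open exactly the step that carries the technical weight, and you say so yourself: making the soundness direction work when an adversarial binary tree of $G'$ meets a single copy in several pieces. This is not a detail one can defer. With the naive edge-copy construction, the restriction of a tree $T_2$ in the squared graph to an edge copy is a forest, and the recovery argument only returns a tree of size roughly $\sqrt{|V(T_2)|}$ divided by the number of forest components per copy; the completeness bound $OPT(G')\ge OPT(G)^2$ is then too weak for the iteration to close, and the gap does not amplify. The paper's resolution is a concrete modification of the construction, not a rigidity argument: besides one edge copy of $G$ per edge, it attaches \emph{two pendant copies} of $G$ to every vertex, and fills them according to the degree profile of the optimal tree $T_1$ (two pendant copies at each degree-$1$ vertex, one at each degree-$2$ vertex). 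The identity $3|I_0(T)|+2|I_1(T)|+|I_2(T)|=|V(T)|+2$ (Lemma \ref{lemma:external-nodes}) shows this adds exactly $|V(T_1)|(|V(T_1)|+2)$ vertices, giving $OPT(G^{\XBox 2})\ge 2\,OPT(G)^2+2\,OPT(G)$ (Lemma \ref{lem:ud-squaredopt}); the same identity bounds the number of forest components by $2|V(T_2)\cap V(G)|+1$ in the recovery step (Claim \ref{claim:few-trees}), so the factor of $2$ gained in completeness exactly cancels the factor of $2$ lost in soundness and the $\sqrt{\alpha}$ recovery (Lemma \ref{lem:ud-sqrtfactor}) goes through. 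Your proposal contains no mechanism playing this balancing role.

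There is also a logical slip in your part~(1): a single gadget reduction producing a fixed gap $g_0>1$ rules out only $(1/g_0)$-approximation, not \emph{every} constant factor. The paper gets the full statement by combining APX-hardness (no PTAS) with the self-improvement in the direction ``constant-factor approximation $\Rightarrow$ PTAS'' (Theorem \ref{thm:ud-ptas}); you would need to invoke your amplification here as well, not only in part~(2). Finally, your base reduction is left generic, whereas the paper's reduction from \textsc{TSP(1,2)} relies on two specific structural facts you would still need to supply: attaching a pendant to every vertex forces any near-spanning binary tree of the augmented graph to have few degree-$3$ vertices in its restriction (Lemma \ref{lemma:lot-of-pendants-few-degree-3}), and a binary tree with $d$ degree-$3$ vertices converts into a Hamiltonian-type path with weight increase at most $d$ (Lemma \ref{lem:TreeToPath}).
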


We summarize our hardness results for MBT on various graph families in Table \ref{hardness} and contrast them with the corresponding known hardness results for the longest path problem on those families.

\begin{table}[H]
\begin{center}
\def\arraystretch{1.2}
\footnotesize
\begin{tabular}{|*4{l|}}
\hline
\textbf{Family} & \textbf{Assumption} & \textbf{Max Binary Tree} & \textbf{Longest Path}  \\
\hline
DAGs & $\P \neq \NP$ & No poly-time $\Omega(1)$-apx (Thm \ref{theorem:dagmaxBT-no-const-approx}) & {\color{gray}Poly-time solvable} \\
\cline{2-4}
 & ETH & No poly-time $\exp(-O(\frac{\log n}{\log\log n}))$-apx & {\color{gray}Poly-time solvable} \\
 &     & No quasi-poly-time & \\
 &     & \quad $\exp(-O(\log^{1-\eps} n))$-apx (Thm \ref{theorem:dagmaxBT-no-const-approx}) & \\
\hline
Directed & $\P\neq \NP$ & Same as DAGs & {\color{gray} No poly-time $\dfrac{1}{n^{1-\eps}}$-apx \cite{BHK04}} \\
\cline{2-4}
 & ETH & Same as DAGs & {\color{gray} Same as $\P\neq \NP$}\\
\hline
Undirected & $\P \neq \NP$ & No poly-time $\Omega(1)$-apx (Thm \ref{theorem:umaxBT-no-const-approx}) & {\color{gray} No poly-time $\Omega(1)$-apx \cite{karger1997approximating}} \\
\cline{2-4}
 & ETH & No quasi-poly-time & {\color{gray} No quasi-poly-time} \\
 &     & \quad $\exp(-O(\log^{0.63-\eps} n))$-apx (Thm \ref{theorem:umaxBT-no-const-approx}) & \quad {\color{gray} $\exp(-O(\log^{1-\eps} n))$-apx \cite{karger1997approximating}} \\
\hline
\end{tabular}
\caption{Summary of inapproximability results. Here, $n$ refers to the number of vertices in the input graph and $\epsilon$ is any positive constant. We include the known results for longest path for comparison. Text in gray refer to known results while text in black refer to our contributions.}
\label{hardness}
\end{center}
\end{table}

\subsubsection{Algorithmic results}
\textbf{Fixed-parameter tractability.} We denote the decision variant of MBT as \kBT---here the goal is to verify if a given graph contains a binary tree with at least $k$ vertices.  
Since \kBT is $\NP$-hard when $k$ is part of the input, it is desirable to have an algorithm that runs in time $f(k)\poly(n)$ (i.e., a fixed parameter algorithm parameterized by the solution size). Our first algorithmic result achieves precisely this goal. 
Our algorithm is based on algebraic techniques. 
\begin{restatable}{theorem}{k-binary-tree}\label{theorem:k-binary-tree}
There exists a randomized algorithm that takes a directed graph $G=(V,E)$, a positive integer $k$, and a real value $\delta \in (0,1)$ as input, runs in time $2^k \poly(|V|)\log(1/\delta)$ and
\begin{enumerate}
\item outputs 'no' if $G$ does not contain a binary tree of size $k$;
\item outputs a binary tree of size $k$ with probability $1-\delta$ if $G$ contains one.
\end{enumerate}
\end{restatable}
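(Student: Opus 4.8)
The plan is to adapt the algebraic multilinear-monomial-detection framework of Koutis and Williams (used for \kpath) to directed binary trees. I will build a polynomial $P$ over a commutative ring whose degree-$k$ multilinear monomials correspond exactly to the vertex sets of size-$k$ directed binary trees of $G$, evaluate $P$ over the group algebra $\mathbb{F}_{2^\ell}[\mathbb{Z}_2^k]$ after a random substitution, and argue that $P$ is nonzero as a ring element with constant probability precisely when such a tree exists. Concretely, I associate a formal variable $x_v$ to each vertex $v$ and a fresh uniformly random scalar weight $a_{(u,v)}\in\mathbb{F}_{2^\ell}$ to each arc. Recalling that a directed binary tree is an in-arborescence toward its root in which every vertex has at most two children, I define, for each vertex $v$ and size $1\le s\le k$, a polynomial $B_v^{(s)}$ that formally generates all rooted binary branching walks of $s$ vertices into $v$ (vertices are allowed to repeat across branches), weighting each by the product of $x_u$ over its vertices and $a_e$ over its arcs. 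The recurrence is $B_v^{(1)}=x_v$ and
\[
B_v^{(s)} = x_v\Bigl(\sum_{(u,v)\in E} a_{(u,v)}B_u^{(s-1)} + \sum_{\substack{(u_1,v),(u_2,v)\in E\\ u_1<u_2}}\ \sum_{\substack{s_1+s_2=s-1\\ s_1,s_2\ge 1}} a_{(u_1,v)}a_{(u_2,v)}B_{u_1}^{(s_1)}B_{u_2}^{(s_2)}\Bigr),
\]
with $P=\sum_{v\in V}B_v^{(k)}$. The monomials in which all $k$ variables are distinct are exactly the genuine size-$k$ directed binary trees; the constraint $u_1<u_2$ in a fixed vertex order is the crucial symmetry-breaking device, since the two child subtrees have distinct roots it guarantees that each such tree is generated exactly once, so that over characteristic two its contribution survives rather than cancelling against a swapped copy.

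To detect multilinear monomials I substitute $x_v\mapsto \mathbf{e_0}+\mathbf{u_v}$, where $\mathbf{g}$ denotes the basis element of $\mathbb{F}_{2^\ell}[\mathbb{Z}_2^k]$ indexed by $g\in\mathbb{Z}_2^k$, $\mathbf{e_0}$ is the identity, and each $u_v\in\mathbb{Z}_2^k$ is drawn uniformly at random. Two standard facts drive the analysis: (i) $(\mathbf{e_0}+\mathbf{u_v})^2=0$ in characteristic two, so any branching walk that repeats a vertex, i.e.\ any non-multilinear monomial, evaluates to $0$; and (ii) for distinct vertices $\prod_{v\in S}(\mathbf{e_0}+\mathbf{u_v})$ with $|S|=k$ equals the uniform element $\mathbf{J}=\sum_{g\in\mathbb{Z}_2^k}\mathbf g$ when $\{u_v\}_{v\in S}$ are linearly independent and equals $0$ otherwise. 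Hence after substitution $P=C\cdot\mathbf{J}$, where $C=\sum_{T}\prod_{e\in E(T)}a_e$ is summed over the size-$k$ binary trees $T$ whose vertex sets carry independent vectors, and since $\mathbf J\neq0$ we have $P\neq0$ iff $C\neq0$. If $G$ has no size-$k$ binary tree then $C$ is an empty sum and $P=0$ with certainty, giving the error-free ``no'' side. If $G$ has one, fix such a tree $T^\ast$: with constant probability its $k$ vectors are independent (the probability that $k$ uniform vectors in $\mathbb{Z}_2^k$ are independent exceeds $0.28$), and conditioned on this, $C$ viewed as a polynomial in $\{a_e\}$ contains the square-free monomial $\prod_{e\in E(T^\ast)}a_e$ with coefficient $1$. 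As distinct trees have distinct arc sets and hence distinct arc-weight monomials, this term cannot be cancelled, so $C\not\equiv0$, and Schwartz--Zippel (the $a_e$-degree is $k-1$) gives $C\neq0$ with probability at least $1-(k-1)/2^\ell$; taking $\ell=\lceil\log_2 k\rceil+O(1)$ makes the overall success probability a positive constant.

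For the running time I evaluate the $O(nk)$ quantities $B_v^{(s)}$ bottom-up in $s$, storing each as its $2^k$-coordinate vector in $\mathbb{F}_{2^\ell}[\mathbb{Z}_2^k]\cong \mathbb{F}_{2^\ell}[y_1,\dots,y_k]/(y_1^2,\dots,y_k^2)$. In this presentation multiplication is exactly the disjoint-union (subset) convolution, computable in $2^k\poly(k)$ ring operations by the ranked zeta/M\"obius transform, while additions and scalar multiplications are coordinatewise. Each $B_v^{(s)}$ is a sum of $\poly(n)$ such products, so the whole table, and thus $P$, is computed in $2^k\poly(|V|)$ time (we may assume $k\le|V|$, else output ``no''), and testing $P\neq0$ is a coordinate scan. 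Repeating the randomized test $O(\log(1/\delta))$ times and reporting a tree if any run is nonzero drives the error below $\delta$, matching the claimed $2^k\poly(|V|)\log(1/\delta)$ bound for the decision version.

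To output an actual tree rather than merely decide, I wrap the decision routine in a self-reduction: iterate over the vertices, tentatively deleting each one and keeping the deletion whenever the amplified tester still reports a size-$k$ binary tree, so that the surviving graph eventually contains such a tree on exactly $k$ vertices, from which the tree is read off (e.g.\ by a further round of arc deletions, or by rerunning the DP while recording parent pointers). This costs $\poly(|V|)$ oracle calls, each amplified to error $\delta/\poly(|V|)$, which is absorbed into the stated bound. The step I expect to be most delicate is the correctness of the algebraic test, where one must simultaneously (a) break the left/right child symmetry so that each tree is counted an \emph{odd} number of times, since otherwise genuine solutions silently cancel in characteristic two, and (b) keep \emph{different} trees on the same vertex set from cancelling, which is exactly what the random arc weights and the Schwartz--Zippel argument secure. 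A secondary technical point is implementing group-algebra multiplication in characteristic two in $\widetilde{O}(2^k)$ time, where the Walsh--Hadamard transform is unavailable because it requires dividing by $2$ and must be replaced by the subset-convolution algorithm.
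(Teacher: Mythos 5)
Your proposal is correct in substance and builds the same core object as the paper: your recurrence for $B_v^{(s)}$ (root $x_v$ times either one child subtree of size $s-1$ or two child subtrees of sizes summing to $s-1$) is exactly the paper's polynomial $P_v^{(k)}$, and both proofs rest on the bijection between multilinear monomials of this polynomial and size-$k$ rooted binary trees. Where you genuinely diverge is in the detection step. The paper stops at the circuit: it verifies that $P_G^{(k)}$ is computed by a $\poly(n)$-size circuit with no scalar multiplications and then invokes Williams' multilinear-detection theorem as a black box (Theorem \ref{thm:test-multilinear}), delegating all group-algebra and cancellation issues to that result. You instead re-implement the detector: random arc weights $a_e$, the substitution $x_v\mapsto \mathbf{e_0}+\mathbf{u_v}$ over $\mathbb{F}_{2^\ell}[\mathbb{Z}_2^k]$, and a Schwartz--Zippel argument over the $a_e$'s. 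This is longer but self-contained, and it forces you to confront the parity issue explicitly, which you resolve with the $u_1<_{}u_2$ ordering so that each tree is generated exactly once; note that the paper's unordered recurrence generates every tree whose root has two children an even number of times and relies on Williams' theorem (which detects a multilinear term in the \emph{formal} sum-product expansion, irrespective of collected multiplicity) to absorb this. Two smaller differences: (i) you drop the auxiliary variable $y$ entirely by letting the in-degree-zero branch vanish, which works because every monomial of $B_v^{(s)}$ has $x$-degree exactly $s$, whereas the paper pads that branch with $y^{k-1}$ and multiplies the final polynomial by $y$ to kill low-degree multilinear terms; (ii) your search-to-decision reduction deletes vertices and then arcs, while the paper's Lemma \ref{lemma:probability-amplification} deletes arcs directly, with the same $\poly(n)$ overhead and $\delta/m$ amplification. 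The only loose ends in your writeup are trivial: the case $k=1$ must be special-cased (all single-vertex trees share the empty arc monomial, so your ``distinct arc sets'' argument needs $k\ge 2$), and reading the tree off the surviving $k$-vertex graph still requires the arc-deletion pass you parenthetically mention. Neither affects correctness of the overall argument.
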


\noindent \textbf{Bipartite permutation graphs.} Next, motivated by its connection to the max heapable subsequence problem, we study MBT in \emph{bipartite permutation graphs}. A bipartite permutation graph is a permutation graph (undirected) which is also bipartite. We show that bipartite permutation graphs admit an efficient algorithm for MBT. Our algorithm exploits structural properties of bipartite permutation graphs.  
We believe that these structural properties could be helpful in solving MBT in permutation graphs which, in turn, could provide key insights towards solving MBT in permutation DAGs. 

\begin{restatable}{theorem}{bipermALG}\label{theorem:bipartite-permutation-graphs}
There exists an algorithm to solve \umaxBT in $n$-vertex bipartite permutation graphs that runs in time $O(n^3)$.
\end{restatable}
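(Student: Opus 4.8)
The plan is to exploit the well-known geometric representation of bipartite permutation graphs to identify a structured optimal solution, and then optimize over that restricted class via dynamic programming. Recall that a bipartite permutation graph admits a \emph{strong ordering} of its two color classes: we can order the vertices of each side $L = \{\ell_1, \dots, \ell_p\}$ and $R = \{r_1, \dots, r_q\}$ so that the neighborhood of each vertex forms a contiguous interval in the opposite side, and moreover the intervals are ``nested-free'' / monotone in the sense that if $\ell_i$ is adjacent to $r_a$ and $\ell_{i'}$ (with $i < i'$) is adjacent to $r_{a'}$, then the left endpoints and right endpoints of their neighborhood intervals are both non-decreasing. Such an ordering is computable in linear time \cite{spinrad1987bipartite}. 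First I would fix this strong ordering and argue that it linearizes the adjacency structure enough to make a left-to-right sweep meaningful.

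The core of the argument is a structural lemma asserting that \emph{some} maximum binary tree is ``laminar'' or ``non-crossing'' with respect to the strong ordering: informally, that there is an optimal binary tree whose edges, drawn in the permutation diagram, do not cross in a way that would force an unbounded amount of state to be remembered. Concretely, I would like to show that the vertices used by an optimal tree form, on each side, a set that interacts with the tree in an interval-like fashion, so that the subproblems decompose cleanly by position. The proof of such a lemma is by an exchange argument: starting from an arbitrary optimal binary tree, whenever two edges cross with respect to the strong ordering, I would uncross them using the monotonicity of the neighborhood intervals (crossing edges $\ell_i r_{a'}$ and $\ell_{i'} r_a$ with $i<i'$, $a<a'$ can be replaced by $\ell_i r_a$ and $\ell_{i'} r_{a'}$, both of which exist as edges by the strong-ordering property), while preserving the tree's size, connectivity, acyclicity, and the degree-at-most-$3$ constraint. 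Verifying that uncrossing does not create a cycle nor violate the degree bounds is the delicate bookkeeping, but the monotone interval structure should make each swap a degree-preserving local move.

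Once such a structured optimal solution is guaranteed to exist, the algorithm is a dynamic program that sweeps over the vertices in the strong ordering and builds the tree incrementally. The state of the DP would track a bounded window of recently-processed vertices on the frontier between the ``committed'' portion of the tree and the portion yet to be attached, together with the residual degree budgets ($0$, $1$, or $2$ available slots) of the constantly many frontier vertices; because of the non-crossing structure, only a constant number of ``open'' attachment points need to be remembered at any sweep position, giving a polynomial number of states. At each step I would branch over how the current vertex attaches to the current frontier (as a new child of an open vertex, opening new slots, or left unused), updating the degree budgets and ensuring the growing subgraph stays connected and acyclic. Accounting the transitions carefully and charging each to a position pair $(i,a)$ should yield the claimed $O(n^3)$ running time, with the correctness following from the structural lemma (every optimal tree can be rebuilt by some sequence of these transitions) and a straightforward induction that the DP never over-counts (each state corresponds to a genuine partial binary forest).

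I expect the main obstacle to be the structural lemma rather than the dynamic program itself. The degree-$3$ branching makes the uncrossing exchange trickier than in the longest-path / \kpath setting, where each vertex has at most two tree-neighbors: here a vertex may anchor up to three tree-edges, so an uncrossing swap that fixes one crossing could inadvertently create a cycle through the third incident edge or overload a frontier vertex's degree budget. Pinning down exactly which notion of ``non-crossing'' is simultaneously (i) achievable by exchange without loss of tree size and (ii) strong enough to bound the DP frontier to constant width is the heart of the matter. A secondary subtlety is the unrooted nature of \umaxBT together with the fact that the optimal binary tree need not be spanning—so the DP must also decide, position by position, which vertices to include at all, and the structural lemma must be robust to that freedom. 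I would handle rootedness by the reduction to the rooted variant noted earlier and then root the DP at the first frontier vertex.
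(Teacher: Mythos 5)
Your overall strategy---fix a strong ordering, prove that some maximum binary tree is crossing-free, then run a dynamic program over that restricted class---is exactly the paper's, but the step you yourself flag as the main obstacle is left genuinely open, and it is the step where the paper does something you have not anticipated. Your plan for the structural lemma is a direct uncrossing exchange: replace crossing edges $\ell_i r_{a'}$ and $\ell_{i'} r_a$ by the parallel pair $\ell_i r_a$, $\ell_{i'} r_{a'}$. The difficulty is not just the degree bookkeeping you mention: deleting two edges of a tree splits it into three components, and the two replacement edges need not reconnect them acyclically (both new edges can land between the same pair of components, leaving the third stranded and creating a cycle); moreover a single swap can increase the number of other crossings, so you also need a terminating potential. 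None of this is resolved in your sketch. The paper avoids the exchange argument entirely: given a maximum binary tree $H$ on vertex set $V_H$, the induced subgraph $G[V_H]$ is again a connected bipartite permutation graph (Corollary~\ref{cor:biperm-indsubgraph}), and by a known theorem of \cite{smith2011minimum} (Theorem~\ref{thm:crossing-free}) it admits a crossing-free \emph{minimum-degree} spanning tree; since $G[V_H]$ has a spanning binary tree, that MDST has maximum degree at most $3$ and the same number of vertices as $H$, which proves Lemma~\ref{lem:mbt-crossing-free} in three lines. Without either this citation or a completed uncrossing argument, your proof of the structural lemma does not go through.

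The dynamic program you describe also differs from the paper's and rests on an unsubstantiated claim, namely that a left-to-right sweep needs to remember only ``a constant number of open attachment points.'' You give no argument for why the frontier of a crossing-free binary tree has bounded width, and it is not obvious (a vertex of degree $3$ can hang subtrees that remain ``open'' for a long stretch of the sweep). The paper instead proves a sharper structural fact (Lemma~\ref{lem:first-edge}): in any connected crossing-free subgraph, the $\sigma$-minimum vertices of the two sides are adjacent and one of them is a leaf. This lets the paper index subproblems by suffix pairs $[i,j]$ with the single edge $\{s_i,t_j\}$ forced and one endpoint forced to be a leaf, yielding $O(n^2)$ states with $O(n)$ transitions each and hence $O(n^3)$ time, with no frontier to control. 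If you want to salvage your profile-DP, you would need to prove the constant-width claim; otherwise adopting the peel-off-the-minimum-vertex recurrence is the cleaner route.
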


We summarize our algorithmic results for MBT in Table \ref{algorithmic} and contrast them with the corresponding best known bounds for the longest path problem.

\begin{table}[ht!]
\begin{center}
\small
\def\arraystretch{1.18}
\begin{tabular}{|*3{l|}}
\hline
\textbf{Problem} & \textbf{Max Binary Tree} & \textbf{Longest Path}  \\
\hline
FPT parameterized by solution size (Dir.)
& $2^k\poly(n)$-time (Thm \ref{theorem:k-binary-tree}) & {\color{gray} $1.66^k\poly(n)$-time \cite{bjorklund2017narrow}}\\
\hline
Bipartite permutation graphs (Undir.) & $O(n^3)$-time (Thm \ref{theorem:bipartite-permutation-graphs}) & {\color{gray} $O(n)$-time \cite{Efficient_Alg_Longest_Path}} \\
\hline
\end{tabular}
\caption{Summary of algorithmic results. Here, $n$ refers to the number of vertices in the input graph. We include the known results for longest path for comparison. Text in gray refer to known results while text in black refer to our contributions.}
\label{algorithmic}
\end{center}
\end{table}
We remark again that our inapproximability as well as algorithmic results are also applicable to the maximum degree-constrained tree problem for larger, but constant degree constraint. We focus on the degree constraint corresponding to binary trees for the sake of simplicity in exposition. 

\subsection{Proof techniques}

In this section, we outline the techniques underlying our results. 

\subsubsection{Inapproximability results} 
At a very high level, our inapproximability results for MBT rely on the proof strategy for hardness of longest path due to Karger, Motwani, and Ramkumar \cite{karger1997approximating}, which has two main steps: (1) a {\em self-improving} reduction whose amplification implies that a constant-factor approximation immediately leads to a PTAS,   and (2) a proof that there is no PTAS. 
However, we achieve both these steps in a completely different manner compared to the approach of Karger, Motwani, and Ramkumar. Both their steps are tailored for the longest path problem, but fail for the maximum degree-constrained tree problem. Our results for MBT require several novel ideas, as described next.



Karger, Motwani and Ramkumar's 
{self-improving} reduction for the longest path proceeds as follows: given an undirected
graph $G$, they obtain a squared graph $G^2$ by replacing each edge $\{u,v\}$ of $G$ with a copy of $G$ by adding edges from $u$ and $v$ to all vertices in that edge copy. Let $OPT(G)$ be the length of the longest path in $G$. They make the following two observations:  
    Obs (i) $OPT(G^2)\geq OPT(G)^2$ and 
    Obs (ii) a path in $G^2$ of length at least $\alpha OPT(G^2)$ can be used to recover a path in $G$ of length at least $\sqrt{\alpha} OPT(G)$.
The first observation is because we can extend any path $P$ in $G$ into a path of length $|E(P)|^2$ by traversing each edge copy also along $P$. The second observation is because for any path $P_2$ in $G^2$ either $P_2$ restricted to some edge copy of $G$ is a path of length at least $\sqrt{|E(P_2)|}$ or projecting $P_2$ to $G$ (i.e., replacing each sub-path of $P_2$ in each edge copy by a single edge) gives a path of length at least $\sqrt{|E(P_2)|}$. 
We note that a similar construction of the squared graph for directed graphs also has the above mentioned observations: replace each directed arc $(u,v)$ of $G$ with a copy of $G$ by adding arcs from $u$ to all vertices in that edge copy and from all vertices in that edge copy to $v$. 

In order to obtain inapproximability results for the maximum binary tree problem, 
we first introduce different constructions for the squared graph in the self-improving reduction compared to the ones by Karger et al. 
Moreover, our constructions of the squared graph differ substantially between undirected and directed graphs. Interestingly, our constructions also generalize naturally to the max degree-constrained tree problem. 
Secondly, although our reduction for showing the lack of PTAS in undirected graphs for MBT is also from TSP$(1, 2)$, it is completely different from that of Karger et al. and, once again, generalizes to the max degree-constrained tree problem. Thirdly, we show the lack of PTAS in DAGs for MBT by reducing from the max $3$-coloring problem. This reduction is altogether new---the reader might recall that the longest path problem in DAGs is solvable in polynomial-time, so there cannot be a counterpart of this step (i.e., lack of PTAS in DAGs) for longest path.
We next present further details underlying our proofs. \\

\noindent \textbf{Self-improving reduction for directed graphs.} 
We focus on the rooted variant of MBT in directed graphs. 
We first assume that the given graph $G$ contains a source (if not, adding such a source vertex with arcs to all the vertices changes the optimum only by one). In contrast to the squared graph described above (i.e., instead of adding edge copies), we replace every vertex in $G$ by a copy of $G$ (that we call as a vertex copy) and for every arc $(u,v)$ in $G$, we add an arc from the root node of the vertex copy corresponding to $u$ to the source node of the vertex copy corresponding to $v$. Finally, we declare the root node of the root vertex copy to be the root node of $G^2$. Let $\alpha \in (0,1]$ and $OPT(G)$ be the number of vertices in the maximum binary tree in $G$. With this construction of the squared graph, we show that (1) $OPT(G^2)\geq OPT(G)^2$ and (2) an $\alpha$-approximate rooted binary tree $T_2$ in $G^2$ can be used to recover a rooted binary tree $T_1$ in $G$ which is a $\sqrt{\alpha}$-approximation. We emphasize that if $G$ is a DAG, then the graph $G^2$ obtained by this construction is also a DAG.\\

\noindent \textbf{Inapproximability for DAGs.} 
In order to show the constant-factor inapproximability result for DAGs, it suffices to show that there is no PTAS (due to the self-improving reduction for directed graphs described above). We show the lack of a PTAS in DAGs by reducing from the max $3$-coloring problem in $3$-colorable graphs. It is known that this problem is APX-hard---in particular, there is no polynomial-time algorithm to find a coloring that colors at least $32/33$-fraction of the edges properly \cite{guruswami2013improved}. Our reduction encodes the coloring problem into a \dagmaxBT instance in a way that recovers a consistent coloring for the vertices while also being proper for a large fraction of the edges. 
Our ETH-based inapproximability result is also a consequence of this reduction in conjunction with the self-improving reduction. We again emphasize that there is no counterpart of APX-hardness in DAGs for max binary tree in the longest path literature.\\

\noindent \textbf{Self-improving reduction for undirected graphs.}
For \umaxBT, the self-improving reduction is more involved. Our above-mentioned reduction for \dmaxBT heavily exploits the directed nature of the graph (e.g., uses source vertices) and hence, is not applicable for undirected graphs. Moreover, the same choice of squared graph $G^2$ as Karger et al.  \cite{karger1997approximating} fails since 
Obs (ii) does not hold any more: the tree $T_2$ restricted to each edge copy may not be a tree (but it will be a forest). However, we observe that $T_2$ restricted to each edge copy may result in a forest with up to four binary trees in it. This observation and a more careful projection can be used to recover a tree of size at least  $\sqrt{|V(T_2)|}/4$ (let us call this weakened 
Obs (ii)). Yet, weakened 
Obs (ii)
is insufficient for a self-improving reduction. One approach to fix this would be to construct a \emph{different squared graph} $G^{\XBox 2}$ that strengthens 
Obs (i) 
to guarantee that $OPT(G^{\XBox 2})\geq 16OPT(G)^2$ while still allowing us to recover a binary tree of size $\sqrt{|V(T_2)|}/4$ in $G$ from a binary tree $T_2$ in $G^{\XBox 2}$. Such a strengthened 
Obs (i) 
coupled with weakened 
Obs (ii) 
would complete the self-improving reduction. Our reduction is a variant of this approach: we introduce a construction of the squared-graph that strengthens 
Obs (i) 
by a factor of $2$ while also weakening 
Obs (ii) 
only by a factor of $2$. We prove these two properties of the construction by relying on a handshake-like property of binary trees which is a relationship between the number of nodes of each degree and the total number of nodes in the binary tree.\\

\noindent \textbf{Inapproximability for undirected graphs.}
In order to show the constant-factor inapproximability result, it suffices to show that there is no PTAS (due to the self-improving reduction). We show the lack of a PTAS by reducing from TSP$(1,2)$. We mention that Karger, Motwani, and Ramkumar \cite{karger1997approximating} also show the lack of a PTAS for the longest path problem by reducing from TSP$(1,2)$. However, our reduction is much different from their reduction. Our reduction mainly relies on the fact that if we add a pendant node to each vertex of a graph $G$ and obtain a binary tree $T$ that has a large number of such pendants, then the binary tree restricted to $G$ cannot have too many nodes of degree three. 
Our ETH-based inapproximability result is also a consequence of this reduction in conjunction with the self-improving reduction.

\subsubsection{Algorithmic results.}

\noindent \textbf{A $2^k \poly(n)$ time algorithm for \kBT.}
The proof of this result is inspired by the algebrization technique introduced in \cite{koutis2008faster, williams2009finding, koutis2009limits} for designing randomized algorithms for 
\kpath and $k$-\textsc{Tree}---in \kpath, the goal is to recover a path of length $k$ in the given graph while $k$-\textsc{Tree} asks to recover a \emph{given tree} on $k$ vertices in the given graph. Their idea is to encode a path (or the given tree) as a multilinear monomial term 
in a carefully constructed polynomial, which is efficiently computable using an arithmetic circuit. 
Then, a result due to Williams \cite{williams2009finding} is used to verify if the constructed polynomial contains a multilinear term---Williams' result gives an efficient randomized algorithm, which on input  a small circuit that computes the polynomial,  outputs `yes' if a multilinear term exists in the sum of products representation of the input polynomial, and `no' otherwise. 
The subgraph that is sought may then be extracted using an additional pass over the graph.
Our main technical contribution is the construction of a polynomial $P_G$ whose multilinear terms correspond to binary trees of size $k$ in $G$ and which is efficiently computable by an arithmetic circuit. 
We remark that the polynomial constructions in previous results do not readily generalize for our problem. Our key contribution is the construction of a suitable polynomial, based on a carefully designed recursion.\\ 

\noindent \textbf{Efficient algorithm for bipartite permutation graphs.} 
Our main structural insight for bipartite permutation graphs is that there exists a maximum binary tree which is \emph{crossing-free} with respect to the so-called \emph{strong ordering} of the vertices. With this insight, MBT in bipartite permutation graphs reduces to finding a maximum crossing-free binary tree. We solve this latter problem by dynamic programming. 

\subsection{Organization} 
We present the $2^k \poly(n)$ time algorithm for \kBT in Section \ref{sec:arithmetization}. We present our hardness results for DAGs in Section \ref{sec:hardness-dags}. We formulate an IP for DAGs and discuss its integrality gap in Section \ref{sec:IP-gap-dags}. We show our hardness results for undirected graphs in Section \ref{sec:undirected-hardness}. 
We design an efficient algorithm for bipartite permutation graphs in Section \ref{sec:bipartite-permutation}. 
We conclude with a few open problems in Section \ref{sec:conclusions}.




\subsection{Preliminaries}\label{sec:prelim}

\paragraph{PTAS and APX-hardness.}
We say that a maximization problem has a {\em polynomial-time approximation scheme} (PTAS) if it admits an algorithm that for each fixed $\eps>0$, and for each instance, outputs an $(1-\eps)$-approximate solution, in time polynomial in the size of the input instance. A problem is said to be in the class APX if it has a polynomial-time constant-factor approximation algorithm. A problem is APX-hard if there is a PTAS reduction from every problem in APX to that problem.

\paragraph{MBT in directed graphs.}
Given a directed graph $G=(V,E)$ and a vertex $r \in V$, we say that a subgraph $T$ where $V(T) \subseteq V$ and $E(T) \subseteq E$, is an \emph{$r$-rooted tree} in $G$ if $T$ is acyclic and every vertex $v$ in $T$ has a \emph{unique} directed path (in $T$) to $r$. If the in-degree of each vertex in $T$ is at most 2, then $T$ is an {\em $r$-rooted binary tree}. 

The problem of interest in directed graphs is the following:

\begin{problem}{\rdmaxBT}
	Given: A directed graph $G=(V, E)$ and a root $r \in V$.
	
	Goal: An $r$-rooted binary tree $T$ in $G$ with maximum number of vertices. 
\end{problem}

The problem \dagmaxBT is a special case of \rdmaxBT in which the input directed graph is a DAG. We recall that the rooted and unrooted variants of the maximum binary tree problem in DAGs are equivalent. 

\paragraph{MBT in undirected graphs.}
Given an undirected graph $G=(V,E)$, we say that a subgraph $T$, where $V(T) \subseteq V$ and $E(T) \subseteq E$, is a \emph{binary tree} in $G$ if $T$ is connected, acyclic, and $\deg_T(v) \le 3$ for every vertex $v \in V(T)$.
We will focus on the unrooted variant, i.e., \umaxBT, since the inapproximability results for the rooted variant are implied by inapproximability results for the unrooted variant.

\begin{problem}{\umaxBT}
	Given: An undirected graph $G$.
	
	Goal: A binary tree in $G$ with maximum number of vertices. 
\end{problem}

\section{A $2^k \poly(n)$ time algorithm for \kBT}\label{sec:arithmetization}

In this section, we present a randomized algorithm that solves \kBT exactly and runs in time $2^k \poly(n)$ where $n$ is the number of vertices in the input graph. We recall that \kBT is the problem of deciding whether a given directed graph contains a binary tree of size $k$. Our algorithm is inspired by an algebraic approach for solving the \kpath problem---the algebraic approach relies on efficient detection of multilinear terms in a given  polynomial.\\

\noindent \textbf{\kpath, polynomials and multilinear terms.}
We begin with a recap of the algebraic approach to solve \kpath---here, the goal is to verify if a given (directed or undirected) graph $G$ contains a path of length at least $k$. 
There has been a rich line of research dedicated to designing algorithms for \kpath with running time $\beta^k \poly(n)$ where $\beta > 1$ is a constant and $n$ is the number of vertices in $G$ (cf. \cite{alon1995color, koutis2008faster, williams2009finding, bjorklund2017narrow}). In particular, the algorithms in \cite{koutis2008faster} and \cite{williams2009finding} are based on detecting multilinear terms in a polynomial. 

We now recall the problem of detecting multilinear terms in a polynomial. 
Here, we are given a polynomial with coefficients in a finite field $\GL{q}$ and the goal is to verify if it has a multilinear term. 
We emphasize that the input polynomial is given \emph{implicitly} by an arithmetic circuit consisting of additive and multiplicative gates. In other words, the algorithm is allowed to evaluate the polynomial at any point but does not have direct access to the sum-of-products expansion of the polynomial. 
We recall that a multilinear term in a polynomial $p \in \GL{q}[x_1, x_2, \cdots, x_m]$ is a monomial in the sum-of-products expansion of $p$ consisting of only degree-1 variables. For example, in the following polynomial
\begin{align*}
p(x_1, x_2, x_3) = x_1^2x_2 + x_3 + x_1x_2x_3, 
\end{align*}
the monomials $x_3$ and $x_1x_2x_3$ are multilinear terms, whereas $x_1^2x_2$ is not a multilinear term since $x_1$ has degree 2.
We will use the algorithm mentioned in the following theorem as a black box for detecting multilinear terms in a given polynomial. 

\begin{theorem}[Theorem 3.1 in \cite{williams2009finding}] \label{thm:test-multilinear}
Let $P(x_1, \cdots ,x_n)$ be a polynomial of degree at most $k$, represented by an arithmetic circuit of size $s(n)$ with additive gates (of unbounded fan-in), multiplicative gates (of fan-in two), and no scalar multiplications. There is a randomized algorithm that on input $P$ runs in $2^ks(n) \cdot \poly(n)\log\tp{1/
\delta}$ time, outputs `yes' with probability $1-\delta$ if there is a multilinear term in the sum-product expansion of $P$, and outputs `no' if there is no multilinear term.
\end{theorem}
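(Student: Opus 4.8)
The plan is to evaluate the given circuit over a carefully chosen commutative algebra of characteristic two in which non-multilinear monomials are automatically annihilated, so that only multilinear terms can produce a nonzero output. Concretely, I would work in the group algebra $R = \GL{2^\ell}\tp{\Z_2^k}$ over an extension field $\GL{2^\ell}$ with $2^\ell \ge 2k$, whose basis elements $[v]$ (for $v \in \Z_2^k$) multiply by $[u]\cdot[v] = [u+v]$. For each input $x_i$ I would draw an independent uniform $v_i \in \Z_2^k$ and an independent uniform scalar $r_i \in \GL{2^\ell}$, and substitute $x_i \mapsto r_i\tp{[0] + [v_i]}$. The circuit is then evaluated gate by gate over $R$ (the circuit's gates --- unbounded-fan-in additions and fan-in-two multiplications, with no scalar multiplications --- are simulated directly by the corresponding operations of $R$).

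The engine of the reduction is the characteristic-two sieving identity $\tp{[0]+[v]}^2 = [0] + [2v] = [0] + [0] = 0$, which holds because $2v = 0$ in $\Z_2^k$ and the field has characteristic two. Hence any monomial of $P$ in which some variable appears to degree at least two evaluates to $0$, and the value computed by the circuit equals the image of the \emph{multilinear part} of $P$:
\[
\sum_{S \,:\, \prod_{i\in S} x_i \text{ is a multilinear monomial of } P} c_S \tp{\prod_{i\in S} r_i} \prod_{i\in S}\tp{[0] + [v_i]},
\]
where $c_S \in \GL{2^\ell}$ is the coefficient of $\prod_{i\in S} x_i$. In particular, if $P$ has no multilinear term this value is identically $0$, which already gives one-sided error: the algorithm may always safely answer `no' in that case.

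The crux is to show that when a multilinear term \emph{does} exist, this element is nonzero with constant probability despite possible cancellation among the (potentially many) multilinear monomials. I would separate the two randomizations. First, viewing the value as a polynomial in the formal scalars $r_i$, distinct multilinear sets $S$ index distinct multilinear $r$-monomials $\prod_{i\in S} r_i$, so no cancellation occurs \emph{across} sets; the coefficient of $\prod_{i\in S} r_i$ is exactly $c_S \prod_{i\in S}\tp{[0]+[v_i]} \in R$. Second, I would prove the key algebraic lemma that $\prod_{i\in S}\tp{[0]+[v_i]} \ne 0$ in $R$ if and only if the vectors $\set{v_i : i \in S}$ are linearly independent over $\GL{2}$: expanding the product as $\sum_{T\subseteq S}[\sum_{i\in T} v_i]$, linear independence makes all $2^{|S|}$ group elements distinct (so the element is a nonzero sum of distinct basis vectors), while any dependency forces a repeated summand and, via the squaring identity applied to a minimal dependent subset, collapses the product to $0$. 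Fixing one multilinear set $S^\ast$ with $c_{S^\ast}\ne 0$ and $|S^\ast| = d \le k$, a uniform choice of $d$ vectors in $\Z_2^k$ is linearly independent with probability $\prod_{i=1}^{k}\tp{1 - 2^{-i}} \ge c_0 > 0.28$; conditioned on this, the $r$-polynomial is a nonzero polynomial of degree at most $k$, so Schwartz--Zippel over $\GL{2^\ell}$ (with $2^\ell \ge 2k$) makes the evaluation nonzero with probability at least $1/2$. Thus a single trial succeeds with constant probability, and $O\tp{\log(1/\delta)}$ independent repetitions --- answering `yes' if any trial is nonzero --- drive the failure probability below $\delta$.

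Finally, for the running time I would represent elements of $R$ in the isomorphic model $\GL{2^\ell}[z_1,\dots,z_k]/\tp{z_1^2,\dots,z_k^2}$ via $[e_j]\mapsto 1+z_j$, under which each element is a multilinear polynomial with $2^k$ coefficients. An additive gate costs $O\tp{2^k \poly(n)}$, while a multiplicative gate becomes a subset convolution of two such polynomials, computable in $2^k\poly(k)$ ring operations by the ranked zeta/M\"obius transform (which uses only additions and subtractions and hence is valid in characteristic two). Summing over the $s(n)$ gates and the $\log\tp{1/\delta}$ repetitions yields the claimed $2^k s(n)\poly(n)\log\tp{1/\delta}$ bound. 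The main obstacle I anticipate is the non-vanishing argument of the third paragraph: the sieving identity is immediate, but controlling cancellation among multilinear terms --- which the linear-independence lemma together with the two-stage randomization accomplishes --- is the delicate and essential step.
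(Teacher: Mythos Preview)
The paper does not prove this statement at all: it is quoted verbatim as Theorem~3.1 of \cite{williams2009finding} and used as a black box. So there is no ``paper's own proof'' to compare against; what you have written is a sketch of Williams' original argument, and it is largely faithful to that argument --- the characteristic-two group algebra $\GL{2^\ell}[\Z_2^k]$, the sieving identity $\tp{[0]+[v]}^2=0$, the linear-independence lemma, and the $2^k$-size multilinear representation for the running time are exactly the ingredients Williams uses.

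There is, however, a genuine gap in your completeness analysis. Your randomizers $r_i$ are attached to the \emph{variables} $x_i$, so every occurrence of the same multilinear monomial $\prod_{i\in S}x_i$ in the circuit's sum-of-products expansion receives the \emph{same} tag $\prod_{i\in S}r_i$. Consequently the coefficient you call $c_S$ is the integer multiplicity of that monomial reduced modulo~$2$. If every multilinear monomial of $P$ occurs with even multiplicity --- e.g.\ the circuit $\tp{x_1+x_1}\cdot x_2$, which has the multilinear term $x_1x_2$ with coefficient~$2$ --- then your evaluation is identically~$0$ and your algorithm answers `no' even though a multilinear term exists. Your two-stage randomization separates \emph{different} sets $S$ but cannot separate different parse trees that yield the \emph{same} $S$.

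Williams' actual proof closes this gap by introducing a fresh formal variable on every edge (equivalently, every gate input) of the circuit and substituting a uniformly random element of $\GL{2^\ell}$ for each. After this ``edge labeling'', distinct parse trees of the same monomial carry distinct products of edge variables and hence cannot cancel in characteristic two; one then applies Schwartz--Zippel to the edge variables rather than to your per-variable scalars $r_i$. With that correction your sketch becomes essentially Williams' proof, and the running time bound is unaffected since the number of edge variables is $O(s(n))$.
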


The idea behind solving \kpath with the help of this theorem is to construct a polynomial $p_G$ based on the input graph $G$ so that $p_G$ contains a multilinear term if and only if $G$ contains a simple path of length $k$. At the same time, $p_G$ should be computable by an arithmetic circuit of size $\poly(n)$.
Koutis and Williams achieved these properties using the following polynomial: 
\begin{align*}
p_G(x_1, \cdots, x_n) := \sum_{\tp{v_{i_1}, v_{i_2}, \ldots, v_{i_k}}: \textup{ a walk in }G}x_{i_1}x_{i_2}\ldots x_{i_k}.
\end{align*}
We recall that a walk in $G$ is a sequence of vertices in which neighbouring vertices are adjacent in $G$. From the definition, it is easy to observe that there is a  one-to-one correspondence between simple $k$-paths in $G$ and multilinear terms in $p_G$. Moreover, it can be shown that there is an arithmetic circuit of size $O\tp{k^2(m+n)}$ that computes $p_G$, where $m$ is the number of edges and $n$ is the number of vertices in $G$. See Chapter 10.4 of
\cite{FPT-book} 
for alternative constructions of this polynomial.\\

\noindent \textbf{The polynomial construction for \kBT.}
Following the above-mentioned approach, we construct a polynomial $P_G$ with the property that $P_G$ contains a multilinear term if and only if $G$ contains a binary tree of size $k$. Unfortunately, there is no immediate generalization of walks of length $k$ 
that characterize binary trees on $k$ vertices. 
So, instead of defining the polynomial conceptually, we will define the polynomial recursively by building the arithmetic circuit that computes $P_G$, and will prove the correspondence between multilinear terms in $P_G$ and binary trees of size $k$ in $G$. In the definition of our polynomial, we also need to introduce an auxiliary variable to eliminate low-degree multilinear terms in $P_G$ (which is not an issue in the construction of the polynomial for \kpath).

Let $G=(V,E)$ be the given directed graph. For $v \in V$, let $\Delta^{in}_v \coloneqq \set{u \in V \colon (u,v) \in E}$. We begin by defining a polynomial $P_v\itn{k}$ for every $v \in V$ and every positive integer $k$, in $(n+1)$ variables $\set{x_v}_{v \in V} \cup \set{y}$:
{
\footnotesize
\begin{flalign*}
P_v\itn{k} \coloneqq \begin{cases}
x_v & \textup{if $k = 1$} \\
x_v \cdot y^{k-1} & \textup{if $k > 1$ and $\Delta^{in}_v = \varnothing$} \\
{\displaystyle x_v \tp{\sum_{u \in \Delta^{in}_v}P_u\itn{k-1} + \sum_{\ell=1}^{k-2}\tp{\sum_{u_1 \in \Delta^{in}_v}P_{u_1}\itn{\ell}}\tp{\sum_{u_2 \in \Delta^{in}_v}P_{u_2}\itn{k-1-\ell}} } } & \textup{if $k > 1$ and $\Delta^{in}_v \neq \varnothing$}
\end{cases}
\end{flalign*}
}

Next, we define $P_G\itn{k} \coloneqq \sum_{v \in V}P_v\itn{k}$. We recall that a polynomial is homogenous if every monomial has the same degree. By induction on $k$, the polynomial $P_v\itn{k}$ is a degree-$k$ homogeneous polynomial and so is $P_G\itn{k}$. Moreover, by the recursive definition, we see that $P_v\itn{k}$ can be represented as an arithmetic circuit of size $O(k^2n)$ since there are $kn$ polynomials in total, and computing each requires $O(1)$ addition gates (with unbounded fan-in) and $O(k)$ multiplication gates (with fan-in two). 
We show the following connection between multilinear terms in $P_G\itn{k}$ and binary trees in $G$. 

\begin{restatable}{lemma}{kbttomultilinear} \label{lem:kBT-to-multilinear}
The graph $G$ has a binary tree of size $k$ rooted at $r$ if and only if there is a multilinear term of the form $\prod_{v\in S}x_v$ in $P_r\itn{k}$ where $|S|=k$.
\end{restatable}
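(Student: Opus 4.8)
The plan is to prove both directions of the biconditional by induction on $k$, strengthening the inductive hypothesis to track not just the existence of a binary tree but the precise correspondence between its vertex set and the multilinear monomial. The key object to understand is the recursive definition of $P_v\itn{k}$, which encodes the ``shape'' of a binary tree rooted at $v$: the three cases correspond respectively to a single-vertex tree ($k=1$), a source vertex that cannot have children but is forced to carry a $y^{k-1}$ factor (so it contributes no genuine $k$-vertex tree for $k>1$), and the recursive case where $v$ has either one child (the first sum, giving a subtree of size $k-1$) or two children (the double sum over $\ell$, splitting the remaining $k-1$ vertices into a left subtree of size $\ell$ and a right subtree of size $k-1-\ell$).

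First I would prove the forward direction. Given an $r$-rooted binary tree $T$ of size $k$ in $G$, I root the recursion at $r$ and walk down the tree. If $r$ has one child $u$ with subtree $T_u$ of size $k-1$, then by induction $P_u\itn{k-1}$ contains a multilinear term $\prod_{v \in V(T_u)} x_v$, and since $u \in \Delta^{in}_r$, multiplying by $x_r$ (from the factor in front and the first sum) produces the monomial $\prod_{v \in V(T)} x_v$. If $r$ has two children $u_1, u_2$ with subtrees of sizes $\ell$ and $k-1-\ell$, I pull the corresponding multilinear terms from $P_{u_1}\itn{\ell}$ and $P_{u_2}\itn{k-1-\ell}$ via the double-sum term. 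The multilinearity of the resulting product is exactly where the tree structure is used: the vertex sets of the two subtrees (and $\{r\}$) are disjoint because $T$ is a tree, so no variable $x_v$ appears twice.

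For the reverse direction, I would argue that any multilinear term $\prod_{v \in S} x_v$ with $|S| = k$ arising in $P_r\itn{k}$ must come from one of the three cases, and I reverse-engineer a binary tree from the recursion that generated it. Here I must rule out the degenerate source case: if $r$ is a source and $k > 1$, then $P_r\itn{k} = x_r y^{k-1}$ contains the variable $y$ and hence is \emph{not} multilinear, which is precisely the role of the auxiliary variable $y$ in eliminating spurious low-degree contributions. For the genuine recursive terms, a multilinear monomial of degree $k$ must have picked one summand from either the single-child sum or the two-child double sum, and the chosen subpolynomials $P_{u}\itn{k-1}$ (or $P_{u_1}\itn{\ell}, P_{u_2}\itn{k-1-\ell}$) must themselves contribute multilinear terms whose variable sets partition $S \setminus \{r\}$; by induction these correspond to binary subtrees, and I attach them to $r$ via the arcs $(u,r)$, or $(u_1,r),(u_2,r)$, which exist by the constraint $u, u_1, u_2 \in \Delta^{in}_r$.

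The main obstacle I anticipate is the bookkeeping around \emph{why multilinearity forces the subtree vertex sets to be disjoint and to correctly reassemble a tree} — in particular, ensuring that a multilinear term in the full product of two subtree polynomials can only occur when the two subtrees share no vertices (otherwise some $x_v$ would appear squared, destroying multilinearity), and conversely that the $y$ variable correctly kills any attempt to pad a short tree up to size $k$. The inductive hypothesis must therefore be stated as an exact bijection-style correspondence (multilinear degree-$k$ terms in $P_v\itn{k}$ correspond to $v$-rooted binary trees on $k$ vertices), rather than a mere existence statement, so that the disjointness of vertex sets propagates cleanly through the two-child recursion. Care is also needed to confirm that the degree of each monomial is exactly $k$ and that the homogeneity established earlier guarantees we never accidentally combine terms of mismatched sizes in the convolution $\sum_{\ell=1}^{k-2}$.
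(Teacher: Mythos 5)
Your proposal is correct and follows essentially the same route as the paper: induction on $k$ in both directions, splitting on whether the root has one or two children, using disjointness of the subtree vertex sets for multilinearity in the forward direction, and using homogeneity of $P_{c_1}\itn{\ell}$ and $P_{c_2}\itn{k-1-\ell}$ to partition $S\setminus\{r\}$ and reassemble the tree in the reverse direction. The only cosmetic difference is that you frame the inductive hypothesis as a bijection-style correspondence, whereas the paper simply proves the two existence implications separately; one small imprecision is that for $k=2$ the source term $x_r y^{k-1}=x_r y$ is in fact multilinear, but it is still excluded because it is not of the form $\prod_{v\in S}x_v$, which is the argument the paper uses.
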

\begin{proof}
We first show the forward direction, i.e., if $G$ has a binary tree $T$ of size $k$ rooted at $r$, then there is a multilinear term of the form $\prod_{v\in T}x_v$ in $P_r\itn{k}$. We prove this by induction on $k$. 
The base case $k=1$ follows since $P_r\itn{1} = x_r$.
Suppose that the forward direction holds when $|T|\le k-1$. For $|T|=k$, we consider two cases.
\begin{enumerate}
\item The root $r$ has only one child $c$. The subtree $T_c$ of $T$ rooted at $c$ has size $k-1$. By induction hypothesis there is a multilinear term $\prod_{v \in T_c}x_v$ in $P_c\itn{k-1}$. Since $c \in \Delta^{in}_r$, for some polynomial $Q$ we can write 
\begin{align*}
P_r\itn{k} = x_r \tp{P_c\itn{k-1} + Q}.
\end{align*}
Therefore $x_r \cdot \prod_{v \in T_c}x_v$ is a term in $P_r\itn{k}$. This term is multilinear and equals to $\prod_{v \in T}x_v$ since $r \notin T_c$.

\item The root $r$ has two children $c_1, c_2$. Suppose that the subtree $T_{c_1}$ rooted at $c_1$ has size $\ell$, thus the subtree $T_{c_2}$ rooted at $c_2$ has size $k-1-\ell$. The induction hypothesis implies that $P_{c_1}\itn{\ell}$ has a multilinear term $\prod_{v \in T_{c_1}}x_v$, and $P_{c_2}\itn{k-1-\ell}$ has a multilinear term $\prod_{v \in T_{c_2}}x_v$. Since $c_1, c_2 \in \Delta^{in}_r$, for some polynomial $Q$ we can write
\begin{align*}
P_r\itn{k} = x_r \tp{P_{c_1}\itn{\ell}P_{c_2}\itn{k-1-\ell} + Q}.
\end{align*}
Therefore $x_r\tp{\prod_{v \in T_{c_1}}x_v}\tp{\prod_{v \in T_{c_2}}x_v}$ is a term in $P_r\itn{k}$. This term is multilinear and equals to $\prod_{v \in T}x_v$ because $T$ is the disjoint union of $r$, $T_{c_1}$ and $T_{c_2}$.
\end{enumerate}  
In both cases, the polynomial $P_r\itn{k}$ has a multilinear term $\prod_{v \in T}x_v$. This completes the inductive step.

Next, we show that if $P_r\itn{k}$ has a multilinear term of the form $\prod_{v \in S}x_v$ where $|S|=k$, then there is a binary tree $T$ rooted at $r$ in $G$ with vertex set $S$. We prove this also by induction on $k$. The base case $k=1$ is trivial since $P_r\itn{1} = x_r$ and there is a binary tree of size 1 rooted at $r$. Suppose that the statement holds for $k-1$ or less ($k > 1$).

Let $\prod_{v \in S}x_v$ be a multilinear term in $P_r\itn{k}$. We note that $r \in S$ since every term in $P_r\itn{k}$ contains $x_r$. Moreover, we may assume that $\Delta^{in}_r \neq \varnothing$ since otherwise $P_r\itn{k} = x_r \cdot y^{k-1}$ which does not contain any term of the form $\prod_{v \in S}x_v$. According to the definition of $P_r\itn{k}$, we could have two cases.
\begin{enumerate}
\item The term $\prod_{v \in S \setminus \set{r}}x_v$ is a multilinear term in $P_c\itn{k-1}$ for some $c \in \Delta^{in}_r$. The induction hypothesis implies that there is a binary tree $T_c$ rooted at $c$ with vertex set $S \setminus \set{r}$. Let $T$ be the binary tree obtained by adding the edge $(c, r)$ to $T_c$. Then $T$ is a binary tree rooted at $r$ with vertex set $S$.
\item The term $\prod_{v \in S \setminus \set{r}}x_v$ is a multilinear term in $P_{c_1}\itn{\ell}P_{c_2}\itn{k-1-\ell}$ for some $c_1, c_2 \in \Delta^{in}_r$ and some integer $1 \le \ell \le k-2$. In this case, since $P_{c_1}\itn{\ell}$ and $P_{c_2}\itn{k-1-\ell}$ are homogeneous polynomials of degree $\ell$ and $k-1-\ell$, we can partition $S \setminus \set{r}$ into two sets $S_1$ and $S_2$ with $|S_1|=\ell$ and $|S_2|=k-1-\ell$ such that $\prod_{v \in S_1}x_v$ is a multilinear term in $P_{c_1}\itn{\ell}$, and $\prod_{v \in S_2}x_v$ is a multilinear term in $P_{c_2}\itn{\ell}$. Applying the induction hypothesis, we obtain a binary tree $T_{c_1}$ (rooted at $c_1$) with vertex set $S_1$ and a binary tree $T_{c_2}$ (rooted at $c_2$) with vertex set $S_2$. Let $T$ be the binary tree obtained by adding edges $(c_1, r)$ and $(c_2, r)$ to $T_{c_1} \cup T_{c_2}$. Then $T$ is a binary tree rooted at $r$ with vertex set $S_1 \cup S_2 \cup \set{r} = S$.
\end{enumerate}
In both cases, we can find a binary tree $T$ rooted at $r$ with vertex set $S$. This completes the inductive step.
\end{proof}

With this choice of $P_G\itn{k}$, we call the algorithm appearing in Theorem \ref{thm:test-multilinear} on input polynomial $\tilde{P}_G\itn{k} \coloneqq y \cdot P_G\itn{k}$, and output the result. We note that every multilinear term of the form $\prod_{v \in S}x_v$ in $P_G\itn{k}$ becomes a multilinear term of the form $y \cdot \prod_{v \in S}x_v$ in $\tilde{P}_G\itn{k}$, and every multilinear term of the form $y \cdot \prod_{v \in S}x_v$ in $P_G\itn{k}$ becomes $y^2 \cdot \prod_{v \in S}x_v$ in $\tilde{P}_G\itn{k}$, which is no longer a multilinear term. In light of Lemma \ref{lem:kBT-to-multilinear}, 
the graph $G$ contains a binary tree of size $k$ if and only if the degree-$(k+1)$ homogeneous polynomial $\tilde{P}_G\itn{k}$ has a multilinear term. The running time is $2^{k+1} \cdot O(k^2n) \cdot \poly(n+1)\log\tp{1/\delta} = 2^k \cdot \poly(n)\log\tp{1/\delta}$.

We remark that this algorithm does not immediately tell us the tree $T$ (namely the edges in $T$). However, we can find the edges in $T$ with high probability via a reduction from the search variant to the decision variant. This is formalized in the next lemma.

\begin{restatable}{lemma}{ProbAmp}\label{lemma:probability-amplification}
Suppose that there is an algorithm $\+A$ which takes as input a directed graph $G=(V,E)$, an integer $k$ and $\delta' \in (0,1)$ runs in time $2^k \poly\tp{|V|} \log\tp{1/\delta'}$ and
\begin{itemize}
\item outputs 'yes' with probability at least $1-\delta'$ if $G$ contains a binary tree of size $k$, 
\item outputs 'no' with probability 1 if $G$ does not contain a binary tree of size $k$. 
\end{itemize}
Then there also exists an algorithm $\+A'$ which for every $\delta \in (0,1)$ outputs a binary tree $T$ of size $k$ with probability at least $1-\delta$ when the answer is 'yes', and runs in time $2^k \poly\tp{|V|}\log\tp{1/\delta}$.
\end{restatable}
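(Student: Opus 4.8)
The plan is to use the standard self-reducibility (edge-elimination) technique to convert the decision oracle $\+A$ into a search procedure. Let $m = |E|$. Starting from $G'=G$, I would examine the arcs one at a time: for each arc $e$ I call $\+A(G'-e, k, \delta')$ with a confidence parameter $\delta'$ to be fixed later. If the call returns `yes', I permanently delete $e$ (set $G' \leftarrow G'-e$); if it returns `no', I keep $e$. After all $m$ arcs have been processed, I output the arc set of the remaining graph $G'$.

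The observation that makes this correct is that $\+A$ has \emph{one-sided} error: it never produces a false positive. Hence whenever a call returns `yes' it is certain that $G'-e$ still contains a binary tree of size $k$, so the deletion is always safe, and the invariant ``the current graph $G'$ contains a binary tree of size $k$'' is preserved deterministically throughout the run (it holds initially because the answer is `yes'). The only way the procedure can fail is therefore through a \emph{false negative}: a call on some $G'-e$ that truly still contains a size-$k$ tree but returns `no', causing us to retain an arc we could have deleted. I would then analyze the output conditioned on the clean event that all $m$ calls answer correctly. Under that event the final $G'$ is edge-minimal: it contains a binary tree of size $k$, yet deleting any of its arcs destroys every size-$k$ binary tree. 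If $T^{\ast}$ is any size-$k$ binary tree inside this minimal $G'$, then every arc of $G'$ must belong to $T^{\ast}$ (otherwise that arc would be superfluous, contradicting minimality), so $E(G') = E(T^{\ast})$ and the non-isolated part of $G'$ is exactly the desired binary tree.

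It remains to set the parameter and check the running time. Each of the $m$ calls is a false negative with probability at most $\delta'$, so by a union bound the probability that any call errs is at most $m\delta'$; choosing $\delta' = \delta/m$ bounds the total failure probability by $\delta$, and conditioned on no error the output is a valid binary tree of size $k$. Each invocation of $\+A$ runs in time $2^k \poly(|V|)\log(1/\delta') = 2^k \poly(|V|)\log(m/\delta)$, and there are $m \le |V|^2$ of them, so the overall running time is $m \cdot 2^k \poly(|V|)\log(m/\delta) = 2^k \poly(|V|)\log(1/\delta)$, absorbing the factors of $m$ and $\log m$ into the polynomial in $|V|$.

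I expect the point requiring the most care to be the interplay between the one-sided error of $\+A$ and the correctness of the greedy elimination: specifically, arguing that false positives cannot occur (so the invariant is maintained with certainty and every deletion is legitimate) and that it consequently suffices to union-bound only over the $m$ false-negative events. Once that clean event is isolated, the structural minimality argument showing $E(G')=E(T^{\ast})$ is routine.
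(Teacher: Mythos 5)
Your proposal is correct and follows essentially the same edge-elimination argument as the paper: iterate over all $m$ arcs, call $\+A$ on $G-e$ with confidence parameter $\delta'=\delta/m$, delete on `yes', union-bound over the $m$ possible false negatives, and absorb the factors of $m$ into $\poly(|V|)$. The only difference is that you spell out the edge-minimality argument showing the surviving arc set equals a single size-$k$ tree, which the paper asserts without elaboration.
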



\begin{proof}
The algorithm $\+A'$ iterates through all arcs $e \in E$ and calls $\+A$ on $(G-e, k)$ with $\delta' = \delta/m$ where $G-e=(V, E \setminus \set{e})$ and $m = |E|$. If for some $e \in E$ the call to $\+A$ outputs 'yes', we remove $e$ from $G$ (i.e., set $G \gets G-e$) and continue the process. We will show that when the algorithm terminates, the arcs in $G$ constitute a binary tree of size $k$ (if there exists one) with probability at least $1-\delta$.

Suppose the order in which $\+A'$ processes the arcs is $e_1, e_2, \cdots, e_m$, and the graph at iteration $t$ is denoted by $G\itn{t}$. Let $B_t$ denote the event ``$G\itn{t-1}-e_t$ contains a binary tree of size $k$, but the call to $\+A\tp{G\itn{t-1}-e_t, k}$ returns no''. Due to the assumption we made for $\+A$, event $B_t$ happens with probability at most $\delta'$. Since the algorithm $\+A$ has perfect soundness, whenever $\+A'$ removes an edge we are certain that the remaining graph still contains a binary tree of size $k$ (otherwise the call to $\+A$ would never return `yes'). That means if $G\itn{0}=G$ contains a binary tree of size $k$ then $G\itn{t}$ contains a binary tree of size $k$ for all $0 \le t \le m$. Therefore if none of the events $B_t$ happens, the final graph $G\itn{m}$ is a binary tree of size $k$. The probability of failure is upper bounded by 
\begin{align*}
\Pr\left[\bigcup_{t=1}^{m}B_t\right] \le m \cdot \delta' = m \cdot \frac{\delta}{m} = \delta.
\end{align*} 
Since algorithm $\+A'$ makes $m$ calls to algorithm $\+A$, the running time of $\+A'$ is $m \cdot 2^k \poly\tp{|V|} \log\tp{1/\delta'} = 2^k \poly\tp{|V|}\log\tp{1/\delta}$.  
\end{proof}

Theorem~\ref{thm:test-multilinear} in conjunction with Lemmas~\ref{lem:kBT-to-multilinear} and \ref{lemma:probability-amplification} complete the proof of Theorem~\ref{theorem:k-binary-tree}.

\section{Hardness results for DAGs}\label{sec:hardness-dags}

In this section, we show the inapproximability of finding a maximum binary tree in DAGs. The \emph{size} of a binary tree denotes the number of vertices in the tree.\\ 

\subsection{Self-improvability for directed graphs} \label{sec:dag-self-improvability}

We show that an algorithm for \rdmaxBT achieving a constant factor approximation can be used to design a PTAS in Theorem \ref{thm:d-ptas}. We emphasize that this result holds for arbitrary directed graphs and not just DAGs. The idea is to gradually boost up the approximation ratio by running the constant-factor approximation algorithm on squared graphs. 
Our notion of \emph{squared graph} will be the following.

\begin{definition} \label{def:d-squared}
Given a directed graph $G=(V,E)$ with root $r$, the \emph{squared graph} $G^2$ is the directed graph obtained by performing the following operations on $G$: 
\begin{enumerate}
\item Construct $G'=(V',E')$ by introducing a source vertex $s$, i.e., $V':=V \cup \{s\}$. We add arcs from $s$ to every vertex in $G$, i.e., $E' := E \cup \{\tp{s,v} \colon v \in V\}$.
\item For each $u \in V$ (we note that $V$ does not include the source vertex), we create a copy of $G'$ that we denote as a {\em vertex copy} $G_u'$. We will denote the root vertex of $G_u'$ by $r_u$, and the source vertex of $G_u'$ by $s_u$. 
\item For each $\tp{u,v} \in E$, we create an arc $\tp{r_u, s_v}$.
\item We declare the root of $G^2$ to be $r_r$, i.e. the root vertex of the vertex copy $G_r'$.
\end{enumerate}
We define $G^{2^{k+1}}$ recursively as $G^{2^{k+1}} \coloneqq \tp{G^{2^k}}^2$ with the base case $G^1 \coloneqq G$.
\end{definition}

\begin{figure}[h]
\centering
\begin{subfigure}{.5\textwidth}
\centering
\begin{tikzpicture}[every node/.style={draw=black, circle, inner sep=0pt, minimum size=0.2cm}]
\def \r {1.3cm}
\node[label={$r$},fill] (0) at (90:\r) {};
\node[label={left:$v_1$}] (1) at (90+120*1:\r) {};
\node[label={right:$v_2$}] (2) at (90+120*2:\r) {};
\path[->] 
    (0) edge (2) 
    (2) edge (1) 
    (1) edge (0)
    ;
\end{tikzpicture}
\caption{$G$ rooted at the black node}
\end{subfigure}%
\begin{subfigure}{.5\textwidth}
\centering
\begin{tikzpicture}[every node/.style={draw=black, circle, inner sep=0pt, minimum size=0.2cm}]
\def \r {2cm}
\def \l {1cm}
\foreach \i in {0,1,2}{
    \tikzset{shift={(90+120*\i:\r)}}
    \node[label=150+120*\i:$s_\i$, diamond] (t\i) at (150+120*\i:1.2*\l) {};
    \foreach \j in {0,1,2}
        \node (\i\j) at (90+120*\j:\l) {};
}
\tikzset{shift={(90:\r)}}
\node[label=45:$r$, fill] (02) at (330:\l) {};
\foreach \i in {0,1,2}{
    \path[->]
        (\i0) edge (\i2)
        (\i2) edge (\i1)
        (\i1) edge (\i0)
        ;
    \foreach \j in {0,1,2}
        \path[->] (t\i) edge (\i\j);
}
\path[->]
    (02) edge (t2)
    (21) edge (t1)
    (10) edge (t0)
    ;
\end{tikzpicture}
\caption{$G^2$ rooted at the black node. Source nodes are represented by diamonds.}
\end{subfigure}
\caption{Directed Squared Graph}
\end{figure}

Given a directed graph $G$ with $n-1$ vertices, the number of vertices in $G^{2^k}$ satisfies the recurrence relation 
\[\abs{V\tp{G^{2^k}}} = \abs{V\tp{G^{2^{k-1}}}} \cdot \tp{\abs{V\tp{G^{2^{k-1}}}} + 1} = \abs{V\tp{G^{2^{k-1}}}}^2 + \abs{V\tp{G^{2^{k-1}}}}.\]
Hence, we have 
\[\abs{V\tp{G^{2^k}}} + 1 \le \tp{\abs{V\tp{G^{2^{k-1}}}} + 1}^2 \le \tp{\abs{V\tp{G^{2^{k-2}}}} + 1}^{2^2} \le \cdots \le \tp{\abs{V\tp{G^{2^0}}} + 1}^{2^k} = n^{2^k}.\]

We use $OPT(G)$ to denote the size (number of vertices) of a maximum binary tree in $G$. The following lemma shows that $OPT(G)$ is super-multiplicative under the squaring operation.

\begin{lemma} \label{lem:di-squared-opt}
For any fixed root $r$, $OPT(G^2)\ge OPT(G)^2$.
\end{lemma}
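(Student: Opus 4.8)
The plan is to start from a maximum $r$-rooted binary tree $T$ in $G$, so that $|V(T)| = OPT(G)$, and to assemble from it an $r_r$-rooted binary tree $\widehat{T}$ in $G^2$ on at least $OPT(G)^2$ vertices. The natural idea exploits that $G^2$ contains one vertex copy $G'_u$ for each $u \in V$: I will use exactly the copies indexed by the vertices of $T$, and inside each copy $G'_u$ with $u \in V(T)$ I embed an isolated copy of $T$ in the $G$-part of that copy, rooted at the copy $r_u$ of $r$. This leaves the source $s_u$ of each copy free to act as a connector, and it already produces $OPT(G)$ disjoint trees, each of size $OPT(G)$. The entire difficulty is then to stitch these copies into a single valid binary tree using only the arcs available in $G^2$.

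For the stitching I follow the tree structure of $T$. If $a$ is a child of $p$ in $T$, then $(a,p) \in E$, so $G^2$ contains the inter-copy arc $(r_a, s_p)$; I orient the assembly so that the inner copy sitting in $G'_a$ hangs below the inner copy sitting in $G'_p$, using $r_a \to s_p$ as the connecting arc. Since all arcs of a directed tree point toward the root, this gives the inner root $r_a$ a single outgoing arc to $s_p$, consistent with its out-degree being $0$ within its own inner copy. The source $s_p$ then collects the arcs from all children of $p$ in $T$ (at most two, so its in-degree is at most $2$) and must forward them into the inner copy of $G'_p$ through a single arc $s_p \to w$; such an arc exists for every choice of $w$ because $s_p$ is adjacent to all vertices of its copy.

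The main obstacle is precisely the choice of $w$. If I naively route $s_p$ into the inner root $r_p$ and $r$ happens to have two children in $T$, then $r_p$ would receive three incoming arcs, violating the in-degree bound of $2$. I avoid this by sending $s_p$ to a vertex of the inner copy that still has a free in-slot: when $OPT(G) \ge 2$ every copy of $T$ contains a non-root leaf (a vertex of in-degree $0$ distinct from its root), and routing $s_p \to w$ to such a leaf keeps every in-degree at most $2$ (the case $OPT(G)=1$ being trivial). With this choice I would then verify the defining properties of an $r_r$-rooted binary tree: every vertex other than $r_r$ has out-degree exactly one (ordinary inner-copy vertices point to their inner parent, inner roots point to the relevant source, and sources point to their chosen leaf), no in-degree exceeds two, and the assembly is acyclic because each connecting arc moves strictly upward in the outer structure induced by $T$ or toward an inner root, so every vertex has a unique directed path to $r_r$. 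Finally, the $OPT(G)$ inner copies alone account for $OPT(G)^2$ vertices (the added source vertices only increase the count), which yields $OPT(G^2) \ge OPT(G)^2$.
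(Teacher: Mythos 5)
Your construction is correct and is essentially the paper's own proof: embed a copy of the optimal tree $T$ in each vertex copy $G'_u$ for $u \in V(T)$, join them with the arcs $(r_a, s_p)$ for each tree arc $(a,p)$, and route each source $s_p$ into a leaf of its inner copy so that no in-degree exceeds two. The only (harmless) difference is that the paper counts all $|V(T)|$ sources to get $OPT(G)\cdot(OPT(G)+1)$ vertices, whereas you observe that the inner copies alone already give $OPT(G)^2$; you also make explicit the reason the source must attach to a non-root leaf, which the paper leaves implicit.
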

\begin{proof}
Suppose we have an optimal $r$-rooted binary tree $T_1$ of $G$, i.e. $|V\tp{T_1}|=OPT(G)$.  We construct an $r_r$-rooted binary tree $T_2$ of $G^2$ as follows:
\begin{enumerate}
\item For $v \in V\tp{G}$, define $T_v' = T_v \cup \set{s_v}$ to be the optimal $r_v$-rooted binary tree in the vertex copy $G_v'$ where $T_v$ is identical to $T_1$ and the source vertex $s_v$ is connected to an arbitrary leaf node in $T_v$.
\item For every vertex $v \in T_1$, add $T_v'$ to $T_2$. This step generates $|V\tp{T_1}| \cdot \tp{|V\tp{T_1}|+1}$ vertices in $T_2$.
\item Connect the copies selected in step 2 by adding the arc $\tp{r_u, s_v}$ to $T_2$ for every arc $\tp{u,v} \in T_1$. 
\end{enumerate}
Since $T_1$ is an $r$-rooted binary tree (in $G$), it follows that $T_2$ is an $r_r$-rooted binary tree (in $G^2$). Moreover, the size of $T_2$ is 
\begin{align*}
|V\tp{T_2}| = |V\tp{T_1}| \cdot \tp{|V\tp{T_1}|+1} \ge OPT\tp{G}^2,
\end{align*} 
which cannot exceed $OPT\tp{G^2}$.
\end{proof}

The following lemma shows that a large binary tree in $G^2$ can be used to obtain a large binary tree in $G$. 
\begin{lemma} \label{lem:di-sqrtfactor}
For every $\alpha \in (0,1]$, given an $r_r$-rooted binary tree $T_2$ in $G^{2}$ with size
\begin{align*}
|V\tp{T_2}| \ge \alpha OPT\tp{G^{2}} - 1,
\end{align*}
there is a linear-time (in the size of $G^{2}$) algorithm that finds an $r$-rooted binary tree $T_1$ of $G$ with size 
\begin{align*}
|V\tp{T_1}| \ge \sqrt{\alpha} OPT\tp{G} - 1.
\end{align*}
\end{lemma}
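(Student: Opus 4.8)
The plan is to recover $T_1$ by analyzing how $T_2$ distributes across the vertex copies $G'_u$ and projecting it back to $G$ in two different ways, then keeping the better of the two. Write $N := OPT(G)$ and let $S := \{u \in V : T_2 \text{ meets } G'_u\}$ be the set of vertex copies touched by $T_2$. The first step is a structural claim: because the only arc of $G^2$ leaving a copy $G'_u$ is $(r_u, s_v)$ and the only arcs entering it are $(r_w, s_u)$, every vertex of $T_2$ in $G'_u$ other than $r_u$ has its parent (the next vertex on its path to the root $r_r$) inside $G'_u$. Hence $T_2 \cap G'_u$ is an $r_u$-rooted binary tree inside the copy, and moreover $s_u$---whose only in-arcs come from other copies---is a leaf of $T_2 \cap G'_u$ whenever it is present. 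First I would prove these facts by tracing parent pointers and checking the in/out-degree bounds.

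Next I would extract the two candidate trees. Contracting each copy to a single node turns $T_2$ into a \emph{macro} structure on $S$ whose arcs are exactly the pairs $(u,v)$ with $(r_u, s_v) \in T_2$, i.e.\ genuine arcs of $G$; since $T_2$ is a tree with a connected piece in each copy, every non-root copy contributes exactly one such arc, so this macro structure is a connected subgraph of $G$ with $|S| - 1$ arcs, hence an $r$-rooted tree, and the degree bounds transfer (the out-degree of $r_u$ and the in-degree of $s_u$ are at most $1$ and $2$). This yields the first candidate $T^{\mathrm{macro}}$, an $r$-rooted binary tree in $G$ of size $A := |S|$. For the second candidate, take the copy $u^\star$ maximizing $c_u := |V(T_2 \cap G'_u)|$, set $B := c_{u^\star}$, and delete the (leaf) source $s_{u^\star}$ if present; since $r_{u^\star}$ is the copy of the root vertex $r$ and all remaining arcs are copies of arcs of $G$, this is an $r$-rooted binary tree in $G$ of size at least $B - 1$.

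Finally I would close with a counting argument. Since $|V(T_2)| = \sum_{u \in S} c_u \le A \cdot B$ and, by hypothesis together with Lemma \ref{lem:di-squared-opt}, $|V(T_2)| \ge \alpha OPT(G^2) - 1 \ge \alpha N^2 - 1$, we get $A \cdot B \ge \alpha N^2 - 1$. If $\sqrt{\alpha}N \le 1$ the target bound $\sqrt{\alpha}OPT(G) - 1 \le 0$ is met by the trivial single-vertex tree $\{r\}$. Otherwise, assuming for contradiction that both $A < \sqrt{\alpha}N - 1$ and $B - 1 < \sqrt{\alpha}N - 1$ gives $AB < (\sqrt{\alpha}N - 1)\sqrt{\alpha}N = \alpha N^2 - \sqrt{\alpha}N < \alpha N^2 - 1$, contradicting $AB \ge \alpha N^2 - 1$. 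Hence $\max(A, B-1) \ge \sqrt{\alpha}N - 1$, and outputting the larger of $T^{\mathrm{macro}}$ and the reduced copy tree gives the desired $T_1$; all of this is a single linear-time pass over $T_2$.

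I expect the main obstacle to be the structural claim of the first paragraph---pinning down exactly which vertices can have a cross-copy parent and verifying that $s_u$ is always a leaf inside its copy---since the whole argument (both the validity of the two projections and the clean product bound $|V(T_2)| \le A \cdot B$) rests on it; the secondary annoyance is bookkeeping the $-1$ terms introduced by deleting the source vertices and handling the degenerate small-$N$ case.
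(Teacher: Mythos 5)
Your proposal is correct and follows essentially the same route as the paper's proof: the paper likewise projects $T_2$ to the ``macro'' tree $T_1'=(U,A)$ with $U=\{v: r_v\in V(T_2)\}$, and if that is too small, uses the identity $|V(T_2)|=\sum_{v\in U}|V(T_v')|$ to argue that some restriction $T_v'$ to a vertex copy must be large, deleting $s_v$ at the end. Your explicit product bound $AB\ge \alpha N^2-1$ and the case $\sqrt{\alpha}N\le 1$ actually track the $-1$ terms slightly more carefully than the paper's contradiction does, but the decomposition and counting are the same.
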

\begin{proof}
Let $U \coloneqq \{v \colon v \in V(G) \text{ such that } r_v \in V(T_2)\}$ 
and $A \coloneqq \{\tp{v,w} \colon v, w \in V(G), \tp{r_v, s_w} \in E(T_2)\}$. We note that $T_1':=(U,A)$ is an $r$-rooted binary tree in $G$. This is because the path from every $v \in U$ to the root $r$ is preserved, and the in-degree of every node $w \in U$ is bounded by the in-degree of $s_w$ (in $T_2$), which is thus at most $2$, and similarly the out-degree of every node is at most $1$. We also remark that $T_1'$ can be found in linear time. If $|U| \ge \sqrt{\alpha}OPT(G) > \sqrt{\alpha}OPT(G)-1$, then the lemma is already proved. So, we may assume that $|U|<\sqrt{\alpha}OPT(G)$. 

We now consider $T_v' \coloneqq \tp{V\tp{T_2} \cap V\tp{G_v'}, E\tp{T_2} \cap E\tp{G_v'}}$ for $v \in U$. We can view $T_v'$ as the restriction of $T_2$ to $G_v'$, hence every node of $T_v'$ has out-degree at most 2. Since $T_2$ is an $r_r$-rooted binary tree in $G^2$, every vertex in $V\tp{T_2} \cap V\tp{G_v'}$ has a unique directed path (in $T_2$) to $r_r$, which must go through $r_v$, thus every vertex in $V\tp{T_2} \cap V\tp{G_v'}$ has a unique directed path to $r_v$. It follows that $T_v'$ is an $r_v$-rooted binary tree in the vertex copy $G_v'$.

We now show that there exists $v \in U$ such that $|V\tp{T_v'}|\ge \sqrt{\alpha}OPT(G)$. 
Suppose not, which means for every $v \in U$ we have $|V\tp{T_v'}|<\sqrt{\alpha}OPT(G)$. Then  
\begin{align*}
|V\tp{T_2}| &=\sum_{v\in U}|V\tp{T_v'}|< \sum_{v \in U}\tp{\sqrt{\alpha} OPT(G)} < \sqrt{\alpha} OPT(G) \cdot \sqrt{\alpha}OPT(G) \\
 &= \alpha OPT(G)^2 \le \alpha \cdot OPT\tp{G^2},
\end{align*}
a contradiction. The last inequality is due to Lemma \ref{lem:di-squared-opt}.

In linear time we can find a binary tree $T_v'$ with the desired size $|V\tp{T_v'}| \ge \sqrt{\alpha}OPT(G)$. To complete the proof of the lemma, we let $T_1 \coloneqq T_v' \setminus \set{s_v}$ which is (isomorphic to) an $r$-rooted binary tree in $G$ with size at least $\sqrt{\alpha}OPT(G) - 1$.
\end{proof}

\begin{restatable}{theorem}{DagDPtas}\label{thm:d-ptas}
If \rdmaxBT has a polynomial-time algorithm that achieves a constant-factor approximation, then it has a PTAS.
\end{restatable}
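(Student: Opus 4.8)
The plan is to run the assumed constant-factor approximation algorithm on a sufficiently high squared power $G^{2^k}$ of the input, and then repeatedly apply the recovery procedure of Lemma~\ref{lem:di-sqrtfactor} to pull the solution back down to $G$, amplifying the approximation ratio from a fixed constant up to $1-\eps$ along the way. Concretely, suppose the approximation algorithm always returns an $r$-rooted binary tree of size at least $\beta \cdot OPT$ for some fixed constant $\beta \in (0,1)$. Given a target $\eps > 0$, I would first choose the smallest integer $k$ with $\beta^{1/2^k} \ge 1-\eps/2$; taking logarithms shows $2^k \ge \ln(1/\beta)/\ln(1/(1-\eps/2))$ suffices, so since $\beta$ and $\eps$ are constants, $k = k(\beta,\eps)$ is a constant.

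Next I would construct $G^{2^k}$ (rooted at $r_r$, as in Definition~\ref{def:d-squared}) and run the approximation algorithm on it, obtaining a tree $T_k$ with $|V(T_k)| \ge \beta\, OPT(G^{2^k}) \ge \beta\, OPT(G^{2^k}) - 1$. I would then apply Lemma~\ref{lem:di-sqrtfactor} exactly $k$ times, peeling off one square at each step by viewing $G^{2^{i}} = (G^{2^{i-1}})^2$. The crucial observation is that the conclusion of Lemma~\ref{lem:di-sqrtfactor} has precisely the form required by its own hypothesis: an input guarantee ``$\ge \alpha\, OPT - 1$'' is converted into an output guarantee ``$\ge \sqrt{\alpha}\, OPT - 1$''. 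Hence the ratio evolves as $\alpha \mapsto \sqrt{\alpha}$ while the additive $-1$ does not accumulate, and after $k$ applications I obtain an $r$-rooted binary tree $T_1$ in $G$ with
\[
|V(T_1)| \ \ge\ \beta^{1/2^k}\, OPT(G) - 1 \ \ge\ (1-\eps/2)\, OPT(G) - 1 .
\]

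To turn this into a genuine $(1-\eps)$-approximation I would dispose of the residual additive $-1$ by a case split on $OPT(G)$. If $OPT(G) \ge 2/\eps$, then $(\eps/2)\,OPT(G) \ge 1$ and the displayed bound already yields $|V(T_1)| \ge (1-\eps)\,OPT(G)$. If $OPT(G) < 2/\eps$, the optimum has constant size and can be found exactly in polynomial time by invoking the algorithm of Theorem~\ref{theorem:k-binary-tree} for each target size $t \in \{1, 2, \dots, \lceil 2/\eps \rceil\}$ and keeping the largest tree detected; this costs $2^{O(1/\eps)}\poly(n)$, polynomial for fixed $\eps$. Running both procedures and returning the better of the two trees covers both cases without needing to know $OPT(G)$ in advance.

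Finally, I would verify the running time. The key estimate is $|V(G^{2^k})| + 1 \le n^{2^k}$ (established just before Lemma~\ref{lem:di-squared-opt}, with $n = |V(G)|+1$), so for the constant $k = k(\beta,\eps)$ the squared graph has polynomially many vertices; the single approximation call, the $k$ linear-time recovery steps of Lemma~\ref{lem:di-sqrtfactor}, and the constant number of exact calls are therefore all polynomial in $n$ for each fixed $\eps$, as required for a PTAS. I expect the one genuinely delicate point to be the bookkeeping of the additive loss rather than the ratio amplification itself: I must confirm both that the $-1$ terms do not compound across the $k$ recovery steps (which rests on the self-reproducing form of the guarantee in Lemma~\ref{lem:di-sqrtfactor}) and that the leftover $-1$ at small optima is correctly absorbed by the exact subroutine.
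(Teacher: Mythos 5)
Your proposal is correct and follows essentially the same route as the paper: run the constant-factor algorithm on $G^{2^k}$ for a constant $k$ depending only on $\beta$ and $\eps$, pull the solution back with $k$ applications of Lemma~\ref{lem:di-sqrtfactor} (noting the additive $-1$ does not compound), and handle the case of constant-sized $OPT(G)$ exactly. The only cosmetic differences are your choice of threshold for absorbing the additive loss and your use of the FPT algorithm instead of brute force for small optima, neither of which changes the argument.
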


\begin{proof}
Suppose that we have a polynomial-time algorithm $\+A$ that achieves an $\alpha$-approximation for \rdmaxBT. Given a directed graph $G$, root $r$ and $\eps > 0$, let
\begin{align*}
k \coloneqq 1 + \ceilfit{ \log_2\frac{\log_2 \alpha}{\log_2 (1-\eps)} }
\end{align*}
be an integer constant that depends on $\alpha$ and $\eps$. We construct $G^{2^k}$ and run algorithm $\+A$ on $G^{2^k}$. Then, we get a binary tree in $G^{2^k}$ of size at least $\alpha OPT\tp{G^{2^k}} - 1$. By Lemma \ref{lem:di-sqrtfactor}, we can obtain an $r$-rooted binary tree in $G$ of size at least  
\begin{align*}
\alpha^{2^{-k}} OPT(G) - 1 \ge \alpha^{2^{-k+1}} OPT(G) \ge (1-\eps) OPT(G). 
\end{align*}
The first inequality holds as long as
\begin{align*}
OPT(G) \ge \frac{1}{\sqrt{1-\eps} - (1-\eps)} \ge \frac{1}{\alpha^{2^{-k}} - \alpha^{2^{-k+1}}}.
\end{align*}
We note that if $OPT(G)$ is smaller than $1/\tp{\alpha^{2^{-k}} - \alpha^{2^{-k+1}}}$ which is a constant, then we can solve the problem exactly by brute force in polynomial time. Finally, we also observe that for fixed $\eps$, the running time of this algorithm is polynomial since there are at most $n^{2^k}=n^{O(1)}$ vertices in the graph $G^{2^k}$.
\end{proof}

\subsection{APX-hardness for DAGs}
Next, we show the inapproximability results for DAGs. 
We begin by recalling \dagmaxBT: 
We begin by recalling the problem: 
\begin{problem}{\dagmaxBT}
	Given: A directed acyclic graph $G=(V, E)$ and a root $r \in V$.
	
	Goal: An $r$-rooted binary tree in $G$ with the largest number of nodes. 
\end{problem}
We may assume that the root is the only vertex that has no outgoing arcs as we may discard all vertices that cannot reach the root. We show that \dagmaxBT is APX-hard by reducing from the following problem. 


\begin{problem}{\textsc{Max-$3$-Colorable-Subgraph}}
Given: An undirected graph $G$ that is $3$-colorable.

Goal: A 3-coloring of $G$ that maximizes the fraction of properly colored edges.
\end{problem}

It is known that finding a $3$-coloring that properly colors at least $32/33$-fraction of edges in a given $3$-colorable graph is NP-hard \cite{guruswami2013improved, AOW12}. In particular, \textsc{Max-$3$-Colorable-Subgraph} is APX-hard. We reduce \textsc{Max-$3$-Colorable-Subgraph} to \dagmaxBT. Let $G=(V,E)$ be the input $3$-colorable undirected graph with $n:=|V|$ and $m:=|E|$. For $\eps > 0$ to be fixed later, we construct a DAG, denoted $D(G, \eps)$, as follows (see Figure \ref{fig:dag-hardness-reduction} for an illustration):

\begin{figure}[h]
\centering
\begin{tikzpicture}[every node/.style={draw=black, circle, inner sep=0pt, minimum size=0.9cm},scale=1.1]
	\def \l {1}
	\node[] (a) at (0,2.9) {$a$};
	\node[] (c) at (0,1.7) {$c$};
	\path[->] (c) edge (a);
	
	\node[](Vi)	at (-4, 0)	{$v_i$};
	\node[](Ri1)	at (-6, -1.1)		{$r_i\itn{1}$};
	\node[draw=none, minimum size=0.5cm](Ri2)	at (-6, {-1 - \l})		{$\vdots$};
	\node[](Ri3)	at (-6, {-1 - 2*\l})		{$r_i\itn{p_1}$};
	\node[draw=none, minimum size=0.5cm](Ri4)	at (-6, {-1 - 3*\l})		{$\vdots$};
	\node[](Gi1)	at (-4, -1.1)	{$g_i\itn{1}$};
	\node[draw=none, minimum size=0.5cm](Gi2)	at (-4, {-1 - \l})	{$\vdots$};
	\node[](Gi3)	at (-4, {-1 - 2*\l})		{$g_i\itn{p_2}$};
	\node[draw=none, minimum size=0.5cm](Gi4)	at (-4, {-1 - 3*\l})		{$\vdots$};
	\node[](Bi1)	at (-2, -1.1)	{$b_i\itn{1}$};
	\node[draw=none, minimum size=0.5cm](Bi2)	at (-2, {-1 - \l})	{$\vdots$};
	\node[](Bi3)	at (-2, {-1 - 2*\l})		{$b_i\itn{p_3}$};
	\node[draw=none, minimum size=0.5cm](Bi4)	at (-2, {-1 - 3*\l})		{$\vdots$};

	\node[](Vj)	at (4, 0)	{$v_j$};
	\node[](Rj1)	at (2, -1.1)		{$r_j\itn{1}$};
	\node[draw=none, minimum size=0.5cm](Rj2)	at (2, {-1 - \l})		{$\vdots$};
	\node[](Rj3)	at (2, {-1 - 2*\l})		{$r_j\itn{q_1}$};
	\node[draw=none, minimum size=0.5cm](Rj4)	at (2, {-1 - 3*\l})		{$\vdots$};
	\node[](Gj1)	at (4, -1.1)	{$g_j\itn{1}$};
	\node[draw=none, minimum size=0.5cm](Gj2)	at (4, {-1 - \l})	{$\vdots$};
	\node[](Gj3)	at (4, {-1 - 2*\l})		{$g_j\itn{q_2}$};
	\node[draw=none, minimum size=0.5cm](Gj4)	at (4, {-1 - 3*\l})		{$\vdots$};
	\node[](Bj1)	at (6, -1.1)	{$b_j\itn{1}$};
	\node[draw=none, minimum size=0.5cm](Bj2)	at (6, {-1 - \l})	{$\vdots$};
	\node[](Bj3)	at (6, {-1 - 2*\l})		{$b_j\itn{q_3}$};
	\node[draw=none, minimum size=0.5cm](Bj4)	at (6, {-1 - 3*\l})		{$\vdots$};
	
	\node[draw=none, minimum size=0.5cm](dots)	at (-2,0)		{$\hdots$};
	\node[draw=none, minimum size=0.5cm](dots)	at (0,0)		{$\hdots$};
	\node[draw=none, minimum size=0.5cm](dots)	at (2,0)		{$\hdots$};
	
	\node[] (aeR) at (-4,-5) {$a_e^R$};
	\draw (-5,-6.5) -- (-3,-6.5)-- (-4,-5.5) --cycle;
	\node[draw=none] at (-4,-6.2) {$T_e^R$};
	\node[] (aeG) at (0,-5) {$a_e^G$};
	\draw (-1,-6.5) -- (1,-6.5)-- (0,-5.5) --cycle;
	\node[draw=none] at (0,-6.2) {$T_e^G$};
	\node[] (aeB) at (4,-5) {$a_e^B$};
	\draw (3,-6.5) -- (5,-6.5)-- (4,-5.5) --cycle;
	\node[draw=none] at (4,-6.2) {$T_e^B$};
	
	\path[->]
	(Ri1) edge (Vi)
	(Gi1) edge (Vi)
	(Bi1) edge (Vi)
	(Ri2) edge (Ri1)
	(Ri3) edge (Ri2)
	(Ri4) edge (Ri3)
	(Gi2) edge (Gi1)
	(Gi3) edge (Gi2)
	(Gi4) edge (Gi3)
	(Bi2) edge (Bi1)
	(Bi3) edge (Bi2)
	(Bi4) edge (Bi3)
	
	(Rj1) edge (Vj)
	(Gj1) edge (Vj)
	(Bj1) edge (Vj)
	(Rj2) edge (Rj1)
	(Rj3) edge (Rj2)
	(Rj4) edge (Rj3)
	(Gj2) edge (Gj1)
	(Gj3) edge (Gj2)
	(Gj4) edge (Gj3)
	(Bj2) edge (Bj1)
	(Bj3) edge (Bj2)
	(Bj4) edge (Bj3)
	
	(aeR) edge (Ri3)
	(aeR) edge (Rj3)
	(aeG) edge (Gi3)
	(aeG) edge (Gj3)
	(aeB) edge (Bi3)
	(aeB) edge (Bj3)
	;
	
	\draw (-6,0.5) -- (6,0.5) -- (0,1.2) --cycle;
    \node[draw=none](dots)	at (-6,0)		{$\hdots$};
    \node[draw=none](dots)	at (6,0)		{$\hdots$};
    \node[draw=none] at (0,0.8) {$B$};
	
\end{tikzpicture}

\setlength{\belowcaptionskip}{-15pt}
\caption{DAG $D(G,\eps)$ constructed in the reduction from \textsc{Max-$3$-Colorable-Subgraph} to \dagmaxBT.}
\label{fig:dag-hardness-reduction}
\end{figure}

\begin{enumerate}
\item Create a directed binary tree $B$ rooted at node $c$ with $n\coloneqq|V|$ leaf nodes. We will identify each leaf node by a unique vertex $v \in V$. Create a super root $a$ and arc $c \rightarrow a$. This tree and the super root would have $2n$ nodes, including the super root node $a$, $n$ leaf nodes, and $n-1$ internal nodes.

\item For every $i \in V$, we introduce three directed paths of length $n$ that will be referred to as  $R_i, G_i$ and $B_i$. Let $R_i$ be structured as $r_i\itn{1} \leftarrow r_i\itn{2} \leftarrow \cdots \leftarrow r_i\itn{n}$, and similarly introduce $g_i\itn{k}$ and $b_i\itn{k}$ with the same structure. Also add arcs $r_i\itn{1} \rightarrow v_i$, $g_i\itn{1} \rightarrow v_i$ and $b_i\itn{1} \rightarrow v_i$.

\item For every edge $e=\set{i, j} \in E$, introduce three directed binary trees that will be referred to as $T_e^R, T_e^G$, and $T_e^B$, each with
$t = \ceilfit{ \frac{2\eps n(n+1) +4n^2}{\eps m} } $
nodes. Let the roots of the binary trees $T_e^R, T_e^G$, and $T_e^B$ be $a_e^R, a_e^G$, and $a_e^B$ respectively. Add arcs $a_e^R \rightarrow r_i\itn{p_1}$ and $a_e^R \rightarrow r_j\itn{q_1}$ where $r_i\itn{p_1}$ and $r_j\itn{q_1}$ are two nodes in $R_i$ and $R_j$ with in-degree strictly smaller than 2. We note that $R_i$ is a path with $n$ nodes so such a node always exists. Similarly connect $a_e^G$ to $g_i\itn{p_2}$ and $g_j\itn{q_2}$, and $a_e^B$ to $b_i\itn{p_3}$ and $b_j\itn{q_3}$ in the directed paths $G_i$ and $B_i$, respectively.
\end{enumerate}

The constructed graph $D(G,\eps)$ is a DAG. We fix $a$ to be the root. The number of nodes  $N$ in $D(G, \eps)$ is $N = 3mt + 3n\cdot n + 2n = 3mt + 3n^2 + 2n$.
We note that every node $v_i\in V$ has in-degree exactly $2$ in every $a$-rooted maximal binary tree in $D(G, \eps)$. The idea of this reduction is to encode the color of $v_i$ as the unique path among $R_i, G_i, B_i$ that is \emph{not} in the subtree under $v_i$. The following two lemmas summarize the main properties of the DAG constructed above.

\begin{restatable}{lemma}{LbBigtree}\label{lem:lb-bigtree}
Let $T$ be a maximal $a$-rooted binary tree of $D(G,\eps)$. If $|V(T)| \ge (1-\eps/4)(N-n^2)$, then at most $\eps m$ nodes among $\cup_{e\in E}\{a_e^R,a_e^G,a_e^B\}$ are not in $T$.
\end{restatable}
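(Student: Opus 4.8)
The plan is to prove the bound by a direct counting argument that rests on a single structural observation about the edge gadgets, after which everything reduces to tracking the $\eps$-factors in an inequality calibrated by the choice of $t$. The crucial structural fact is that in $D(G,\eps)$ the only arcs leaving the vertex set of a gadget $T_e^c$ (for a color $c\in\{R,G,B\}$) emanate from its root $a_e^c$: since $T_e^c$ is a directed binary tree oriented toward $a_e^c$, every node of $T_e^c$ can reach the global root $a$ only along a directed path that passes through $a_e^c$. Because $T$ is an $a$-rooted binary tree, every vertex of $T$ has a directed path to $a$ inside $T$; hence if $a_e^c\notin V(T)$, then no node of $T_e^c$ can lie in $T$. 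I would record this as the first step: \emph{each missing gadget root forces all $t$ vertices of its gadget to be absent from $T$.}

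With this in hand, let $k$ denote the number of roots in $\bigcup_{e\in E}\{a_e^R,a_e^G,a_e^B\}$ that are absent from $T$. The first step gives at least $kt$ absent vertices, so $N-|V(T)|\ge kt$. Next I would translate the size hypothesis into an upper bound on the number of absent vertices: from $|V(T)|\ge(1-\eps/4)(N-n^2)$ we get
\[
N-|V(T)|\le N-(1-\eps/4)(N-n^2)=n^2+\tfrac{\eps}{4}(N-n^2),
\]
and combining the two bounds yields $kt\le n^2+\tfrac{\eps}{4}(N-n^2)$.

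The final step is the calibration. Substituting $N-n^2=3mt+2n^2+2n$ and collecting the terms carrying a factor of $t$, the inequality becomes $(k-\tfrac{3}{4}\eps m)\,t\le n^2+\tfrac{\eps}{2}(n^2+n)$. Arguing by contradiction, suppose $k>\eps m$; then the left coefficient exceeds $\tfrac14\eps m$, so $\tfrac14\eps m\,t< n^2+\tfrac{\eps}{2}(n^2+n)$, i.e. $\eps m t< 4n^2+2\eps n(n+1)$. But the defining choice $t=\lceil(2\eps n(n+1)+4n^2)/(\eps m)\rceil$ guarantees $\eps m t\ge 2\eps n(n+1)+4n^2$, a contradiction; hence $k\le\eps m$, as claimed.

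The main obstacle I anticipate is not the counting but keeping the algebra honest in this last step: because $N$ itself depends linearly on $t$, the inequality $kt\le n^2+\tfrac{\eps}{4}(N-n^2)$ has $t$ on both sides, and the argument only closes because the value of $t$ was chosen precisely to absorb the $\tfrac34\eps m\,t$ feedback term while still dominating the $O(n^2)$ slack. I would therefore carry the $\eps$-factors explicitly and verify the single threshold inequality rather than estimating loosely. The structural observation of the first step is conceptually the key, but it is short once the orientation of the gadget arcs is spelled out.
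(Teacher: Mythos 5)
Your proposal is correct and follows essentially the same route as the paper's proof: the key observation that a missing gadget root $a_e^c$ forces the entire $t$-vertex gadget $T_e^c$ out of $T$ (which you justify more explicitly than the paper does), followed by the same counting of missing vertices and the same calibration of $t$ to reach a contradiction. Your rearrangement of the algebra (isolating $k$ and absorbing the $\tfrac{3}{4}\eps m t$ feedback term) is equivalent to the paper's direct computation showing $|V(T)| < (1-\eps/4)(N-n^2)$.
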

\begin{proof}
Suppose more than $\eps m$ such nodes are missing from $T$. For each node $a_e^R$ that is not in $T$, the corresponding subtree $T_e^R$ is also not in $T$ (same for $a_e^G$ and $a_e^B$). Therefore
\begin{align*}
|V(T)| < N - \eps mt = 3mt + 3n^2 + 2n - \eps mt = \tp{1-\frac{\eps}{4}}\cdot 3mt + 3n^2 + 2n - \frac{\eps}{4}mt.
\end{align*}
The choice of $t$ implies that $\eps mt/4 > \eps n(n+1)/2 + n^2$. Therefore
\begin{align*}
|V(T)| &< \tp{1-\frac{\eps}{4}}\cdot 3mt + 3n^2 + 2n - \frac{\eps n(n+1)}{2} - n^2\\
 &< \tp{1-\frac{\eps}{4}}\cdot 3mt + \tp{1-\frac{\eps}{4}}\tp{2n^2 + 2n} \\
 &= \tp{1-\frac{\eps}{4}}(N-n^2),
\end{align*}
a contradiction. 
\end{proof}

\begin{restatable}{lemma}{OptBds}\label{lem:optbds}
If $G$ is 3-colorable, then every $a$-rooted maximum binary tree in $D(G,\eps)$ has size exactly $N-n^2$.
\end{restatable}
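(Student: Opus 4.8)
The plan is to prove the two inequalities $OPT(D(G,\eps)) \le N - n^2$ and $OPT(D(G,\eps)) \ge N - n^2$ separately. The first will bound the size of \emph{every} $a$-rooted binary tree from above, and the second will exhibit one tree of size exactly $N-n^2$. Since every maximum binary tree attains $OPT(D(G,\eps))$, combining the two bounds forces every maximum $a$-rooted binary tree to have size exactly $N-n^2$, which is what the lemma asserts.

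For the upper bound, the key structural observation is that each path $R_i$ (and likewise $G_i,B_i$) can reach the root $a$ only through its bottom arc into $v_i$. Indeed, the only outgoing arcs among the path nodes $r_i\itn{1},\ldots,r_i\itn{n}$ are the path arcs $r_i\itn{k}\to r_i\itn{k-1}$ together with the single arc $r_i\itn{1}\to v_i$, since the edge-tree arcs $a_e^R \to r_i\itn{p_1}$ point \emph{into} the path rather than out of it. Hence, in any $a$-rooted binary tree $T$, if $r_i\itn{1}\to v_i \notin E(T)$ then no node of $R_i$ has a directed path to $a$ in $T$, so all $n$ nodes of $R_i$ are absent from $T$. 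First I would invoke the in-degree constraint: the node $v_i$ has exactly three in-arcs in $D(G,\eps)$ (from $r_i\itn{1}, g_i\itn{1}, b_i\itn{1}$) but in-degree at most $2$ in $T$, so at least one of these three arcs is missing and consequently at least one of the three paths at $v_i$ is entirely absent. Summing over the $n$ vertices, and using that the excluded paths are pairwise vertex-disjoint, at least $n\cdot n = n^2$ path nodes are missing from $T$, giving $|V(T)| \le N - n^2$.

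For the lower bound I would use a proper $3$-coloring $\chi\colon V \to \{R,G,B\}$, which exists since $G$ is $3$-colorable, and construct an explicit tree $T^\ast$. It contains the backbone $B$ together with $a$ and $c$; for each $v_i$ it contains the two paths whose color differs from $\chi(v_i)$, attached through their bottom arcs into $v_i$, so that exactly the path colored $\chi(v_i)$ is dropped ($n$ nodes per vertex, $n^2$ in total); and for each edge $e=\{i,j\}$ and each color $c$ it contains the entire binary tree $T_e^c$, routed to $a$ through whichever of the two paths ``$c$ at $i$'' or ``$c$ at $j$'' is present. The properness of $\chi$ is exactly what guarantees a route always exists: $a_e^c$ fails to connect only when both $c$-paths are dropped, i.e.\ $\chi(v_i)=\chi(v_j)=c$, which cannot occur on an edge of a properly colored graph. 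I would then verify the degree constraints, namely that each $v_i$ has in-degree $2$, each path node has in-degree at most $2$ (its path predecessor plus at most one edge-tree arc, using the construction's assignment of distinct path nodes of in-degree strictly below $2$ to the distinct incident edges), and each $a_e^c$ uses out-degree $1$, so that $T^\ast$ is a valid $a$-rooted binary tree. Counting its nodes yields $|V(T^\ast)| = 2n + 2n^2 + 3mt = N - n^2$.

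The main obstacle I anticipate is the achievability direction, specifically the bookkeeping needed to attach all $3m$ edge trees simultaneously without overloading any path node's in-degree. The upper bound is clean: it reduces entirely to the in-degree-at-most-$2$ constraint at each $v_i$ together with the observation that a path becomes unreachable once its bottom arc is removed. By contrast, the routing of the edge trees through the included paths is where properness of the coloring is essential and where the consistency of the degree constraints must be checked carefully. This is precisely the step that the construction's choice of attachment nodes (distinct path nodes of in-degree strictly less than $2$, available because each path has $n > \deg(i)$ nodes) is designed to make go through.
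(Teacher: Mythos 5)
Your proposal is correct and follows essentially the same route as the paper's proof: the upper bound comes from the in-degree constraint at each $v_i$ forcing at least one of the three incident paths (each of $n$ unreachable-without-$v_i$ nodes) to be dropped, and the lower bound comes from using a proper $3$-coloring to decide which path to discard at each vertex, with properness guaranteeing every $a_e^C$ retains a parent. Your write-up is in fact somewhat more explicit than the paper's (e.g., in justifying that a path with its bottom arc removed is entirely unreachable, and in checking the in-degree budget at the path nodes), but the underlying argument is identical.
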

\begin{proof}
We first note that every binary subtree of $D(G,\eps)$ has size at most $N-n^2$. This is because there are $n$ vertices with in-degree $3$ (namely $v_1, v_2, \cdots, v_n$). For each such vertex $v_i$, there are $3$ vertices $r_i\itn{1}, g_i\itn{1}$ and $b_i\itn{1}$ whose only outgoing arc is to $v_i$. Moreover, each vertex $r_i\itn{1}$ (and similarly $g_i\itn{1}$ and $b_i\itn{1}$) is the end-vertex of an induced path of length $n$. 

Suppose $G$ is 3-colorable. We now construct an $a$-rooted binary tree $T$ of size $N-n^2$ in $D(G,\eps)$. We focus on the nodes to be discarded so that we may construct a binary spanning tree with the remaining nodes. Let $\sigma \colon V \rightarrow \set{Red, Green, Blue}$ be a proper $3$-coloring of $G$. If $\sigma(v_i) = Red$, we discard the path $R_i$. The cases where $\sigma(v_i) \in \set{Green, Blue}$ are similar. Since there are no monochromatic edges, there do not exist $e=\set{v_i, v_j} \in E$ and $C \in \set{R,G,B}$ such that both parents of $a_e^C$ are not in $T$. Therefore every binary tree $T_e^C$ is contained as a subtree in $T$.
\end{proof}


\begin{restatable}{theorem}{ThmDagNoPtas}\label{theorem:dagmaxBT-no-ptas}
Suppose there is a PTAS for \dagmaxBT on DAGs, then for every $\eps>0$ there is a polynomial-time algorithm which takes as input an undirected 3-colorable graph $G$, and outputs a 3-coloring of $G$ that properly colors at least $(1-\eps)m$ edges.
\end{restatable}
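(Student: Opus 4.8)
The plan is to turn the approximate solver promised by the PTAS into a coloring algorithm by running it on the instance $D(G,\eps)$ constructed above and then reading a coloring off of the returned binary tree. Fix $\eps>0$ and build $D(G,\eps)$; since $t=O(n^2/(\eps m))$, this DAG has $N=3mt+3n^2+2n=O(n^2/\eps)$ vertices, so for fixed $\eps$ it has polynomial size and is constructible in polynomial time. Because $G$ is promised to be $3$-colorable, Lemma~\ref{lem:optbds} gives $OPT(D(G,\eps))=N-n^2$ exactly, which is the benchmark we will compare against.

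First I would run the assumed PTAS with accuracy parameter $\eps/4$ on $D(G,\eps)$ rooted at $a$, obtaining an $a$-rooted binary tree $T_0$ with $|V(T_0)|\ge(1-\eps/4)\,OPT=(1-\eps/4)(N-n^2)$. Next I would greedily augment $T_0$ into a \emph{maximal} $a$-rooted binary tree $T$ (repeatedly attaching any vertex whose inclusion respects the in-degree, out-degree, and acyclicity constraints, which terminates in at most $|V|$ steps); this only increases the size, so $|V(T)|\ge(1-\eps/4)(N-n^2)$ still holds, and maximality guarantees that every $v_i$ lies in $T$ with in-degree exactly $2$. Applying Lemma~\ref{lem:lb-bigtree} to $T$ then yields that at most $\eps m$ of the gadget roots in $\bigcup_{e\in E}\{a_e^R,a_e^G,a_e^B\}$ are absent from $T$.

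It remains to decode a coloring and bound its defect. For each $v_i$, maximality forces exactly one of the three incoming arcs $r_i^{(1)}\to v_i$, $g_i^{(1)}\to v_i$, $b_i^{(1)}\to v_i$ to be missing from $T$; I set $\sigma(v_i)$ to be the color of that missing path. The heart of the argument is an injection from monochromatic (improperly colored) edges into absent gadget roots. Consider a monochromatic edge $e=\{i,j\}$, say both endpoints Red; then by definition $r_i^{(1)}\to v_i\notin T$ and $r_j^{(1)}\to v_j\notin T$. Since the \emph{only} out-arc of $r_i^{(1)}$ is to $v_i$, the vertex $r_i^{(1)}$ cannot reach the root, so $r_i^{(1)}\notin T$; propagating this up the path $R_i$ (each $r_i^{(\ell)}$ has its sole out-arc into $r_i^{(\ell-1)}$) forces all of $R_i$, and in particular $r_i^{(p_1)}$, out of $T$, and likewise $r_j^{(q_1)}\notin T$. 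As the only out-arcs of $a_e^R$ go to $r_i^{(p_1)}$ and $r_j^{(q_1)}$, the root $a_e^R$ cannot reach $a$ and is therefore absent from $T$. Distinct edges use disjoint gadgets, so mapping each monochromatic edge to the absent root of its shared color is injective into the set of missing roots, whence the number of monochromatic edges is at most $\eps m$. Consequently $\sigma$ properly colors at least $(1-\eps)m$ edges, and every step above runs in polynomial time for fixed $\eps$.

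The step I expect to be the main obstacle is the decoding in the last paragraph: one must verify that maximality really does pin down a well-defined color at each $v_i$ (in-degree exactly $2$, not less), and that the path-propagation argument correctly forces an \emph{entire} color path out of $T$ starting from the single missing arc at $v_i$ — this is precisely what makes the edge-to-root map injective and converts the ``few missing roots'' guarantee of Lemma~\ref{lem:lb-bigtree} into a ``few monochromatic edges'' guarantee. Choosing the PTAS accuracy to be $\eps/4$ rather than $\eps$ is the bookkeeping required to meet the hypothesis $|V(T)|\ge(1-\eps/4)(N-n^2)$ of Lemma~\ref{lem:lb-bigtree}.
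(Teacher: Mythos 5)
Your proposal is correct and follows essentially the same route as the paper's proof: run the PTAS with accuracy $\eps/4$ on $D(G,\eps)$, extend to a maximal tree, invoke Lemma~\ref{lem:optbds} for the benchmark and Lemma~\ref{lem:lb-bigtree} for the ``few missing gadget roots'' bound, decode the color of $v_i$ from its unique missing in-neighbor, and map each monochromatic edge injectively to a missing $a_e^C$. Your explicit propagation argument (the sole out-arc of $r_i^{(1)}$ goes to $v_i$, so the whole path $R_i$ and hence $a_e^R$ is forced out of $T$) is exactly the step the paper leaves implicit, so there is nothing to fix.
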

\begin{proof}
Let $G=(V,E)$ be the given undirected 3-colorable graph. We construct $D(G, \eps)$ in polynomial time. We note that the constructed graph $D(G,\eps)$ is a directed acyclic graph. We now run the PTAS for \dagmaxBT on $D(G,\eps)$ and root $a$ to obtain a $(1-\eps/4)$-approximate maximum binary tree in $D(G,\eps)$. By Lemma \ref{lem:optbds} and the fact that $G$ is $3$-colorable, the PTAS will output an $a$-rooted binary tree $T$ of size at least
\begin{align*}
\tp{1-\frac{\eps}{4}}(N-n^2). 
\end{align*}
We may assume that $T$ is a maximal binary tree in $D(G,\eps)$ (if not, then add more vertices to $T$ until we cannot add any further). Maximality ensures that the nodes $v_i$ are in the tree $T$ and moreover, the in-degree of $v_i$ in $T$ is exactly $2$. For each $v_i \in V$, let $c_i$ be the unique node among $\set{r_i\itn{1}, g_i\itn{1}, b_i\itn{1}}$ that is not in $T$. We define a coloring $\sigma: V \rightarrow \set{Red, Green, Blue}$ of $G$ as
\begin{align*}
\forall v_i \in V, \quad \sigma(v_i) = \begin{cases}
Red & \textup{if $c_i = r_i\itn{1}$} \\
Green & \textup{if $c_i = g_i\itn{1}$} \\
Blue & \textup{if $c_i = b_i\itn{1}$}.
\end{cases}
\end{align*}

We now argue that the coloring is proper for at least $(1-\eps)$-fraction of the edges of $G$. Suppose we have an edge $e=\set{v_i, v_j}$ which is monochromatic under $\sigma$, and suppose w.l.o.g. $\sigma(v_i) = \sigma(v_j) = Red$. This means that neither $r_i\itn{1}$ nor $r_j\itn{1}$ is included in $T$. Therefore $a_e^R \notin T$ since neither of the two vertices with incoming arcs from $a_e^R$ are in $T$. By Lemma \ref{lem:lb-bigtree}, we know that at most $\eps m$ vertices among $\cup_{e\in E}\{a_e^R,a_e^G,a_e^B\}$ can be excluded from $T$. Hence, the coloring $\sigma$ that we obtained can violate at most $\eps m$ edges in $G$.
\end{proof}

Finally, we prove Theorem \ref{theorem:dagmaxBT-no-const-approx} using the self-improving argument (Theorem \ref{thm:d-ptas}) and the APX-hardness of \dagmaxBT (Theorem \ref{theorem:dagmaxBT-no-ptas}).

\begin{proof}[Proof of Theorem \ref{theorem:dagmaxBT-no-const-approx}]
\begin{enumerate}
\item
We observe that the graph $G^2$ constructed in Section 
\ref{sec:hardness-dags}
for the self-improving reduction is a DAG if $G$ is a DAG. Therefore, by Theorem \ref{thm:d-ptas}, a polynomial-time constant-factor approximation for \dagmaxBT would imply a PTAS for \dagmaxBT, a contradiction to APX-hardness shown in Theorem \ref{theorem:dagmaxBT-no-ptas}. 

\item
Next we show hardness under the Exponential Time Hypothesis. Suppose there is a polynomial-time algorithm $\+A$ for \dagmaxBT that achieves an $\exp\tp{-C \cdot \log_2{n}/\log_2\log_2{n}}$-approximation for some constant $C > 0$. Given the input graph $G$ with $n-1$ vertices, let $k$ be an integer that satisfies
\begin{align*}
2^{\sqrt{n}} \le n^{2^k} \le 2^{2\sqrt{n}},  
\end{align*}
and run $\+A$ on $G^{2^k}$ to obtain a binary tree with size at least 
\begin{align*}
\exp\tp{-C \cdot \log_2{N}/\log_2\log_2{N}} OPT\tp{G^{2^k}},
\end{align*}
where $N = n^{2^k}$ upper bounds the size of $G^{2^k}$. Recursively running the algorithm suggested in Theorem \ref{thm:d-ptas} $k$ times gives us a binary tree in $G$ with size at least
\begin{align*}
   & \exp\tp{-C \cdot \frac{\log_2{N}}{\log_2\log_2{N} \cdot 2^k}} OPT(G) - 1 \\
\ge  & \exp\tp{-C \cdot \frac{2\sqrt{n}}{\log_2{\sqrt{n}}} \cdot \frac{\log_2{n}}{\sqrt{n}}} OPT(G) - 1 \\
\ge& \exp\tp{-4C} OPT(G) - 1 \ge \frac{1}{2} \cdot \exp\tp{-4C} OPT(G).
\end{align*}
The last inequality holds as long as 
\begin{align*}
OPT(G) \ge 2 \cdot e^{4C}.
\end{align*}
We note that if $OPT(G)$ is smaller than $2e^{4C}$ which is a constant, we can solve the problem exactly by brute force in polynomial time. Otherwise the above procedure can be regarded as a constant-factor approximation for \dagmaxBT. The running time is polynomial in 
\begin{align*}
N = n^{2^k} = \exp\tp{O\tp{\sqrt{n}}},
\end{align*}
which is sub-exponential. Moreover, from item \ref{theorem:dagmaxBT-no-const-approx_1} we know that it is $\NP$-hard to approximate \dagmaxBT within a constant factor, thus $\NP \subseteq \DTIME{\exp\tp{O\tp{\sqrt{n}}}}$.

\item
The proof of this item is almost identical to the previous one except that we choose a different integer $k$. Suppose there is an algorithm $\+A'$ for \dagmaxBT that achieves a $\exp\tp{-C \cdot \log^{1-\eps}{n}}$-approximation for some constant $C > 0$, and runs in time $\exp\tp{O\tp{\log^d{n}}}$ for some constant $d>0$. We show that there is an algorithm that achieves a constant-factor approximation for \dagmaxBT, and runs in time $\exp\tp{O\tp{\log^{d/\eps}{n}}}$.

Given a DAG $G$ on $n-1$ vertices as input for \dagmaxBT, let $k=\ceilfit{\tp{\frac{1}{\eps}-1}\log_2\log{n}}$ be an integer that satisfies
\begin{align*}
\tp{2^k \log{n}}^{1-\eps} \le 2^k \le 2\tp{\log{n}}^{\frac{1}{\eps} - 1}.  
\end{align*} 
Running $\+A'$ on $G^{2^k}$ gives us a binary tree with size at least 
\begin{align*}
\exp\tp{-C \cdot \log^{1-\eps}{N}} OPT\tp{G^{2^k}},
\end{align*}
where $N = n^{2^k}$ upper bounds the size of $G^{2^k}$. Recursively running the algorithm suggested in Theorem \ref{thm:d-ptas} $k$ times gives us a binary tree in $G$ with size at least
\begin{align*}
   & \exp\tp{-C \cdot \frac{\log^{1-\eps}{N}}{2^k}} OPT(G) - 1 \\
\ge  & \exp\tp{-C \cdot \frac{\tp{2^k\log{n}}^{1-\eps}}{2^k}} OPT(G) - 1 \\
\ge& \exp\tp{-C} OPT(G) - 1 \ge \frac{1}{2} \cdot \exp\tp{-C} OPT(G).
\end{align*}

The last inequality holds as long as 
\begin{align*}
OPT(G) \ge 2 \cdot e^C.
\end{align*}
We note that if $OPT(G)$ is smaller than $2e^C$ which is a constant, we can solve the problem exactly by brute force in polynomial time. Otherwise the above procedure can be regarded as a constant-factor approximation for \dagmaxBT. The running time is quasi-polynomial in $N$, i.e. for some constant $C' > 0$, the running time is upper-bounded by 
\begin{align*}
\exp\tp{C'\tp{\log^d{N}}} = \exp\tp{C'\tp{\tp{2^k\log{n}}^d}} \le \exp\tp{C'\tp{\log^{d/\eps}{n}}}.
\end{align*}

\end{enumerate}
\end{proof}


\subsection{An IP and its integrality gap for DAGs} \label{sec:IP-gap-dags}

Let $G=(V,E)$ with root $r\in V$ be the input graph. We use indicator variables $Y_u$ for the nodes $u \in V$ and $X_e$ for the arcs $e \in E$ to determine the set of nodes and arcs chosen in the solution. With these variables, the objective is to maximize the number of chosen nodes. Let $\delta^{out}(u)$ and $\delta^{in}(u)$ be the set of incoming and outgoing edges of $u$ respectively.
Constraints (\ref{eq:in-degree}) ensure that each chosen node has at most two incoming arcs. 
Constraints (\ref{eq:out-degree}) ensure that each chosen non-root node has an outgoing arc. 
Constraints (\ref{eq:cut-constraints}) are cut constraints that ensure that every subset $S$ of vertices containing a chosen node $u$ but not the root has at least one outgoing arc. 

\begin{align}
& \text{maximize}    & \sum_{v\in V} Y_v \notag              &\\
& \text{subject to}  & \sum_{e\in \delta^{in}(u)} X_e       &\le 2 Y_u &\quad\quad& \forall\ u \in V, \label{eq:in-degree} &&\\
& & \sum_{e\in \delta^{out}(u)} X_e        &= Y_u &\quad\quad &\forall\ u\in V \setminus \{r\},                                  && \label{eq:out-degree} &&\\
&                    & \sum_{e\in \delta^{out}(S)} X_e        &\ge Y_u && \forall\ u \in S\subset V\setminus \{r\}, \label{eq:cut-constraints} &&\\
&                    & 0 \le Y_u                             &\le 1                                 && \forall\ u \in V, \label{eq:vertex-bounds} &&\\
&                    & 0 \le X_e                             &\le 1                                 && \forall\ e \in E, \label{eq:edge-bounds} &&\\
&                    & Y\in \Z^{|V|}, X&\in \Z^{|E|}.                                && \label{eq:integrality} &&
\end{align}

We note that a similar IP formulation can also be written for the longest $s\rightarrow t$ path problem by replacing the factor $2$ in the RHS of (\ref{eq:in-degree}) with a factor of $1$. It can be shown that extreme point solutions for the LP-relaxation of such an IP are in fact integral. Owing to the similarity between the longest $s\rightarrow t$ path problem in DAGs and \dagmaxBT (as degree bounded maximum subtree problems), it might be tempting to conjecture that LP-based techniques might be helpful for \dagmaxBT. However, in contrast to the LP-relaxation for longest $s\rightarrow t$ path problem in DAGs (which is integral), the LP-relaxation of the above IP (by removing Constraints~\ref{eq:integrality}) for \dagmaxBT has very large integrality gap. 

\begin{theorem}
The integrality gap of the LP-relaxation of the above IP, even in DAGs, is $\Omega(n^{1/3})$, where $n$ is the number of nodes in the input DAG. 
\end{theorem}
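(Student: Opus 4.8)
The plan is to exhibit an explicit family of $n$-vertex DAGs on which the LP-relaxation is a factor $\Omega(n^{1/3})$ larger than the true maximum binary tree, by (i) constructing the family, (ii) writing down a single explicit \emph{fractional} solution of objective value $\Omega(n)$, and (iii) proving a combinatorial upper bound of $O(n^{2/3})$ on the size of \emph{any} $r$-rooted binary tree in the construction. Combining (ii) and (iii) gives the gap. Conceptually, the gap must come entirely from the relaxation of the branching constraint \eqref{eq:in-degree}: note that the analogous LP for longest $s\rightarrow t$ path (the factor $2$ replaced by $1$) is integral, so the gap is a pure artifact of allowing in-flow up to $2Y_u$. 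Fractionally this permits a vertex to be ``merged'' into the structure by several parents each carrying a fractional out-weight (the equality \eqref{eq:out-degree} only fixes the \emph{total} out-weight to $Y_u$), and it permits a vertex to absorb more than two incident fractional subtrees as long as their weights sum to $2$; neither move is available to an honest binary tree, which may branch only in a strictly binary, single-parent, acyclic fashion. The construction should be engineered precisely so that \emph{fractional} connectivity to $r$ (guaranteed per-vertex by the cut constraints \eqref{eq:cut-constraints}, which may freely reuse arcs across sources) is cheap, while assembling many vertices into one genuine in-degree-$\le 2$ arborescence is impossible.

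\textbf{The LP lower bound.} I would take a layered, highly symmetric DAG in three tiers of widths chosen so that $n=\Theta(k^{3})$, with dense (e.g.\ complete-bipartite-like) downward arcs between consecutive tiers and a small binary ``funnel'' of $O(k)$ vertices collecting everything into $r$. On such a graph I would set $Y_v=1$ for all (or all but $o(n)$) vertices and define $X_e$ by spreading each vertex's unit of out-weight \emph{uniformly} over its downward arcs. Feasibility of \eqref{eq:out-degree} is then immediate; feasibility of \eqref{eq:in-degree} follows from a counting/averaging argument showing that the total weight entering any vertex is at most $2$ (this is where the factor $2$ and the uniform spreading are used); and the cut constraints \eqref{eq:cut-constraints} are verified by exhibiting, for each vertex $u$, an explicit unit $u\to r$ flow supported on the $X_e$'s, which is easy precisely because the layers are dense and symmetric (any single source can route one unit down while sharing capacity with all other sources). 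This yields an LP value $\ge \Omega(n)=\Omega(k^{3})$.

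\textbf{The integral upper bound (the crux).} The hard part, which I expect to be the main obstacle, is proving that every $a$-rooted binary tree $T$ in the construction has only $O(n^{2/3})=O(k^{2})$ vertices. Here I would use the handshake identity for binary trees emphasized elsewhere in the paper: in any such $T$ the number of leaves equals the number of in-degree-$2$ (branching) nodes plus one, so bounding $|V(T)|$ reduces to bounding the number of branchings together with the total length of the non-branching paths. The construction must be designed so that a genuine binary tree can create branchings only inside the $O(k)$-sized funnel and the top tier, while the dense inter-tier arcs that the LP exploits do \emph{not} admit a large single-parent, acyclic arborescence — intuitively, routing many lower-tier vertices up would force some intermediate vertex to have three incoming tree-arcs (an integral violation that the fractional ternary ``merge'' of weight-$\tfrac{2}{3}$ arcs sidesteps). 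Formalizing this as a clean structural/potential argument, and choosing the tier widths so that the integral bound is exactly $\Theta(k^{2})$ while $n=\Theta(k^{3})$, is the delicate step; once established it gives the gap $\Omega(n)/O(n^{2/3})=\Omega(n^{1/3})$. The two subtleties I would watch most carefully are (a) ruling out the possibility that long induced paths in lower tiers let an integral tree recover $\omega(k^{2})$ vertices (the construction should keep such paths short or sparse), and (b) making sure the same density that powers the LP lower bound does not accidentally also supply a large integral tree.
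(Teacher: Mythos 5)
Your high-level plan (explicit family, explicit fractional solution, combinatorial upper bound on every integral binary tree) is the same as the paper's, but the construction you propose cannot deliver the LP lower bound, and the failure is exactly at the step you wave through as a ``counting/averaging argument.'' Constraint \eqref{eq:out-degree} is an \emph{equality}: every vertex with $Y_u=1$ must push a full unit of out-weight onto its out-arcs, and by \eqref{eq:in-degree} each vertex downstream can absorb at most $2Y_w\le 2$ units in total. So the total $Y$-weight of a set of vertices all of whose out-arcs land in a set $B$ is at most $2\sum_{w\in B}Y_w\le 2|B|$; the total weight can at most \emph{double} per layer as you move away from $r$. In your three-tier construction all out-arcs of the bottom tier land in the $O(k)$-vertex funnel, forcing that tier to carry total weight $O(k)$, whence the middle tier carries $O(k)$ and the top tier $O(k)$: the LP objective of any feasible solution on such a graph is $O(k)=O(n^{1/3})$, not $\Omega(n)$. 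Setting $Y_v=1$ everywhere and spreading out-weight uniformly simply violates \eqref{eq:in-degree} at the funnel (each funnel vertex would receive in-weight $\Theta(k^2)$). No choice of tier widths fixes this; any instance with LP value $\Omega(n)$ must have depth $\Omega(\log n)$. Your diagnosis that the gap comes from fractional multi-parent merging is directionally right, but the density that you hope powers the LP is exactly what the out-degree equality forbids.

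The paper's construction respects this doubling budget by being recursive and deep: $T_k$ consists of $8$ copies of $T_{k-1}$ glued by $13$ connector vertices, so $|V(T_k)|=\Theta(8^k)$. The fractional solution is $X_e=1/2$ on \emph{every} arc, with $Y_u=1$ on the designated ``special'' vertices (roots of sub-copies, which have two arc-disjoint paths to $r$, so every cut they lie in carries weight $\ge 2\cdot\tfrac12=1$) and $Y_u=1/2$ elsewhere (one path to $r$ suffices for the cut constraint $\ge\tfrac12$); this gives LP value $\Theta(8^k)=\Theta(n)$. The integral bound does not need a handshake/potential argument at all: the in-degree-$2$ and out-degree-$1$ constraints let an integral tree recurse into only $4$ of the $8$ sub-copies at each level, giving $\mathrm{IP}\text{-}\mathrm{OPT}(k)=4\,\mathrm{IP}\text{-}\mathrm{OPT}(k-1)+O(1)=\Theta(4^k)=\Theta(n^{2/3})$, and the $\Omega(n^{1/3})$ gap follows. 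Separately, your integral upper bound section is a statement of intent rather than a proof, but the primary gap is the infeasibility of your fractional solution.
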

\begin{proof}
We construct an integrality gap graph $T_k$ recursively as shown in Figure \ref{fig:gap-construction-recursive} with the base graph $T_1$ being a single node labeled $r_1$. We will denote the root vertices of $T_1, T_2, \ldots, T_{k-1}, T_k$ to be \emph{special} vertices. 
The layered construction and the direction of the arcs illustrate that the graph $T_n$ is a DAG. The number $V_k$ of nodes in the graph $T_k$ satisfies the recursion
\[
V_k=8V_{k-1}+13
\]
with $V_1=1$. Thus, $V_k=(13/7)(8^{k-1}-1)$. 

\begin{figure}[h]
	\centering
    \begin{subfigure}[b]{0.7\textwidth}
		\centering
        \includegraphics[scale=0.32]{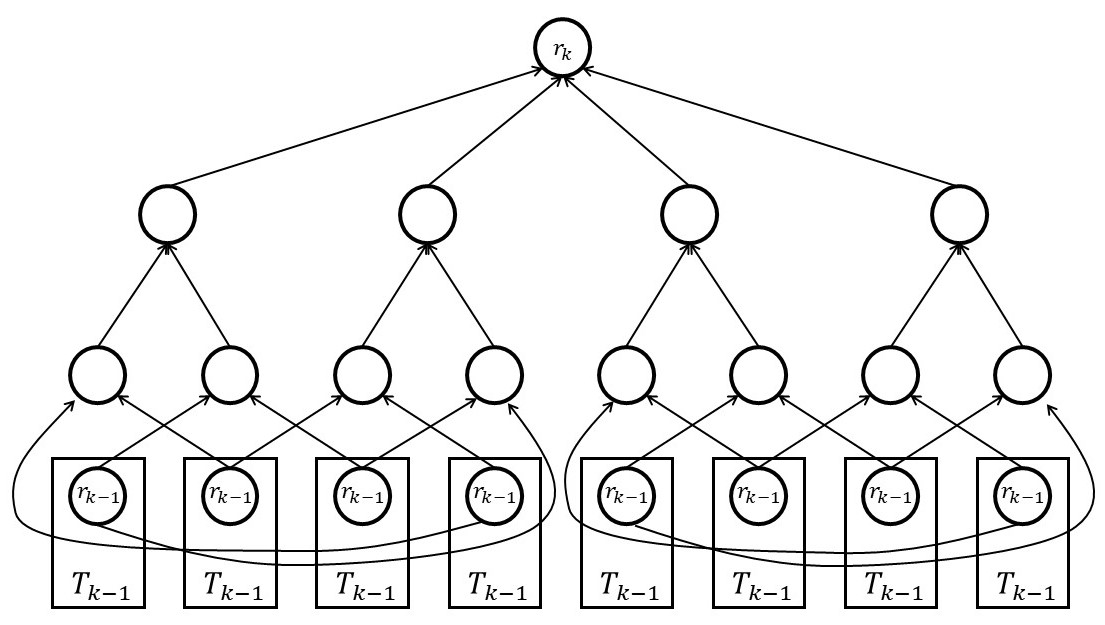}
        \caption{Integrality Gap Graph $T_k$}
    \label{fig:gap-construction-recursive}
    \end{subfigure}
	\quad
    \begin{subfigure}[b]{0.26\textwidth}  
        \centering
        \includegraphics[scale=0.27]{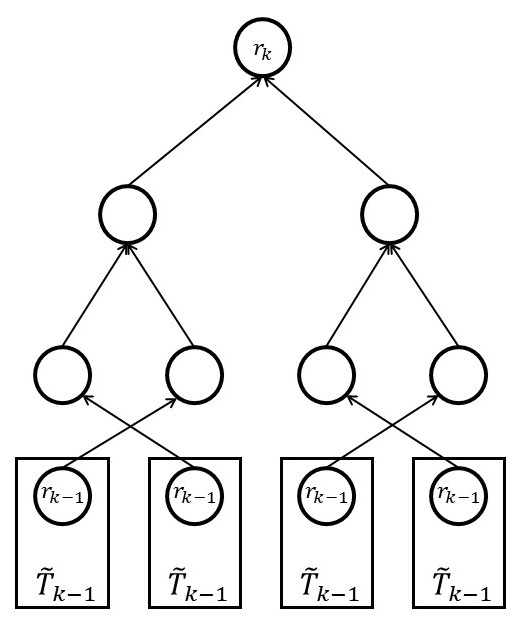}
        \caption{\centering Integral Optimum \newline Solution: $\tilde{T}_k$}
        \label{fig:gap-construction-IP-opt}
    \end{subfigure}
    \caption{DAG Integrality Gap}
    \label{fig:gap-construction}
\end{figure}

Due to the degree constraints, the optimal integral solution $\tilde{T}_k$ is obtained using a recursive construction as shown in Figure \ref{fig:gap-construction-IP-opt} with the base graph $\tilde{T}_1=T_1$. The number of arcs in the optimal integral solution satisfies the recursion
\[
\text{IP-OPT}(k)=4\text{IP-OPT}(k-1)+7
\]
with $\text{IP-OPT}(1)=0$. Thus, $\text{IP-OPT}(k)=(7/3)(4^{k-1}-1)$. 

In order to show large integrality gap, we give an LP-feasible solution with large objective value. The LP-feasible solution that we consider is $X_e:=1/2$ for every arc $e$ in the graph with 
\begin{equation*}
Y_{u}:=
\begin{cases}
1 \text{ if $u$ is a special vertex,}\\
\frac{1}{2} \text{ otherwise.}
\end{cases}
\end{equation*}
We now argue that this solution satisfies all constraints of the LP-relaxation. The in-degree and out-degree constraints hold by definition. We now show that the cut constraints, i.e., constraints (\ref{eq:cut-constraints}), are satisfied. We have two cases: 

\noindent \emph{Case 1.} Suppose $u$ is a non-special vertex. For every non-special vertex $u$, we have a path from $u$ to the root $r$. So, every cut $S$ containing $u$ but not $r$ has an arc leaving it and hence $\sum_{e\in \delta^{out}(S)}X_e\ge 1/2 = Y_u$. 

\noindent \emph{Case 2.} Suppose $u$ is a special verteex. For every special vertex $u$, we have two arc-disjoint paths from $u$ to the root $r$. So, every cut $S$ containing $u$ but not $r$ has at least $2$ arcs leaving it and hence $\sum_{e\in \delta^{out}(S)}X_e\ge 1 = Y_u$. 

The objective value of this LP-feasible solution satisfies the recursion
\[
\text{LP-obj}(k)=8\text{LP-obj}(k-1)+14
\]
with $\text{LP-obj}(1)=0$. Thus, $\text{LP-obj}(k)=2(8^{k-1}-1)$. Consequently, the integrality gap of the LP for instance $T_k$ is $\Omega(2^{k-1})=\Omega(V_k^{1/3})$. 
\end{proof}

\section{Hardness results for undirected graphs}\label{sec:undirected-hardness}

We show the inapproximability of finding a maximum binary tree in undirected graphs. We use $OPT(G)$ to denote the size (number of vertices) of a maximum binary tree in $G$.

\subsection{Self-improvability}
This section is devoted to proving the following theorem.

\begin{theorem} \label{thm:ud-ptas}
If \umaxBT has a polynomial-time algorithm that achieves a constant-factor approximation, then it has a PTAS.
\end{theorem}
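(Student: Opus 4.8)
The plan is to reproduce the self-improving scheme of the directed case (Theorem~\ref{thm:d-ptas}) in the undirected setting, so that the whole argument reduces to supplying undirected analogues of the two supporting lemmas, Lemmas~\ref{lem:di-squared-opt} and~\ref{lem:di-sqrtfactor}. Once those are in place, the boosting step is unchanged: for a constant $k$ chosen exactly as in the proof of Theorem~\ref{thm:d-ptas}, I would build the iterated squared graph $G^{\boxtimes 2^k}$, run the assumed $\alpha$-approximation on it, and peel off one squaring at a time, using at each level that an $\alpha'$-approximate binary tree upstairs yields a $\sqrt{\alpha'}$-approximate one downstairs. Thus the real content is (a) choosing the right undirected squared graph $G^{\boxtimes 2}$ and (b) proving a super-multiplicativity bound $OPT(G^{\boxtimes 2}) \ge c_1\,OPT(G)^2$ together with a recovery procedure that turns a binary tree $T_2$ in $G^{\boxtimes 2}$ with $|V(T_2)| \ge \alpha\,OPT(G^{\boxtimes 2})$ into a binary tree $T_1$ in $G$ with $|V(T_1)| \ge \sqrt{\alpha}\,OPT(G) - O(1)$.

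The directed construction is unavailable because it relies on source vertices and arc orientation, so I would instead build $G^{\boxtimes 2}$ from \emph{edge copies} — one copy of $G$ per edge of $G$, in the spirit of Karger, Motwani, and Ramkumar — but attach each copy to the two endpoints of its edge through a carefully designed interface. The naive edge-copy attachment already gives $OPT(G^{\boxtimes 2}) \ge OPT(G)^2$ (route a copy of an optimal tree of $G$ through every edge copy that the tree uses), but it is lossy in reverse: restricting a binary tree $T_2$ to a single edge copy need not be a tree, since $T_2$ can enter and leave the copy through both attachment vertices, leaving a \emph{forest} of up to four components. The purpose of the modified interface is to simultaneously (i) boost the extension so that $OPT(G^{\boxtimes 2}) \ge c_1\,OPT(G)^2$ for a constant $c_1>1$ and (ii) cut down the number of components that a restriction can produce. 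The construction is tuned so that the factor gained in the lower bound on $OPT(G^{\boxtimes 2})$ is matched by the factor lost when we later discard all but the largest component, so that their composition still delivers the full $\sqrt{\alpha}$; in the paper's construction both of these factors equal two.

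The recovery lemma is where the work lies and is the step I expect to be the main obstacle. Given $T_2$ with $|V(T_2)| \ge \alpha\,OPT(G^{\boxtimes 2}) - O(1)$, I would first form the projection $T_1'$ on $G$, keeping the edge of $G$ exactly when $T_2$ uses the corresponding edge copy nontrivially, and verify by a degree-and-acyclicity check that $T_1'$ is a binary tree of $G$. Then, mirroring the case analysis of Lemma~\ref{lem:di-sqrtfactor}, either $|V(T_1')| \ge \sqrt{\alpha}\,OPT(G)$ and we are done, or $|V(T_1')|$ is small and an averaging argument over the edge copies (the number of nontrivially used copies is at most $|V(T_1')|$, while the copies together account for essentially all of $|V(T_2)| \ge \alpha c_1\,OPT(G)^2$) forces some single copy to contain at least $\approx c_1\sqrt{\alpha}\,OPT(G)$ vertices of $T_2$. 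Restricting $T_2$ to that copy gives a forest of binary trees, and here I would invoke the \emph{handshake identity} for binary trees — that the number of degree-$3$ vertices equals the number of leaves minus two — to bound the number of components of this forest by the small constant built into the interface (at most four for the naive attachment, cut to two by the modified one) rather than by ad hoc case analysis. Discarding all but the largest component then yields a binary tree inside one copy, isomorphic to a binary tree of $G$, of size at least $\sqrt{\alpha}\,OPT(G) - O(1)$. The delicate points are controlling the component count through the handshake identity and choosing the interface so that the strengthening of Obs~(i) exactly cancels the weakening of Obs~(ii); after that, the proof closes by quoting the boosting computation of Theorem~\ref{thm:d-ptas} essentially verbatim, absorbing the additive $O(1)$ losses into the brute-force base case for instances with $OPT(G)$ bounded by a constant.
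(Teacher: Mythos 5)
Your high-level plan is the paper's plan: a self-improving reduction via a modified undirected squared graph, with a super-multiplicativity lemma and a recovery lemma whose constant-factor losses are arranged to cancel, followed by the same boosting computation as in the directed case with additive $O(1)$ losses absorbed into a brute-force base case. You even guess correctly that both factors equal two and that a handshake-type identity for binary trees (in the paper, $3|I_0(T)|+2|I_1(T)|+|I_2(T)|=|V(T)|+2$, equivalently ``leaves $=$ degree-$3$ vertices $+\,2$'') is the counting tool. However, the crux of the proof --- the actual construction of $G^{\XBox 2}$ --- is left as a black box (``a carefully designed interface''), and the specific mechanism you gesture at is not the one the paper uses and is not obviously realizable. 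The paper does \emph{not} modify the interface between an edge copy $G_{u,v}$ and its endpoints ($u$ and $v$ are still joined to \emph{all} vertices of $G_{u,v}$), and it does not reduce the number of components per edge-copy restriction from four to two. Instead, the gain in Obs~(i) comes from an entirely separate gadget: two \emph{pendant copies} $G_v^{(1)},G_v^{(2)}$ of $G$ attached to every vertex $v$. The handshake identity says a binary tree $T_1$ has exactly $|V(T_1)|+2$ units of spare degree distributed as $3|I_0|+2|I_1|+|I_2|$, so one can hang one pendant copy per spare unit (two at each leaf, one at each degree-$2$ vertex), contributing $|V(T_1)|\cdot(|V(T_1)|+2)$ extra vertices on top of the $|V(T_1)|^2$ from edge copies; this yields $OPT(G^{\XBox 2})\ge 2\,OPT(G)^2+2\,OPT(G)$.

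Correspondingly, the recovery argument does not average over edge copies and then bound components \emph{per copy}, as you propose; it removes $V(G)\cap V(T_2)$ from $T_2$ to get a forest $\+F$ spread over all edge and pendant copies, bounds the \emph{total} number of components by $|\+F_2|+|\+F_1|\le (|V(T_1')|-1)+(|V(T_1')|+2)=2|V(T_1')|+1<2\sqrt{\alpha}\,OPT(G)+1$ --- again via the handshake identity applied to the projected tree $T_1'$ --- and then averages $|V(T_2)|$ over these components to find one of size at least $\sqrt{\alpha}\,OPT(G)-1$. Your version would require establishing both (i) a boost to $2\,OPT(G)^2$ using edge copies alone with a redesigned attachment, and (ii) a guarantee of at most two components per copy; neither is proved, and (i) in particular runs into the degree budget at $u$ and $v$ (each endpoint of a tree edge has limited spare degree to enter the copy twice). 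So while your proposal correctly identifies the shape of the argument and the role of the handshake identity, the construction that makes the two factors of two actually appear --- pendant copies keyed to spare degree, and a global rather than per-copy component count --- is missing, and it is the substance of the proof.
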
 

The idea is to gradually boost up the approximation ratio by running the constant-factor approximation algorithm on squared graphs. This idea was also followed by Karger, Motwani and Ramkumar \cite{karger1997approximating} for the longest path problem. 
For our purpose, we need a different construction of squared graphs. We introduce our construction now. 

\begin{definition} \label{def:ud-squared}
For an undirected graph $G$, its squared graph $G^{\XBox 2}$ is defined as the graph obtained by performing the following operations in $G$.
\begin{enumerate}
\item Replace each edge $\set{u,v} \in E(G)$ with a copy $G_{u,v}$ of $G$. Connect $u$ and $v$ to all vertices in $G_{u,v}$. We will refer to these copies as \emph{edge} copies.
\item For each vertex $v \in V(G)$, introduce two copies of $G$ denoted by $G_v\itn{1}$ and $G_v\itn{2}$, and connect $v$ to all vertices in $G_v\itn{1}$ and $G_v\itn{2}$. We will refer to these copies as \emph{pendant} copies.
\end{enumerate}
We will use $V(G)$ to denote the original vertices of $G$ and the same vertices in the graph $G^{\XBox 2}$ (see Figure \ref{figure:undir-squared-graph} for an example). 
We define $G^{\XBox 2^{k+1}}$ recursively as $G^{\XBox 2^{k+1}} \coloneqq \tp{G^{\XBox 2^k}}^{\XBox 2}$ with the base case $G^{\XBox 1} \coloneqq G$.
\end{definition}

\begin{figure}[h]
\centering
\begin{subfigure}{.5\textwidth}
\centering
\begin{tikzpicture}[every node/.style={draw=black, circle, inner sep=0pt, minimum size=0.2cm}]
\def \r {1.3cm}
\node[label=left:$a$] (v0) at (0,0) {};
\node[label=left:$b$] (v1) at (0, 2*\r) {};
\node[label=right:$c$] (v2) at ({2*\r}, {2*\r}) {};
\node[label=right:$d$] (v3) at (2*\r, 0) {};
\draw (v0) to (v1) (v0) to (v2) (v0) to (v3);
\end{tikzpicture}
\caption{$G$}
\end{subfigure}%
\begin{subfigure}{.5\textwidth}
\centering
\begin{tikzpicture}[every node/.style={draw=black, circle, inner sep=0pt, minimum size=0.2cm}]
\def \r {2cm}
\def \l {0.45cm}
\def \p {1.5*\l}
\node[label=135:$a$] (v0) at (0,0) {};
\node[label=left:$b$] (v1) at (0, 2*\r) {};
\node[label=315:$c$] (v2) at ({2*\r}, {2*\r}) {};
\node[label=below:$d$] (v3) at (2*\r, 0) {};
\foreach \x in {0,1}
    \foreach \y in {0,1}
        \node (1\x\y) at ({(1-2*\x)*\l}, {\r+(1-2*\y)*\l}) {};
\foreach \x in {0,1}
    \foreach \y in {0,1}
        \node (2\x\y) at ({\r+(1-2*\x)*\l}, {\r+(1-2*\y)*\l}) {};
\foreach \x in {0,1}
    \foreach \y in {0,1}
        \node (3\x\y) at ({\r+(1-2*\x)*\l}, {(1-2*\y)*\l}) {};
\foreach \i in {1,3}
    \foreach \x in {0,1}
        \foreach \y in {0,1}
            \draw (\i\x\y) to (v\i) (\i\x\y) to (v0) (\i\x\y) to (\i 11) ;
\draw (201) to (v2) (210) to (v2) (200) to (v2);
\draw (211) to[in=215, out=55] (v2);
\draw (201) to (v0) (210) to (v0) (211) to (v0);
\draw (200) to[in=35, out=235] (v0);
\draw (211) to (201) (211) to (200) (211) to (210);

\draw[dashed] (-\p, \r-\p) to node[draw=white, midway, left] {$G_{a,b}$} (-\p, \r+\p) to (\p, \r+\p) to (\p, \r-\p) to cycle;
\draw[dashed] (\r-\p, \r-\p) to (\r-\p, \r+\p) to node[draw=white, midway, above] {$G_{a,c}$} (\r+\p, \r+\p) to (\r+\p, \r-\p) to cycle;
\draw[dashed] (\r-\p, -\p) to (\r-\p, \p) to (\r+\p, \p) to (\r+\p, -\p) to node[draw=white, midway, below] {$G_{a,d}$} cycle;

\def \rr {0.8cm}
\def \l {0.2cm}
\def \p {1.8*\l}
\foreach \x in {0,1} {
    \foreach \y in {0,1} {
        \node (a1\x\y) at ({-2*\rr+(1-2*\x)*\l}, {-\rr+(1-2*\y)*\l}) {};
        \node (a2\x\y) at ({-\rr+(1-2*\x)*\l}, {-2*\rr+(1-2*\y)*\l}) {};
    }
}
\foreach \x in {0,1} {
    \foreach \y in {0,1} {
        \draw (a1\x\y) to (v0);
        \draw (a2\x\y) to (v0);
    }
}
\foreach \i in {1,2} {
    \draw (a\i11) to (a\i01);
    \draw (a\i11) to (a\i10);
    \draw (a\i11) to (a\i00);
}
\draw[dashed] ({-2*\rr-\p}, {-\rr-\p}) to ({-2*\rr-\p}, {-\rr+\p}) to node[draw=white, midway, above] {$G_a\itn{1}$} ({-2*\rr+\p}, {-\rr+\p}) to ({-2*\rr+\p}, {-\rr-\p}) to cycle;
\draw[dashed] ({-\rr-\p}, {-2*\rr-\p}) to ({-\rr-\p}, {-2*\rr+\p}) to ({-\rr+\p}, {-2*\rr+\p}) to node[draw=white, midway, right] {$G_a\itn{2}$} ({-\rr+\p}, {-2*\rr-\p}) to cycle;

\foreach \x in {0,1} {
    \foreach \y in {0,1} {
        \node (b1\x\y) at ({-\rr+(1-2*\x)*\l}, {2*\r+2*\rr+(1-2*\y)*\l}) {};
        \node (b2\x\y) at ({ \rr+(1-2*\x)*\l}, {2*\r+2*\rr+(1-2*\y)*\l}) {};
    }
}
\foreach \x in {0,1} {
    \foreach \y in {0,1} {
        \draw (b1\x\y) to (v1);
        \draw (b2\x\y) to (v1);
    }
}
\foreach \i in {1,2} {
    \draw (b\i11) to (b\i01);
    \draw (b\i11) to (b\i10);
    \draw (b\i11) to (b\i00);
}
\draw[dashed] ({-\rr-\p}, {2*\r+2*\rr-\p}) to node[draw=white, midway, left]{$G_b\itn{1}$} ({-\rr-\p}, {2*\r+2*\rr+\p}) to ({-\rr+\p}, {2*\r+2*\rr+\p}) to ({-\rr+\p}, {2*\r+2*\rr-\p}) to cycle;
\draw[dashed] ({\rr-\p}, {2*\r+ 2*\rr-\p}) to ({\rr-\p}, {2*\r+2*\rr+\p}) to ({\rr+\p}, {2*\r+2*\rr+\p}) to node[draw=white, midway, right]{$G_b\itn{2}$} ({\rr+\p}, {2*\r+2*\rr-\p}) to cycle;

\foreach \x in {0,1} {
    \foreach \y in {0,1} {
        \node (c1\x\y) at ({2*\r+2*\rr+(1-2*\x)*\l}, {2*\r+  \rr+(1-2*\y)*\l}) {};
        \node (c2\x\y) at ({2*\r+  \rr+(1-2*\x)*\l}, {2*\r+2*\rr+(1-2*\y)*\l}) {};
    }
}
\foreach \x in {0,1} {
    \foreach \y in {0,1} {
        \draw (c1\x\y) to (v2);
        \draw (c2\x\y) to (v2);
    }
}
\foreach \i in {1,2} {
    \draw (c\i11) to (c\i01);
    \draw (c\i11) to (c\i10);
    \draw (c\i11) to (c\i00);
}
\draw[dashed] ({2*\r+  \rr-\p}, {2*\r+2*\rr-\p}) to node[draw=white, midway, left]{$G_c\itn{1}$} ({2*\r+  \rr-\p}, {2*\r+2*\rr+\p}) to ({2*\r+  \rr+\p}, {2*\r+2*\rr+\p}) to ({2*\r+  \rr+\p}, {2*\r+2*\rr-\p}) to cycle;
\draw[dashed] ({2*\r+2*\rr-\p}, {2*\r+  \rr-\p}) to ({2*\r+2*\rr-\p}, {2*\r+  \rr+\p}) to ({2*\r+2*\rr+\p}, {2*\r+  \rr+\p}) to ({2*\r+2*\rr+\p}, {2*\r+  \rr-\p}) to node[draw=white, midway, below]{$G_c\itn{2}$} cycle;

\foreach \x in {0,1} {
    \foreach \y in {0,1} {
        \node (d1\x\y) at ({2*\r+2*\rr+(1-2*\x)*\l}, {-\rr+(1-2*\y)*\l}) {};
        \node (d2\x\y) at ({2*\r+2*\rr+(1-2*\x)*\l}, { \rr+(1-2*\y)*\l}) {};
    }
}
\foreach \x in {0,1} {
    \foreach \y in {0,1} {
        \draw (d1\x\y) to (v3);
        \draw (d2\x\y) to (v3);
    }
}
\foreach \i in {1,2} {
    \draw (d\i11) to (d\i01);
    \draw (d\i11) to (d\i10);
    \draw (d\i11) to (d\i00);
}
\draw[dashed] ({2*\r+2*\rr-\p}, {\rr-\p}) to ({2*\r+2*\rr-\p}, {\rr+\p}) to node[draw=white, midway, above]{$G_d\itn{1}$} ({2*\r+2*\rr+\p}, {\rr+\p}) to ({2*\r+2*\rr+\p}, {\rr-\p}) to cycle;
\draw[dashed] ({2*\r+2*\rr-\p}, {-\rr-\p}) to ({2*\r+2*\rr-\p}, {-\rr+\p}) to ({2*\r+2*\rr+\p}, {-\rr+\p}) to ({2*\r+2*\rr+\p}, {-\rr-\p}) to node[draw=white, midway, below]{$G_d\itn{2}$} cycle;

\end{tikzpicture}
\caption{$G^{\XBox 2}$}
\end{subfigure}
\caption{A graph $G$ and its squared graph $G^{\XBox 2}$}
\label{figure:undir-squared-graph}
\end{figure}

Given an undirected graph $G$ with $n$ vertices, the number of vertices in $G^{\XBox 2^k}$ is upper bounded by $n^{3^k}$ since at most $|E(G)|+2|V(G)| \le n^2$ copies of $G$ are introduced in $G^{\XBox 2}$ when $n \ge 3$.

For a binary tree $T$ and $d \in \set{0,1,2,3}$, let $I_d(T) \subseteq V(T)$ be the set of nodes with degree exactly $d$. The following lemma is our main tool in the reduction.

\begin{lemma} \label{lemma:external-nodes}
For every non-empty binary tree $T$, we have 
\[
3|I_0(T)| + 2|I_1(T)| + |I_2(T)| = |V(T)| + 2,
\]
\end{lemma}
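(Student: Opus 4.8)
The plan is to reduce the identity to two elementary counting facts about trees and then take an appropriate linear combination. Throughout, write $n_d := |I_d(T)|$ for $d \in \{0,1,2,3\}$; since $T$ is a binary tree, every vertex has degree in $\{0,1,2,3\}$, so these quantities account for all vertices.

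First I would record the \emph{vertex-partition identity}. Because the sets $I_0(T), I_1(T), I_2(T), I_3(T)$ partition $V(T)$ by degree, we have
\[
n_0 + n_1 + n_2 + n_3 = |V(T)|.
\]
Next I would record the \emph{degree-sum identity}. By the handshake lemma the sum of all vertex degrees equals $2|E(T)|$, and since $T$ is a (nonempty) tree it has exactly $|E(T)| = |V(T)| - 1$ edges. Grouping the degree sum by degree class gives
\[
n_1 + 2n_2 + 3n_3 = 2|E(T)| = 2|V(T)| - 2.
\]

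The final step is to subtract the degree-sum identity from three times the vertex-partition identity. On the left-hand side the $n_3$ terms cancel, since $3n_3 - 3n_3 = 0$, leaving $3n_0 + 2n_1 + n_2$; on the right-hand side we obtain $3|V(T)| - (2|V(T)| - 2) = |V(T)| + 2$. This is exactly the claimed equality $3|I_0(T)| + 2|I_1(T)| + |I_2(T)| = |V(T)| + 2$.

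There is essentially no genuine obstacle here: the only facts used are that $T$ is a tree (so $|E(T)| = |V(T)| - 1$) and the handshake lemma, and the rest is a one-line elimination of the degree-$3$ count $n_3$. The lemma statement's hypothesis that $T$ is nonempty is exactly what guarantees $|E(T)| = |V(T)| - 1$ is meaningful, and one can sanity-check the degenerate case of a single vertex, where $n_0 = 1$ and all other counts vanish, giving $3 = 1 + 2$; no separate base case is needed since the derivation is uniform in $|V(T)| \ge 1$.
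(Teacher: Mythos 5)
Your proof is correct, and it takes a genuinely different route from the paper's. The paper proves the identity by induction on $|V(T)|$: it removes an arbitrary leaf $\ell$ and splits into three cases according to whether the neighbor $p$ of $\ell$ has degree $3$, $2$, or $1$, checking in each case that the quantity $3|I_0|+2|I_1|+|I_2|$ drops by exactly one. Your argument instead derives the identity in two lines from the partition of $V(T)$ by degree together with the handshake lemma and the fact that a tree on $|V(T)|$ vertices has $|V(T)|-1$ edges, eliminating $|I_3(T)|$ by taking $3\cdot(\text{vertex count}) - (\text{degree sum})$. This is shorter, avoids the case analysis entirely, and makes transparent \emph{why} the coefficients are $3,2,1,0$: they are $3-d$ for $d=0,1,2,3$. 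It also generalizes immediately to the degree-$\Delta$ setting the paper mentions in passing, yielding $\sum_d (\Delta-d)\,|I_d(T)| = (\Delta-2)|V(T)|+2$, whereas the inductive proof would need its case analysis redone. The only thing the paper's induction buys is self-containedness at the level of not invoking the handshake lemma by name, which is a negligible saving; your sanity check of the single-vertex case is a nice touch but, as you note, not logically required.
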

\begin{proof}
We prove by induction on $|V(T)|$. When $|V(T)| = 1$, the tree $T$ has one degree-0 node. Therefore $|I_0(T)| = 1$ and $|I_1(T)| = |I_2(T)|=0$, thus $3|I_0(T)|+2|I_0(T)|+|I_1(T)| = 3 = |V(T)| + 2$ holds.

Suppose that the statement holds for all binary trees with $t-1$ nodes for $t \ge 2$. Let $T$ be a binary tree with $t$ nodes. When $t \ge 2$ there are no degree-0 nodes, so we only need to verify that $2|I_1(T)| + |I_2(T)| = |V(T)| + 2$. Let $\ell$ be an arbitrary leaf node in $T$ and $p$ be its unique neighbor. Then $\deg_T(\ell)=1$. Removing $\ell$ results in a binary tree $T'$ with $t-1$ nodes. We have the following cases: 
\begin{enumerate}
\item If $p \in I_3(T)$, then $I_1(T') = I_1(T) \setminus \set{\ell}$ and $I_2(T') = I_2(T) \cup \set{p}$. 
\item If $p \in I_2(T)$, then $I_1(T') = \tp{I_1(T) \setminus \set{\ell}}\cup \set{p}$ and $I_2(T') = I_2(T) \setminus \set{p}$. 
\item If $p \in I_1(T)$, then $I_0(T')=\set{p}$, $I_1(T')=I_1(T) \setminus \set{p, \ell}$ and $I_2(T') = I_2(T)$.
\end{enumerate}
In all cases, we have
\begin{align*}
2|I_1(T)| + |I_2(T)| = 3|I_0(T')| + 2|I_1(T')| + |I_2(T')| + 1 = |V(T')| + 3 = t + 2
\end{align*}
where the second equality is due to the induction hypothesis. This completes the inductive step.
\end{proof}

The next lemma shows that $OPT(G)$ is super-multiplicative under the squaring operation. 

\begin{lemma} \label{lem:ud-squaredopt}
$OPT\tp{G^{\XBox 2}} \ge 2OPT\tp{G}^2 + 2OPT\tp{G}$.
\end{lemma}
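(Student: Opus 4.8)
The plan is to take an optimal binary tree $T_1$ in $G$ with $|V(T_1)| = OPT(G) =: t$ and to build explicitly a binary tree $T_2$ in $G^{\XBox 2}$ of size exactly $2t^2 + 2t$; since $OPT(G^{\XBox 2}) \ge |V(T_2)|$, this proves the lemma. I would assume $t \ge 2$, the case $t=1$ being immediate. The idea is to keep the $t$ original vertices of $T_1$ as a \emph{skeleton} of $T_2$ and then inflate it by embedding a full copy of $T_1$ into \emph{every} edge copy and into as many pendant copies as the degree budget allows, using the copies supplied by Definition~\ref{def:ud-squared}.

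First I would route the skeleton through the edge copies: for each edge $\{u,v\} \in E(T_1)$, embed a copy of $T_1$ inside the edge copy $G_{u,v}$ and reconnect $u$ and $v$ to it by joining each to a \emph{distinct} leaf of that embedded copy (two leaves exist since $t\ge 2$). This is feasible because $u$ and $v$ are adjacent to every vertex of $G_{u,v}$; it adds $t$ vertices per edge, and—because each edge of a tree is a bridge—reconnecting $u$ to $v$ through the copy creates no cycle and preserves connectivity, so contracting each embedded copy back to a single edge recovers $T_1$. Second, at each skeleton vertex $v$ I would spend the degree slots not used by $T_1$ on pendant copies: $v$ has $3-\deg_{T_1}(v)$ free slots, and I attach one embedded copy of $T_1$ per free slot inside the pendant copies $G_v\itn{1}, G_v\itn{2}$, joining $v$ to one leaf of each. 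Since there are only two pendant copies per vertex, the number usable at $v$ is $\min\{3-\deg_{T_1}(v),\,2\}$.

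The degree checks are routine: a skeleton vertex $v$ finishes with degree $\deg_{T_1}(v) + \min\{3-\deg_{T_1}(v),2\} \le 3$, and each entry leaf of an embedded copy rises from degree $1$ to $2$; connectivity and acyclicity hold because every embedded copy is attached either as a single tree-edge gadget (edge copies) or at a single vertex (pendant copies). The crux is the vertex count, and this is where I expect the only real subtlety and where Lemma~\ref{lemma:external-nodes} does the work. The skeleton contributes $t$, the $t-1$ edge copies contribute $(t-1)t$, and the pendant copies contribute $\bigl(\sum_{v}\min\{3-\deg_{T_1}(v),2\}\bigr)\,t$. Since $t\ge 2$ forces $I_0(T_1)=\varnothing$, we have $\min\{3-\deg_{T_1}(v),2\}=2$ for $v\in I_1(T_1)$ and $=1$ for $v\in I_2(T_1)$, so the pendant-copy count equals $2|I_1(T_1)| + |I_2(T_1)| = |V(T_1)| + 2 = t+2$ by Lemma~\ref{lemma:external-nodes}. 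Hence the total is $t + (t-1)t + (t+2)t = 2t^2 + 2t$, exactly as required, giving $OPT(G^{\XBox 2}) \ge 2\,OPT(G)^2 + 2\,OPT(G)$. The key point to get right is the interplay between the degree budget at each skeleton vertex (shared by its edge-copy and pendant-copy connections) and the handshake identity of Lemma~\ref{lemma:external-nodes}: it is precisely this identity that promotes the naive $\approx t^2$ estimate to the factor-$2$ stronger bound.
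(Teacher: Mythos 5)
Your proof is correct and follows essentially the same construction as the paper: embed a copy of $T_1$ in each edge copy along $E(T_1)$, fill the leftover degree slots of the skeleton with pendant copies, and invoke Lemma~\ref{lemma:external-nodes} to count $2|I_1(T_1)|+|I_2(T_1)|=t+2$ pendant attachments, giving $t^2+(t+2)t=2t^2+2t$. The only (immaterial) difference is that you attach $u$ and $v$ to two distinct leaves of the embedded copy in $G_{u,v}$, whereas the paper attaches both to a single shared leaf; both choices respect the degree bound and preserve acyclicity.
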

\begin{proof}
Let $T_1$ be an optimal binary tree in $G$, i.e. $|V(T_1)| = OPT(G)$. We construct a binary tree $T_2$ in $G^{\XBox 2}$ as follows:
\begin{enumerate}
\item For $\set{u, v} \in E\tp{G}$, let $T_{u,v}$ be the optimal binary tree (identical to $T_1$) in the edge copy $G_{u,v}$. For $v \in V\tp{G}$ and $i \in \set{1,2}$, let $T_v\itn{i}$ be the optimal binary tree in the pendant copy $G_v\itn{i}$.
\item For every edge $\set{u,v} \in T_1$, add $T_{u,v}$ into $T_2$ along with two edges $\set{u, \ell}$ and $\set{v, \ell}$ where $\ell$ is an arbitrary leaf node in $T_{u,v}$. This step generates $|V(T_1)| + |V(T_1)| \cdot (|V(T_1)|-1) = |V(T_1)|^2$ nodes in $T_2$, since the number of edges in $E(T_1)$ is $|V(T_1)|-1$. 
\item For $v \in I_1(T_1)$, add both $T_v\itn{1}$ and $T_v\itn{2}$ to $T_2$ by connecting $v$ to $\ell\itn{1}$ and $\ell\itn{2}$, where $\ell\itn{1}$ and $\ell\itn{2}$ are arbitrary leaf nodes in $T_v\itn{1}$ and $T_v\itn{2}$, respectively. For $u \in I_2(T_1)$, add only $T_u\itn{1}$ to $T_2$ by connecting $u$ to a leaf node in $T_u\itn{1}$. By Lemma \ref{lemma:external-nodes}, this step generates $|V(T_1)| \cdot (|V(T_1)|+2)$ nodes in $T_2$.
\end{enumerate}
Since $T_1$ is a binary tree, it follows that $T_2$ is a binary tree. Moreover, the size of $V(T_2)$ is 
\begin{align*}
|V(T_2)| = |V(T_1)|^2 + |V(T_1)| \cdot (|V(T_1)|+2) = 2OPT\tp{G}^2 + 2OPT\tp{G},
\end{align*} 
which cannot exceed $OPT\tp{G^{\XBox 2}}$.
\end{proof}

The next two lemmas show that a large binary tree in $G^{\XBox 2}$ can be used to obtain a large binary tree in $G$.

\begin{lemma} \label{lem:BSTrecover}
Given $T_2$ as a binary tree in $G^{\XBox 2}$, there is a linear-time (in the size of $G^{\XBox 2}$) algorithm that finds a binary tree $T'_1$ in $G$ with vertex set $V(T_2) \cap V(G)$.
\end{lemma}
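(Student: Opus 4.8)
The plan is to read off $T_1'$ as the \emph{trace} of $T_2$ on the original vertices. Write $S := V(T_2) \cap V(G)$; the goal is to exhibit a spanning binary tree of $S$ whose edges all come from $E(G)$. I would first record three structural facts about the squared graph of Definition~\ref{def:ud-squared}: (i) no two vertices of $V(G)$ are adjacent in $G^{\XBox 2}$, since every original edge $\{u,v\}$ was replaced by the edge copy $G_{u,v}$; (ii) two non-original vertices are adjacent only if they lie in the same copy, because distinct copies share no vertices and are joined only through original vertices; and (iii) the vertices of a single copy are adjacent to at most two original vertices---exactly $u$ and $v$ for an edge copy $G_{u,v}$, and exactly $v$ for a pendant copy $G_v\itn{i}$.

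Building on these facts, I would analyze the forest $T_2 - S$. By (ii) each of its connected components lies entirely inside one copy, and by (iii) each component is adjacent in $T_2$ to at most two vertices of $S$; moreover, when a component touches two original vertices $u,v$ it must sit in the edge copy $G_{u,v}$, so $\{u,v\} \in E(G)$. Contracting every such component to a single \emph{blob} node turns the tree $T_2$ into a tree $\hat T$ on the node set $S \cup \{\text{blobs}\}$; by (i) every edge of $\hat T$ joins an $S$-node to a blob, and by (iii) every blob has degree at most $2$ in $\hat T$. I would then form $T_1'$ by smoothing each degree-$2$ blob---replacing the length-two path through it by the edge $\{u,v\}$ joining its two $S$-neighbours, a genuine edge of $G$---and deleting every blob of degree at most $1$. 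Because smoothing and leaf deletion preserve both connectedness and acyclicity of a tree, $T_1'$ is a tree; its vertex set is exactly $S$, and all its edges lie in $E(G)$. Finally, since contraction never raises the degree of an original vertex and smoothing leaves it unchanged, $\deg_{T_1'}(u) \le \deg_{\hat T}(u) \le \deg_{T_2}(u) \le 3$ for every $u \in S$, so $T_1'$ is a binary tree. The whole construction is a single traversal of $T_2$ that identifies the components of $T_2 - S$ and their attachment points, hence runs in time linear in the size of $G^{\XBox 2}$.

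The step I expect to be the crux is arguing that $T_1'$ is acyclic, i.e.\ that the trace of $T_2$ on $S$ is a tree rather than a graph with cycles. The naive map ``connect $u,v\in S$ whenever the $T_2$-path between them avoids $S$'' can fail to be acyclic in general: a single vertex adjacent to three vertices of $S$ would project to a triangle. What saves us is fact~(iii)---the degree bound that each copy, and therefore each contracted blob, touches at most two original vertices. This forces every blob in $\hat T$ to have degree at most $2$, so the trace is obtained purely by smoothing and leaf-deletion, operations that manifestly keep a tree a tree; a multi-edge between $u$ and $v$ is likewise impossible, since two blobs joining the same pair would close a cycle in the tree $\hat T$. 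Pinning down this degree bound, and verifying that it is exactly what converts the potentially cyclic projection into a bona fide binary tree, is the heart of the argument; the handshake identity of Lemma~\ref{lemma:external-nodes} is not needed here, as it is reserved for the accompanying size estimate.
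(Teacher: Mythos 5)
Your proposal is correct and is essentially the paper's own construction: the paper defines $T_1'$ by adding $\{u,v\}$ exactly when the unique $T_2$-path from $u$ to $v$ passes through the edge copy $G_{u,v}$, which is precisely your smoothing of degree-2 blobs. The only difference is that the paper merely asserts acyclicity, connectivity, and the degree bound ``by construction,'' whereas your contraction argument (each component of $T_2 - S$ lies in one copy and meets at most two original vertices) supplies the verification the paper leaves implicit.
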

\begin{proof}
Given a binary tree $T_2$ in the squared graph $G^{\XBox 2}$, the algorithm finds a binary tree $T'_1$ in $G$ by going through every edge $\set{u, v} \in E(G)$ and adding it to $T'_1$ whenever the unique path from $u$ to $v$ in $T_2$ goes through the edge copy $G_{u,v}$. We discard the edge $\set{u, v}$ if there does not exist a path through $G_{u,v}$ connecting $u$ and $v$ in $T_2$. 

By construction, the subgraph returned by the algorithm has maximum degree 3 and is acyclic. Moreover, it is connected since the path between any two nodes $u, v \in T_2$ is preserved in $T'_1$. Therefore, $T'_1$ is a binary tree in $G$ with vertex set $V(T_2) \cap V(G)$.
\end{proof}

\begin{lemma} \label{lem:ud-sqrtfactor}
For every $\alpha \in (0,1]$, given a binary tree $T_2$ in $G^{\XBox 2}$ with size
\begin{align*}
|V\tp{T_2}| \ge \alpha OPT\tp{G^{\XBox 2}} - 1,
\end{align*}
there is a linear-time (in the size of $G^{\XBox 2}$) algorithm that finds a binary tree $T_1$ in $G$ with size 
\begin{align*}
|V\tp{T_1}| \ge \sqrt{\alpha} OPT\tp{G} - 1.
\end{align*}
\end{lemma}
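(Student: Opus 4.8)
The plan is to recover $T_1$ as the largest binary tree among a small family of natural candidates extracted from $T_2$, and then argue that this largest candidate must be big precisely because $T_2$ is big. First I would set $W := V(T_2) \cap V(G)$ and $M := |W|$, and invoke Lemma~\ref{lem:BSTrecover} to obtain a binary tree $T_1'$ in $G$ on the vertex set $W$; this is the first candidate. For every edge copy and every pendant copy $C$ of $G^{\XBox 2}$, the restriction $T_2 \cap C$ is a forest each of whose components is (isomorphic to) a binary tree in $G$, since $C$ is a copy of $G$ and degrees only drop under restriction; I would take the largest such component over all copies as the second family of candidates. Setting $s$ to be the maximum size among all these candidates, the output $T_1$ is the candidate attaining $s$, and all of them are produced in linear time by a single traversal together with Lemma~\ref{lem:BSTrecover}.

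The heart of the argument is the inequality $|V(T_2)| \le 2s^2 + 2s$, which mirrors the forward bound of Lemma~\ref{lem:ud-squaredopt} and is in fact tight against it. To prove it I would count components of $T_2 - W$. Because $G^{\XBox 2}$ has no edge joining two vertices of $V(G)$, every edge of $T_2$ either joins $W$ to a copy (a \emph{crossing} edge) or lies inside a single copy; hence every edge incident to a vertex $w \in W$ is crossing, so the number of crossing edges equals $\sum_{w\in W}\deg_{T_2}(w) \le 3M$. Contracting each component of $T_2 - W$ to a single node yields a tree on $M + c$ nodes, where $c$ is the number of components, whose edges are exactly the crossing edges; acyclicity of $T_2$ forbids a component from meeting one $w$ twice, so this contracted graph is a simple tree with $M + c - 1$ edges. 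Therefore $c = (\text{crossing edges}) - M + 1 \le 2M + 1$. Since $M \le s$ and each of the $c$ components has size at most $s$, we obtain $|V(T_2)| = M + \sum_{\text{components}} |V(\cdot)| \le s + (2s+1)s = 2s^2 + 2s$.

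Finally I would combine this with the super-multiplicativity of Lemma~\ref{lem:ud-squaredopt}. The hypothesis gives $|V(T_2)| \ge \alpha\, OPT(G^{\XBox 2}) - 1 \ge \alpha(2\,OPT(G)^2 + 2\,OPT(G)) - 1$, whence $2s^2 + 2s \ge 2\alpha\, OPT(G)^2 + 2\alpha\, OPT(G) - 1$. Writing $g(x) = 2x^2 + 2x$, which is increasing on $[0,\infty)$, and evaluating $g(\sqrt{\alpha}\, OPT(G) - 1) = 2\alpha\, OPT(G)^2 - 2\sqrt{\alpha}\, OPT(G)$, the claim reduces to $2\sqrt{\alpha}\, OPT(G)(\sqrt{\alpha} + 1) \ge 1$, which holds whenever $\sqrt{\alpha}\, OPT(G) \ge 1$; in that regime monotonicity of $g$ on $[0,\infty)$ gives $s \ge \sqrt{\alpha}\, OPT(G) - 1$, and when $\sqrt{\alpha}\, OPT(G) < 1$ the target bound is trivial since $s \ge 0 > \sqrt{\alpha}\, OPT(G) - 1$. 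This yields $|V(T_1)| = s \ge \sqrt{\alpha}\, OPT(G) - 1$.

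I expect the main obstacle to be the component-counting bound $c \le 2M + 1$: pinning the constant exactly (rather than the weaker bound implicit in the Karger--Motwani--Ramkumar edge-copy projection) is what makes the recovered factor match $\sqrt{\alpha}$, and it relies crucially on the absence of edges between original vertices and on the acyclicity of $T_2$. It is worth noting that, in contrast to the forward direction (Lemma~\ref{lem:ud-squaredopt}), which appeals to the handshake identity of Lemma~\ref{lemma:external-nodes} to account for pendant copies, this reverse direction needs only the degree-$3$ bound and acyclicity of $T_2$. The final arithmetic linking $|V(T_2)| \le 2s^2 + 2s$ to $s \ge \sqrt{\alpha}\, OPT(G) - 1$ is then routine, modulo the small-$OPT(G)$ edge case handled above.
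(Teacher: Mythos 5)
Your proof is correct, and it follows the paper's overall skeleton --- extract the projected tree $T_1'$ on $W = V(T_2)\cap V(G)$ via Lemma~\ref{lem:BSTrecover}, treat the components of $T_2 - W$ (each living in a single edge or pendant copy) as the remaining candidates, bound the number of such components by $2|W|+1$, and finish with a quadratic inequality against Lemma~\ref{lem:ud-squaredopt}. The genuine difference is in how the pivotal bound on the number of components is obtained. The paper proves it (Claim~\ref{claim:few-trees}) by splitting the forest into trees attached to one versus two vertices of $W$, bounding the two-attachment trees by $|E(T_1')|$ and the one-attachment trees by $3|I_0(T_1')|+2|I_1(T_1')|+|I_2(T_1')|$, and then invoking the handshake-type identity of Lemma~\ref{lemma:external-nodes}. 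You instead contract each component to a node, observe that every edge of $T_2$ incident to $W$ is a crossing edge (using that $G^{\XBox 2}$ has no edge between two original vertices), and read off $c = (\text{crossing edges}) - |W| + 1 \le 3|W| - |W| + 1$ from the edge count of the contracted tree. Both arguments ultimately rest on the degree-$3$ bound and acyclicity, but yours avoids Lemma~\ref{lemma:external-nodes} and the $\mathcal{F}_1/\mathcal{F}_2$ case analysis entirely, and it also absorbs the paper's separately-handled case $W=\varnothing$ into the same computation; what it gives up is that Lemma~\ref{lemma:external-nodes} is still needed elsewhere (in Lemma~\ref{lem:ud-squaredopt}), so the paper's route reuses machinery it must develop anyway. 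Your final arithmetic, phrased directly through monotonicity of $g(x)=2x^2+2x$ rather than by contradiction, is equivalent to the paper's, and you correctly dispatch the degenerate regime $\sqrt{\alpha}\,OPT(G) < 1$.
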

\begin{proof}
Running the algorithm suggested in Lemma \ref{lem:BSTrecover} gives us a binary tree $T_1'$ in $G$ with vertex set $V\tp{T_1'} = V\tp{T_2} \cap V(G)$ in linear time. Therefore if $|V\tp{T_2} \cap V(G)| \ge \sqrt{\alpha} OPT(G) > \sqrt{\alpha} OPT(G) - 1$ then the lemma is already proved. So, we may assume that $|V\tp{T_2} \cap V(G)| < \sqrt{\alpha} OPT(G)$.

Let us first deal with the case when $V\tp{T_2} \cap V\tp{G} = \varnothing$. In this case $T_2$ completely resides within some edge copy or pendant copy of $G^{\XBox 2}$. That means $T_2$ is already a binary tree in $G$ with size  
\begin{align*}
\abs{V\tp{T_2}} \ge \alpha OPT\tp{G^{\XBox 2}} - 1 \ge \tp{\sqrt{\alpha} OPT\tp{G}}^2 - 1 \ge \sqrt{\alpha} OPT\tp{G} - 1,
\end{align*}
where the second inequality uses Lemma \ref{lem:ud-squaredopt} and the third inequality assumes $\sqrt{\alpha}OPT\tp{G} \ge 1$.

Suppose $T_1'$ is not the empty tree. Removing vertices in $V(G)$ from $T_2$ results in a forest with each connected component of the forest residing completely within one of the copies of $G$ (either edge or pendent copy). Let $\+F$ be the set of trees in this forest. 
\begin{claim}\label{claim:few-trees}
We have 
\[
|\+F|< 2\sqrt{\alpha} OPT(G) + 1.
\]
\end{claim}
\begin{proof}
Each tree $T\itn{j} \in \+F$ is connected to one or two vertices in $V\tp{T_2} \cap V(G)$. Let $\+F_j$ be the set of trees connected to exactly $j$ vertices in $V(G) \cap V\tp{T_2}$. Then $\+F = \+F_2 \cup \+F_1$. We now bound the size of $\+F_2$ and $\+F_1$. If $T\itn{j}$ is connected to two vertices $u, v \in V(G) \cap V\tp{T_2}$, then $\set{u, v}$ must be an edge in $T_1'$. Moreover, if there are two distinct trees $T\itn{j_1}$, $T\itn{j_2}$ both connected to $u, v \in V\tp{G} \cap V\tp{T_2}$ then there will be a cycle. Therefore $|\+F_2| \le |E\tp{T_1'}| = |V\tp{T_1'}|-1$. As to $\+F_1$, for $j \in \set{0,1,2,3}$, every vertex $v \in I_j(T_1')$ is connected to at most $3-j$ trees in $\+F_1$. Every vertex $v \in V(G)\setminus  V\tp{T_1'}$ is not connected to any tree in $\+F_1$. Therefore $|\+F_1| \le 3|I_0(T_1')|+2|I_1(T_1')| + |I_2(T_1')|$. This gives an upper bound on $|\+F|$.
\begin{align*}
|\+F| &= |\+F_1| + |\+F_2| \le |V\tp{T_1'}|-1 + 3|I_0(T_1')| + 2|I_1(T_1')| + |I_2(T_1')| \\
 &= 2|V\tp{T_1'}| + 1 = 2|V\tp{T_2} \cap V(G)| + 1 < 2\sqrt{\alpha} OPT(G) + 1,
\end{align*}
where the second equality is due to Lemma \ref{lemma:external-nodes}. 
\end{proof}
\begin{claim}
There exists a tree $T^* \in \+F$ with size at least
\begin{align*}
|V\tp{T^*}| \ge \sqrt{\alpha} OPT(G) - 1.
\end{align*}
\end{claim}
\begin{proof}
Suppose not. Then every tree $T\in \+F$ has $|V\tp{T}|<\sqrt{\alpha} OPT(G) - 1$. Then, 
\begin{align*}
|V\tp{T_2}| &= |V\tp{T_2} \cap V(G)| + \sum_{T\itn{j} \in \+F}\abs{V\tp{T\itn{j}}} \\
 &< \sqrt{\alpha} OPT(G) +\tp{\sqrt{\alpha} OPT(G) - 1} \tp{2\sqrt{\alpha} OPT(G) + 1} \quad \quad \quad \quad \text{(By Claim \ref{claim:few-trees})} \\
 &= 2\alpha OPT(G)^2 - 1 \\
 &< \alpha \tp{2OPT(G)^2 + 2OPT(G)} - 1 \\
 &\le \alpha OPT\tp{G^{\XBox 2}} - 1
\end{align*}
which is a contradiction. The last inequality here is due to Lemma \ref{lem:ud-squaredopt}.
\end{proof}

Since $T^*$ resides in one of the copies of $G$, it is a binary tree in $G$. A DFS would find $T^*$ in linear time.
\end{proof}

Now we are ready prove the main theorem of this subsection.

\begin{proof}[Proof of Theorem \ref{thm:ud-ptas}]
Suppose that we have a polynomial-time algorithm $\+A$ that achieves an $\alpha$-approximation for \umaxBT. Given an undirected graph $G$ and $\eps > 0$, let
\begin{align*}
k := 1 + \ceilfit{ \log_2\frac{\log_2 \alpha}{\log_2 (1-\eps) }}
\end{align*}
be an integer constant that depends on $\alpha$ and $\eps$. We construct $G^{\XBox 2^k}$ and run algorithm $\+A$ on it. We get a binary tree in $G^{\XBox 2^k}$ of size at least $\alpha OPT\tp{G^{\XBox 2^k}} - 1$. By Lemma \ref{lem:ud-sqrtfactor}, we can obtain a binary tree in $G$ of size at least  
\begin{align*}
\alpha^{2^{-k}} OPT(G) - 1 \ge \alpha^{2^{-k+1}} OPT(G) \ge (1-\eps) OPT(G). 
\end{align*}
The first inequality holds as long as
\begin{align*}
OPT(G) \ge \frac{1}{\sqrt{1-\eps} - (1-\eps)} \ge \frac{1}{\alpha^{2^{-k}} - \alpha^{2^{-k+1}}}.
\end{align*}
We note that if $OPT(G)$ is smaller than $1/\tp{\sqrt{1-\eps} - (1-\eps)}$ which is a constant, then we can solve the problem exactly by brute force in polynomial time. Finally, we also observe that for fixed constant $\eps$, the running time of this algorithm is polynomial since there are at most $n^{3^k}=n^{O(1)}$ vertices in the graph $G^{\XBox 2^k}$.
\end{proof}

\subsection{APX-hardness}

In this section, we show that \umaxBT is APX-hard. The reduction is from the following problem, denoted as \textsc{TSP(1,2)}.

\begin{problem}{TSP(1,2)}
	Given: A complete undirected graph $K_n$ with edge weights $w_{ij}$ where $w_{ij} \in \set{1,2}$ $\forall i,j \in [n]$.
	
	Goal: A tour with minimum weight which starts and finishes at the same vertex and visits every other vertex exactly once.
\end{problem}

For an instance $K_n=([n], E_1 \cup E_2)$ of \textsc{TSP(1,2)} where $E_1$ is the set of edges with weight 1, and $E_2$ is the set of edges with weight 2, define two subgraphs $S_1=([n],E_1)$ and $S_2=([n],E_2)$. It is convenient to think of $S_1$ and $S_2$ as unweighted graphs.

\begin{theorem}[\cite{papadimitriou1993traveling}, Theorem 9 of \cite{karger1997approximating}]
\textsc{TSP(1,2)} is APX-hard even on instances with optimal value $n$, i.e. instances whose associated subgraph $S_1$ has a Hamiltonian cycle.
\end{theorem}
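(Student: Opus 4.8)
The plan is to reduce to a \emph{gap} version of Hamiltonicity on bounded-degree graphs, exploiting the tight link between \textsc{TSP(1,2)} and covering the weight-$1$ subgraph by few paths. First I would record two elementary facts. Every tour is a Hamiltonian cycle of $K_n$ whose weight equals $n$ plus the number of weight-$2$ edges it traverses; since every tour has weight at least $n$, if $S_1$ has a Hamiltonian cycle then that cycle is itself a weight-$n$ tour and the optimum is exactly $n$. Conversely, if the optimum equals $n+j$, then deleting the $j$ weight-$2$ edges of an optimal tour partitions $V$ into $j$ vertex-disjoint paths all of whose edges lie in $E_1$, i.e.\ a path cover of $S_1$ by $j$ paths. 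Consequently it suffices to produce a constant $\epsilon>0$ and a polynomial-time-constructible family of bounded-degree graphs for which it is $\NP$-hard to distinguish \emph{(YES)} $S_1$ has a Hamiltonian cycle, from \emph{(NO)} every vertex-disjoint path cover of $S_1$ uses at least $\epsilon n$ paths: the former forces the optimum to be $n$, and the latter forces it to be at least $(1+\epsilon)n$.

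To manufacture such a gap, I would give an L-reduction from an APX-hard \emph{bounded-occurrence} constraint problem, e.g.\ \textsc{Max-3SAT}$(B)$ in which every variable occurs in at most $B=O(1)$ clauses (known to be APX-hard). From a formula $\phi$ I would assemble $S_1$ from the standard Hamiltonicity gadgets: a variable gadget whose two admissible traversal orientations encode the truth value, and a clause gadget that a Hamiltonian cycle can thread through \emph{if and only if} at least one incident literal is set true, with the usual ``choice'' edges wiring literals to clauses. Because each variable appears only $O(1)$ times, the maximum degree stays bounded and $|V(S_1)| = \Theta(\#\text{clauses})$, which is what keeps the eventual gap a constant factor rather than $1+o(1)$.

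The two directions then read off from the gadget semantics. For \emph{completeness}, a satisfying assignment routes a single Hamiltonian cycle through every gadget, so $S_1$ is Hamiltonian and the optimum is precisely $n$; this is exactly the promised ``optimal value $n$'' family. For \emph{soundness}, I would show that any path cover of $S_1$ using few paths induces a Boolean assignment, and that each unsatisfied clause forces the cover to \emph{break} inside the corresponding clause gadget, contributing at least one additional path (equivalently one more weight-$2$ edge in the tour). Since a constant fraction of clauses must be violated in the NO-case and the instance size is linear in the number of clauses, this yields a path cover of size $\Omega(n)$, hence optimum at least $(1+\epsilon)n$. The linear relation between violated clauses and extra paths, combined with bounded degree, is precisely the L-reduction condition that transports APX-hardness.

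The step I expect to be the main obstacle is the \emph{soundness} analysis of the gadgets: one must rule out cheap global repairs, i.e.\ show that a near-optimal tour cannot spend its few weight-$2$ edges to shortcut across many gadgets simultaneously and thereby ``satisfy'' far more clauses than the induced assignment actually does. This requires engineering the variable and clause gadgets so that each local inconsistency is \emph{locally charged} to a distinct, unamortizable break, while keeping the degree uniformly bounded so the gap survives as a fixed constant. Once the gadgets are calibrated to force at least one extra path per violated clause up to a constant factor, the remaining bookkeeping that converts this into the stated APX-hardness is routine.
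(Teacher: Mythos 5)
This statement is not proved in the paper at all: it is imported as a black box, with the proof credited to Papadimitriou and Yannakakis \cite{papadimitriou1993traveling} and to Theorem 9 of Karger, Motwani and Ramkumar \cite{karger1997approximating}. Your outline does track the actual argument in those references: the observation that a tour of weight $n+j$ is equivalent to a vertex-disjoint path cover of $S_1$ by $j$ paths, and an L-reduction from a bounded-occurrence satisfiability problem using Hamiltonicity-style variable and clause gadgets in which each violated clause forces one additional path (equivalently one additional weight-$2$ edge), is precisely the Papadimitriou--Yannakakis strategy. So the plan is the right one.

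However, as a proof the proposal has two genuine gaps. First, the entire technical content of the theorem is the construction of degree-bounded variable and clause gadgets together with the soundness argument that a near-Hamiltonian path cover of $S_1$ induces an assignment and cannot amortize its breaks across many gadgets; you explicitly defer this step, so nothing is actually established. Second, to get APX-hardness \emph{on instances with optimal value $n$} you need the source gap problem to have perfect completeness: YES instances must be genuinely satisfiable so that they map to Hamiltonian $S_1$, while NO instances (at most a $(1-\delta)$-fraction of clauses satisfiable) map to instances with optimum at least $(1+\epsilon)n$. The bare statement ``Max-3SAT$(B)$ is APX-hard'' does not supply this promise-gap with perfect completeness; you need the PCP-based fact that bounded-occurrence Max-3SAT is NP-hard to approximate even on \emph{satisfiable} instances (or an equivalent gap Hamiltonicity statement). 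Without that, your reduction only yields hardness of approximating TSP$(1,2)$ on general instances, not on the restricted family where $S_1$ is promised to be Hamiltonian, which is the form the paper actually uses in its Lemma on the reduction from \textsc{TSP(1,2)} to \umaxBT.
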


The following lemma shows that a binary tree in $K_n$ with a small number of degree-3 vertices can be transformed into a path without too much increase in total weight.

\begin{lemma} \label{lem:TreeToPath}
Let $K_n$ be a weighted complete graph with edge weights in $\set{1, 2}$. Let $T$ be a binary tree in $K_n$ with $\ell$ nodes and let $w$ be the sum of the edge weights of $T$. If at most $d$ vertices in $T$ have degree 3, then in linear time we can find a path in $K_n$ with length $\ell$ and total weight at most $w+d$.
\end{lemma}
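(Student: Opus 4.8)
The plan is to root $T$ at an arbitrary leaf and then extract a Hamiltonian path of the tree by a single depth-first traversal, paying one extra unit of weight each time the traversal passes through a branching (degree-$3$) node. First I would root $T$ at any degree-$1$ vertex $r$. Under this rooting, the vertices with two children are exactly the degree-$3$ vertices of $T$, the vertices with one child are exactly the degree-$2$ vertices, and the childless vertices are the remaining leaves; since every non-root vertex contributes precisely its edge to the parent, this correspondence is immediate and in particular there are at most $d$ two-child vertices.

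Next I would define, recursively, a procedure $\mathrm{Path}(v)$ that outputs a simple path visiting exactly the vertices of the subtree rooted at $v$ and starting at $v$. If $v$ is childless, $\mathrm{Path}(v)=(v)$. If $v$ has a single child $c$, then $\mathrm{Path}(v)$ prepends $v$ to $\mathrm{Path}(c)$, using the tree edge $\{v,c\}$. If $v$ has two children $c_1,c_2$, let $Q_1=\mathrm{Path}(c_1)$ end at a leaf $\ell_1$ and let $Q_2=\mathrm{Path}(c_2)$; then $\mathrm{Path}(v)$ is $v$, followed by $Q_1$, then the shortcut edge $\{\ell_1,c_2\}$, then $Q_2$. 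The output of $\mathrm{Path}(r)$ is the desired path. Because the vertex sets of the subtrees rooted at $c_1$ and $c_2$ are disjoint and disjoint from $\{v\}$, the result is a simple path on all $\ell$ vertices and hence has length $\ell$; and since $K_n$ is complete, every shortcut edge is available.

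Then I would account for the weight. In this construction every edge of $T$ is used except the single edge $\{v,c_2\}$ at each two-child vertex $v$, while exactly one shortcut edge $\{\ell_1,c_2\}$ is introduced per two-child vertex. Writing $t\le d$ for the number of two-child (equivalently degree-$3$) vertices, the weight of the output path equals $w-\sum_{v}w(\{v,c_2\})+\sum_{v}w(\{\ell_1,c_2\})$, where both sums range over the two-child vertices. Each removed tree edge has weight at least $1$ and each added shortcut edge has weight at most $2$, so each of the $t$ branch vertices contributes a net change of at most $2-1=1$, giving total weight at most $w+t\le w+d$. The whole construction is one depth-first traversal, in which each recursive call performs $O(1)$ work beyond recursing on its children, so it runs in time linear in the size of $T$.

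The main subtlety I would watch is the edge-accounting and simplicity of the path at branch vertices: I must ensure that at each degree-$3$ vertex \emph{exactly} one tree edge is dropped and \emph{exactly} one shortcut is inserted (so the weight increases by at most $1$ there and not more), and that the shortcut joins the \emph{end} of the first child-path to the \emph{start} of the second, so that no vertex is repeated and the two subtrees are genuinely concatenated. The other point requiring care is confirming that the rooted two-child vertices coincide exactly with the degree-$3$ vertices of $T$, which keeps the count at $\le d$ rather than anything larger; rooting at a leaf makes this transparent.
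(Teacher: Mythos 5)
Your proof is correct and follows essentially the same approach as the paper: both arguments turn $T$ into a path by a traversal that, at each degree-$3$ vertex, drops one tree edge and inserts one shortcut edge of weight at most $2$ in place of an edge of weight at least $1$, for a net increase of at most $1$ per branching vertex (the paper phrases this as an iterative bottom-up merge of the two child-paths and itself remarks that the result is effectively a post-order traversal, which is exactly your recursive construction). The only cosmetic difference is that the paper roots at any vertex of degree at most $2$ while you root at a leaf; both choices make the two-child vertices coincide with the degree-$3$ vertices, so the accounting is identical.
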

\begin{proof}
Pick an arbitrary root node $r$ with $\deg_T(r) \le 2$ and consider the $r$-rooted tree $T$. Perform the following operation on $T$ bottom-up to convert every vertex of degree-$3$ in $T$ into a vertex of degree-$2$. 
\begin{enumerate}
\item Find a vertex $v \in T$ with degree 3 such that the two subtrees of $v$ are paths. Suppose that the path on the left is $(a_1, a_2, \cdots, a_p)$ and the path on the right is $(b_1, b_2, \cdots, b_q)$ with $a_1$ and $b_1$ being adjacent to $v$ in $T$.
\item Merge the two paths into a longer path $(a_1, \cdots, a_p, b_1, b_2, \cdots, b_q)$ by replacing edge $(v, b_1)$ with $(a_p, b_1)$.
\end{enumerate}
This operation converts $v$ into a degree-$2$ vertex and increases the total weight of $T$ by at most 1 (in case of $w(v, b_1)=1$ and $w(a_p, b_1)=2$). Since no new degree-$3$ vertices are introduced during the operation, the total number of degree-$3$ vertices decreases by one. The path claimed by the lemma is thus obtained by recursively performing this operation $d$ times. We note that the final path is effectively a post-order traversal of $T$.
\end{proof}

We also need the following structural result. 
\begin{lemma}\label{lemma:lot-of-pendants-few-degree-3}
Let $H=(V,E)$ be a graph with $n$ vertices and let $\tilde{H}=(V\cup V', E\cup E')$ where $V':=\{v':v\in V\}$ and $E':=\{\{v,v'\}: v\in V\}$. Suppose we have a binary tree $\tilde{T}$ in $\tilde{H}$ of size at least $(2-\eps)n$. Then, the graph $T$ obtained by restricting $\tilde{T}$ to $H$ is a binary tree with at most $\eps n$  vertices of degree $3$. 
\end{lemma}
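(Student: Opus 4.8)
The plan is to track, for each original vertex retained by $\tilde T$, whether its attached pendant is also retained, and to exploit the fact that a retained pendant forces its original vertex to have spare degree in the restriction. First I would partition $V(\tilde T)$ as $S \sqcup P$, where $S := V(\tilde T) \cap V$ are the retained original vertices and $P := V(\tilde T) \cap V'$ are the retained pendants. Since each pendant $v'$ has degree $1$ in $\tilde H$, it is a leaf of $\tilde T$, and its unique neighbour $v$ must also lie in $\tilde T$ (otherwise $v'$ would be isolated), so $v \in S$; in particular $v' \mapsto v$ is an injection from $P$ into $S$. Deleting the leaves $P$ from the tree $\tilde T$ leaves a connected, acyclic subgraph all of whose edges lie in $E$, so $T$ is indeed a binary tree in $H$ on vertex set $S$.

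Next I would record the two size bounds. Because $|S| \le |V| = n$ while $|S| + |P| = |V(\tilde T)| \ge (2-\eps)n$, we obtain $|P| \ge (1-\eps)n$. The \emph{key observation} is that a retained pendant consumes one unit of degree at its original vertex: if $v' \in P$ then $\deg_{\tilde T}(v) = \deg_T(v) + 1$, and since $\deg_{\tilde T}(v) \le 3$ this forces $\deg_T(v) \le 2$. Consequently every degree-$3$ vertex of $T$ must be a vertex of $S$ whose pendant was \emph{not} retained.

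Finally, since $v'\mapsto v$ embeds $P$ injectively into $S$, the number of vertices of $S$ whose pendant is absent from $\tilde T$ is exactly $|S| - |P| \le n - (1-\eps)n = \eps n$. The set of degree-$3$ vertices of $T$ is contained in this set, so $T$ has at most $\eps n$ vertices of degree $3$, as claimed. I expect the only genuinely load-bearing steps to be the degree-consumption observation together with the counting bound $|P| \ge (1-\eps)n$; everything else is bookkeeping. The one remaining point to verify carefully is that removing the pendant leaves preserves connectivity and acyclicity of $\tilde T$, which holds because deleting a leaf from a tree always yields a tree.
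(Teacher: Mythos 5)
Your proof is correct and takes essentially the same route as the paper's: both hinge on the observation that a degree-$3$ vertex of $T$ forces its pendant to be absent from $\tilde T$, combined with the size bound that at most $\eps n$ vertices of the $2n$-vertex graph $\tilde H$ can be missing. Your bookkeeping via $|S|-|P|$ is just an equivalent way of counting those missing pendants.
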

\begin{proof}
Let us denote the set $V'$ of added nodes as pendants. Since the pendants have degree $1$ in $\tilde{H}$, the restricted graph $T$ is a binary tree. We now show that the number of vertices in $T$ with degree $3$ is small. We note that the number of nodes in $\tilde{H}$ is $2n$ and hence $\tilde{T}$ contains all but at most $\eps n$ vertices of $\tilde{H}$. For every vertex $v$ with $\deg_T(v)=3$, its pendant $v'$ is not in $\tilde{T}$. Thus, in order for $\tilde{T}$ to contain all but $\eps n$ vertices of $\tilde{H}$, the number of vertices of degree $3$ in $T$ cannot exceed $\eps n$. 
\end{proof}
	
The following lemma implies that if there is a PTAS for \umaxBT then there is a PTAS for \textsc{TSP(1,2)}.

\begin{lemma} \label{lem:ud-apxhard}
Suppose there is a PTAS for \umaxBT (even restricted to graphs that contain binary spanning trees), then for every $\eps > 0$ there is a polynomial-time algorithm which takes as input an undirected complete graph $K_n$ with edge weights in $\set{1,2}$ whose associated subgraph $S_1$ has a Hamiltonian cycle to output a tour with weight at most $(1+\eps)n$.
\end{lemma}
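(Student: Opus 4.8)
The plan is to reduce \textsc{TSP(1,2)} (on the APX-hard instances where the weight-$1$ subgraph $S_1$ has a Hamiltonian cycle) to \umaxBT via the pendant gadget of Lemma~\ref{lemma:lot-of-pendants-few-degree-3}, so that a near-maximum binary tree is forced to look almost like a Hamiltonian path inside $S_1$. Concretely, I would set $H := S_1$ and construct $\tilde{H} = \tilde{S_1}$ by attaching a pendant $v'$ to every vertex $v \in [n]$. The first observation is that $OPT(\tilde{S_1}) = 2n$: taking a Hamiltonian path of $S_1$ (which exists since $S_1$ has a Hamiltonian cycle) together with all $n$ pendant edges yields a spanning tree of $\tilde{S_1}$ in which every vertex has degree at most $3$. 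Hence $\tilde{S_1}$ lies in the restricted class of graphs admitting a binary spanning tree, and its optimum equals the trivial upper bound $2n$.

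Next I would run the assumed PTAS on $\tilde{S_1}$ with a parameter $\eps'$ to be fixed, obtaining a binary tree $\tilde{T}$ with $|V(\tilde{T})| \ge (1-\eps')\cdot 2n = (2 - 2\eps')n$. Applying Lemma~\ref{lemma:lot-of-pendants-few-degree-3} with its parameter set to $2\eps'$, the restriction $T := \tilde{T} \cap S_1$ is a binary tree in $S_1$ with at most $2\eps' n$ vertices of degree $3$. Moreover, since at most $2\eps' n$ vertices of $\tilde{S_1}$ are missing from $\tilde{T}$, the tree $T$ omits at most $2\eps' n$ of the original vertices, so $\ell := |V(T)| \ge (1-2\eps')n$. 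Because every edge of $T$ is a weight-$1$ edge of $K_n$, its total weight is exactly $\ell - 1$.

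Then I would invoke Lemma~\ref{lem:TreeToPath} on $T$ (viewed as a tree in $K_n$ with $\ell$ nodes, weight $\ell-1$, and at most $d = 2\eps' n$ degree-$3$ vertices) to obtain, in linear time, a path $P$ on these $\ell$ vertices of total weight at most $(\ell - 1) + 2\eps' n$. To turn $P$ into a tour on all of $[n]$, I would append the at most $n - \ell \le 2\eps' n$ missing original vertices one at a time to an endpoint of $P$ (each new edge of $K_n$ costs at most $2$) and finally close the resulting Hamiltonian path into a cycle with one more edge of cost at most $2$. Summing, and using $n - \ell \le 2\eps' n$, the tour weight is at most $(\ell - 1) + 2\eps' n + 2(n - \ell) + 2 \le n + 4\eps' n + 1$. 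Choosing $\eps' := \eps/5$ and solving the constantly many instances with $n < 5/\eps$ by brute force makes this at most $(1+\eps)n$, as required.

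The main obstacle, and the conceptual heart of the argument, is the first step: arranging that maximizing the number of tree vertices simultaneously forces the solution (restricted to $S_1$) both to contain almost all original vertices and to have only $O(\eps n)$ degree-$3$ vertices. The pendant gadget is exactly what couples these two requirements, since each degree-$3$ vertex of $T$ blocks its own pendant from $\tilde{T}$, so a $(1-\eps')$-approximate binary tree can afford only few branchings; Lemma~\ref{lem:TreeToPath} then converts ``few branchings'' into ``small weight increase''. The remaining effort is routine weight bookkeeping, where the only care needed is absorbing the $O(\eps n)$ detour edges of weight $2$ and the additive constant into the final $(1+\eps)n$ bound by a suitable rescaling of $\eps'$.
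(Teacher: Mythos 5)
Your proposal is correct and follows essentially the same route as the paper's proof: attach pendants to $S_1$, run the PTAS on $\tilde{S_1}$ (whose optimum is $2n$ thanks to the Hamiltonian cycle), use Lemma~\ref{lemma:lot-of-pendants-few-degree-3} to get a large tree in $S_1$ with few degree-$3$ vertices, convert it to a path via Lemma~\ref{lem:TreeToPath}, and patch in the missing vertices with weight-$2$ edges. Your bookkeeping is in fact slightly more careful than the paper's (you account for the edge closing the tour and absorb the additive constant by rescaling $\eps'$), but the argument is the same.
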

\begin{proof}
Let $K_n$ be the input instance of \textsc{TSP(1,2)} with $S_1$ and $S_2$ defined as above. Let $\tilde{S}_1$ be the graph constructed from $S_1$ as follows: For every $i \in [n]$, introduce a new vertex $v_i$ adjacent to vertex $i$ in $S_1$. We will refer to $v_i$ as the pendant of vertex $i$.

We note that $\tilde{S}_1$ has a spanning binary tree (i.e. $OPT(\tilde{S}_1) = 2n$) because $S_1$ has a Hamiltonian cycle. Therefore we can run the PTAS for \umaxBT on graph $\tilde{S}_1$ with error parameter $\eps/4$ (which still takes polynomial time). The PTAS would output a binary tree $\tilde{T}$ of size at least 
\begin{align*} 
\tp{1-\frac{\eps}{4}} \cdot OPT(\tilde{S}_1) = \tp{1-\frac{\eps}{4}} \cdot 2n = 2n - \frac{\eps}{2}n,
\end{align*}
By Lemma \ref{lemma:lot-of-pendants-few-degree-3}, we obtain a binary tree $T$ in $S_1$ with the following properties: 
\begin{enumerate}
\item $T$ contains at least $2n - \eps n/2 - n = \tp{1-\eps/2}n$ vertices and
\item $T$ has at most $\eps n/2$ vertices with degree 3.
\end{enumerate}

By Lemma \ref{lem:TreeToPath} we can transform $T$ into a path in $K_n$ that has total weight at most $(1-\eps/2)n + \eps n/2 = n$ and contains at least $(1-\eps/2)n$ vertices. This path can be extended to a valid tour by including the missing vertices using edges with weight 2. Such a tour will have total weight at most $n + 2(\eps n/2) = (1+\eps)n$.
\end{proof}

Now we are ready to prove Theorem \ref{theorem:umaxBT-no-const-approx}.

\umaxBTnoApprox*
\begin{proof}
\mbox{}
\begin{enumerate}
\item
Suppose there is a polynomial-time algorithm for \umaxBT that achieves a constant-factor approximation. By Theorem \ref{thm:ud-ptas}, the problem also has a PTAS. By Lemma \ref{lem:ud-apxhard}, \textsc{TSP(1,2)} would also have a PTAS, thus contradicting its APX-hardness. Therefore \umaxBT does not admit a polynomial-time constant-factor approximation assuming $\P \neq \NP$.

\item
Next we show hardness under the exponential time hypothesis. Suppose there is a polynomial-time algorithm $\+A$ for \umaxBT that achieves an $\exp\tp{-C\cdot (\log{n})^{\log_3{2}-\eps}}$-approximation for constants $C, \eps>0$. We show that there is an algorithm that achieves a constant-factor approximation for \umaxBT, and runs in time $\exp\tp{O\tp{(\log{n})^{\eps^{-1}\log_3{2}}}}$.

Given an undirected graph $G$ on $n$ vertices as input for \umaxBT, let $k = \ceilfit{\tp{{\eps^{-1}\log_3{2}-1}}\log_3\log{n}}$, which is an integer that satisfies
\begin{align*}
\tp{3^k \log n}^{\log_3{2}-\eps} \le 2^k \text{ and } 3^{k} \le 3\tp{\log{n}}^{\eps^{-1}\log_3{2}-1}.
\end{align*}
Running $\+A$ on $G^{\XBox 2^k}$ gives us a binary tree with size at least
\begin{align*}
\exp\tp{-C\cdot (\log{N})^{\log_3{2}-\eps}} OPT\tp{G^{\XBox 2^k}}
\end{align*}
where $N \le n^{3^k}$ is the size of $G^{\XBox 2^k}$. Recursively applying Lemma \ref{lem:ud-sqrtfactor} gives us a binary tree in $G$ with number of vertices being at least
\begin{align*}
   & \exp\tp{-C\cdot \frac{(\log{N})^{\log_3{2}-\eps}}{2^k}} OPT(G) - 1 \\
\ge  & \exp\tp{-C\cdot \frac{(3^k\log{n})^{\log_3{2}-\eps}}{2^k}} OPT(G) - 1 \\
\ge& \exp\tp{-C} OPT(G) - 1 \ge \frac{1}{2} \cdot \exp\tp{-C} OPT(G).
\end{align*}
The last inequality holds as long as
\begin{align*}
OPT(G) \ge 2 \cdot e^C.
\end{align*}
We note that when $OPT(G)$ is smaller than $2e^C$ which is a constant, we can solve the problem exactly by brute force in polynomial time. Otherwise the above procedure can be regarded as a constant-factor approximation for \umaxBT. The running time is quasi-polynomial in $N$, i.e. for some constant $C' > 0$, the running time is upper-bounded by 
\begin{align*}
\exp\tp{C'\tp{\log^d{N}}} \le \exp\tp{C'\tp{\tp{3^k \cdot \log{n}}^d}} \le \exp\tp{C'\tp{(\log{n})^{\eps^{-1}d\log_3{2}}}}.
\end{align*}
Moreover, from item \ref{theorem:umaxBT-no-const-approx_1} we know that it is $\NP$-hard to approximate \umaxBT within a constant factor, therefore $\NP \subseteq \DTIME{\exp\tp{O\tp{(\log{n})^{\eps^{-1}d\log_3{2}}}}}$.
\end{enumerate}
\end{proof}

We remark that APX-hardness for \umaxBT on graphs with spanning binary trees does not rule out constant-factor approximation algorithms on such instances. This is because our squaring operation might lose spanning binary trees ($G^{\XBox 2}$ does not necessarily contain a spanning binary tree even if $G$ does).

\section{An efficient algorithm for bipartite permutation graphs} \label{sec:bipartite-permutation}

In this section we prove Theorem \ref{theorem:bipartite-permutation-graphs}. We begin with some structural properties of bipartite permutation graphs that will be helpful in designing the algorithm.

\subsection{Structural properties of bipartite permutation graphs}

\begin{definition}
A \emph{strong ordering} $\sigma$ of a bipartite graph $G=(S,T,E)$ is an ordering of $S$ and an ordering of $T$ such that
\begin{align*}
\forall s <_\sigma s' \in S, t <_\sigma t' \in T, \quad (s, t') \in E \textup{ and } (s', t) \in E \implies (s, t) \in E \textup{ and } (s', t') \in E. 
\end{align*}
\end{definition}
Informally, strong ordering essentially states that the existence of \emph{cross edges} implies the existence of \emph{parallel edges}. 
The following theorem from \cite{spinrad1987bipartite} shows that strong ordering exactly characterizes bipartite permutation graphs.

\begin{theorem}[Theorem 1 of \cite{spinrad1987bipartite}] \label{thm:strong-ordering}
A bipartite graph $G=(S,T,E)$ is also a permutation graph if and only if $G$ has a strong ordering.
\end{theorem}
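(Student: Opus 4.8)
The plan is to use the standard geometric model of permutation graphs: a graph is a permutation graph exactly when its vertices can be placed as points $(a_v,b_v)$ in the plane (equivalently, as segments joining two parallel lines) so that two vertices are adjacent if and only if their points are \emph{incomparable} in the coordinatewise (dominance) order---i.e.\ their segments cross. With this model in hand, I would prove the two directions separately.

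For the forward direction (permutation $\Rightarrow$ strong ordering), fix such a representation. Since $S$ and $T$ are independent sets, any two points from the same side are \emph{comparable}, so the points of $S$ form a chain and the points of $T$ form a chain in the dominance order. Let $<_\sigma$ order $S$ and $T$ each along its chain (increasing in both coordinates). To verify the strong ordering property, take $s<_\sigma s'$ in $S$ and $t<_\sigma t'$ in $T$ with $\{s,t'\},\{s',t\}\in E$; translating the two crossings into incomparabilities and using that $s,s'$ and that $t,t'$ are each comparable, a short case analysis on the relative order of the four $a$- and $b$-coordinates forces both $\{s,t\}$ and $\{s',t'\}$ to be incomparable as well, hence edges of $G$. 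The subcases in which a crossing would contradict chain monotonicity are ruled out immediately, leaving exactly the configurations that yield the two desired edges.

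For the converse (strong ordering $\Rightarrow$ permutation), I would reconstruct a representation from the orderings $s_1<_\sigma\cdots<_\sigma s_p$ and $t_1<_\sigma\cdots<_\sigma t_q$. The goal is to produce two linear orders $L_{\mathrm{top}}$ and $L_{\mathrm{bot}}$ on $S\cup T$, each extending $<_\sigma$ on $S$ and on $T$, such that an $S$--$T$ pair is adjacent precisely when its relative order is reversed between $L_{\mathrm{top}}$ and $L_{\mathrm{bot}}$, while non-adjacent and same-side pairs keep the same relative order. Reading these two orders as the top and bottom scanlines then realizes $G$ as a permutation graph, since crossing of the corresponding segments is exactly order-reversal. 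The interleaving rule is the natural one: position each $s_i$ relative to each $t_j$ so that edges flip and non-edges stay put, consistently with the chain orders.

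The main obstacle is this last step: showing that the prescription yields two genuine \emph{transitive} linear orders rather than a cyclic set of constraints. This is exactly where the strong ordering hypothesis is used---the implication ``crossing edges force parallel edges'' is precisely what forbids the bad $2\times 2$ patterns in the ordered bipartite adjacency matrix, and hence guarantees the merge constraints are acyclic and completable to the two scanlines. An alternative route for this direction, avoiding an explicit coordinate construction, is to exhibit transitive orientations of both $G$ and its complement $\overline{G}$ (a bipartite graph is trivially a comparability graph by orienting all edges from $S$ to $T$, so the real work is orienting $\overline{G}$ transitively using $<_\sigma$) and to invoke the Pnueli--Lempel--Even characterization that $G$ is a permutation graph iff $G$ and $\overline{G}$ are both comparability graphs; verifying transitivity of the orientation of $\overline{G}$ is again the point where the strong ordering property is essential.
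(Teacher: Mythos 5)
First, a point of comparison: the paper does not prove this statement at all --- it is imported verbatim as Theorem~1 of \cite{spinrad1987bipartite} and used as a black box --- so there is no in-paper argument to measure your proposal against; I can only evaluate the sketch on its own terms.

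Your forward direction is correct and essentially complete. In the dominance model adjacency is incomparability, each side of the bipartition is a chain, and the verification is even shorter than a four-coordinate case analysis: if $s$ and $t$ were comparable, then chain monotonicity ($t<_\sigma t'$ or $s<_\sigma s'$) would propagate that comparability to $(s,t')$ or to $(s',t)$, contradicting one of the two assumed edges, and the same one-line argument handles $(s',t')$. The genuine gap is in the converse, which is where all the content of the theorem lives. Everything you write there describes what a proof would have to do, and the single step that carries the weight --- that the prescription ``reverse adjacent cross pairs, preserve non-adjacent cross pairs and same-side pairs'' is globally consistent and admits two transitive linear extensions --- is asserted rather than carried out. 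Concretely, you never specify the rule deciding, for a \emph{non-adjacent} pair $(s_i,t_j)$, whether $s_i$ precedes or follows $t_j$ in both scanlines; the natural rule requires knowing that $N(s_i)$ is a $\sigma$-interval of $T$ and that these intervals move monotonically with $i$ (the content of Lemma~\ref{lem:neighborhood-interval}, which must be derived from the strong ordering and holds only for connected graphs), and the consistency check is a nontrivial case analysis over the resulting $2\times 2$ and triple configurations. The alternative route via Pnueli--Lempel--Even has exactly the same hole: exhibiting a transitive orientation of $\overline{G}$ and verifying transitivity across the three kinds of non-edges ($S$--$S$, $T$--$T$, and non-adjacent cross pairs) \emph{is} the proof, not a remark after it. You have correctly located where the strong ordering hypothesis must enter, but locating the difficulty is not resolving it; as written the converse is a plan. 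You would also need to say how disconnected graphs are handled, since the interval-neighborhood property fails without connectivity.
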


\begin{corollary} \label{cor:biperm-indsubgraph}
Let $G$ be a bipartite permutation graph and $H=G[V_H]$ be an induced subgraph. Then $H$ is also a bipartite permutation graph.
\end{corollary}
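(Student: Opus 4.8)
The plan is to invoke the strong-ordering characterization of Theorem~\ref{thm:strong-ordering} in both directions: first to extract a strong ordering from $G$, then to recover one for $H$ by restriction. Since $G$ is a bipartite permutation graph with bipartition $(S,T)$, Theorem~\ref{thm:strong-ordering} guarantees a strong ordering $\sigma$ of $G$, consisting of a linear order on $S$ together with one on $T$. I would set $S_H := S \cap V_H$ and $T_H := T \cap V_H$, so that $H = G[V_H]$ is bipartite with parts $(S_H, T_H)$, and let $\sigma_H$ denote the orderings obtained by restricting $\sigma$ to $S_H$ and to $T_H$.

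The core step is to verify that $\sigma_H$ is a strong ordering of $H$, after which Theorem~\ref{thm:strong-ordering} immediately yields that $H$ is a bipartite permutation graph. To check the strong-ordering condition, take any $s <_{\sigma_H} s'$ in $S_H$ and $t <_{\sigma_H} t'$ in $T_H$ with $(s,t') \in E(H)$ and $(s',t) \in E(H)$. Because $H$ is an induced subgraph, these edges are also edges of $G$, and the restricted orders agree with $\sigma$, so $s <_\sigma s'$ and $t <_\sigma t'$. Applying the strong-ordering property of $\sigma$ in $G$ yields $(s,t) \in E(G)$ and $(s',t') \in E(G)$; since $s,s' \in S_H$, $t,t' \in T_H$, and $H$ is induced, both of these are edges of $H$. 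This is exactly the conclusion required of $\sigma_H$.

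The argument goes through because the strong-ordering condition is a local, \emph{hereditary} constraint: it refers only to the relative order and the mutual adjacencies of four vertices, all of which are preserved verbatim when passing to an induced subgraph, since an induced subgraph retains both edges and non-edges among its surviving vertices. Consequently there is no genuine obstacle; the only point requiring care is to confirm that restricting the two linear orders does not disturb the relevant comparabilities and introduces no spurious edges, both of which hold precisely because induced subgraphs neither create nor destroy adjacencies among retained vertices.
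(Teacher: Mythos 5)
Your proof is correct and follows exactly the paper's argument: extract a strong ordering of $G$ via Theorem~\ref{thm:strong-ordering}, observe that its restriction to $V_H$ remains a strong ordering because induced subgraphs preserve both adjacencies and non-adjacencies, and apply the theorem again to conclude. You simply spell out the verification that the paper leaves implicit.
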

\begin{proof}
Let $\sigma$ be a strong ordering of $G$. The corollary follows by applying Theorem \ref{thm:strong-ordering} and observing that the projection of $\sigma$ onto $V_H$ is a strong ordering of $H$.
\end{proof}

In the following, when we are given a bipartite permutation graph $G=(S,T,E)$ along with a strong ordering $\sigma$ (or simply a strongly ordered bipartite permutation graph), we always assume that the elements in $S$ and $T$ are sorted in ascending order according to $\sigma$:
\begin{align*}
s_1 <_\sigma s_2 <_\sigma \cdots <_\sigma s_{p}, \quad t_1 <_\sigma t_2 <_\sigma \cdots <_\sigma t_{q}.
\end{align*}
Here $p \coloneqq |S|$ and $q \coloneqq |T|$.

The following lemma shows that in a bipartite permutation graph the neighborhood of a vertex $v \in G$ has a nice consecutive structure.

\begin{lemma} \label{lem:neighborhood-interval}
Let $G=(S,T,E)$ be a connected bipartite permutation graph and $\sigma$ be a strong ordering of $G$. For every $s_i \in S$, there exist $a_i \le b_i \in [q]$ such that 
\begin{align*}
N(s_i) = [t_{a_i}, t_{b_i}] \coloneqq \set{t_{a_i}, t_{a_i+1}, \cdots, t_{b_i-1}, t_{b_i}}.
\end{align*}
Moreover, for any $s_i, s_j \in S$ such that $s_i <_\sigma s_j$ we have
\begin{align*}
a_i \le a_j, \quad b_i \le b_j.
\end{align*}
\end{lemma}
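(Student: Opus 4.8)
The plan is to work directly from the definition of the strong ordering $\sigma$ together with connectivity of $G$. For each $s_i$ I would set $a_i := \min\{k : t_k \in N(s_i)\}$ and $b_i := \max\{k : t_k \in N(s_i)\}$; these are well-defined because $G$ is connected, so $N(s_i) \neq \varnothing$. The inclusion $N(s_i) \subseteq [t_{a_i}, t_{b_i}]$ is then immediate, and the first claim reduces to showing that $N(s_i)$ has no gaps, i.e.\ that every index $b$ with $a_i < b < b_i$ satisfies $t_b \in N(s_i)$. To see this, I would use connectivity again to pick a neighbor $s_j \in S$ of $t_b$ and split on the position of $s_j$ relative to $s_i$ (the case $s_j = s_i$ being trivial). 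If $s_j <_\sigma s_i$, then the edges $(s_j, t_b)$ and $(s_i, t_{a_i})$ form a crossing pair, since $s_j <_\sigma s_i$ and $t_{a_i} <_\sigma t_b$; the strong ordering then yields $(s_i, t_b) \in E$. If instead $s_i <_\sigma s_j$, the edges $(s_i, t_{b_i})$ and $(s_j, t_b)$ cross, since $s_i <_\sigma s_j$ and $t_b <_\sigma t_{b_i}$, and the strong ordering again yields $(s_i, t_b) \in E$. In both cases $t_b \in N(s_i)$, so $N(s_i) = [t_{a_i}, t_{b_i}]$.

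For the monotonicity claim I would prove each inequality separately by contradiction, exploiting the same crossing mechanism. Assume $s_i <_\sigma s_j$ but $a_i > a_j$. Then $t_{a_j} <_\sigma t_{a_i}$, so the edges $(s_i, t_{a_i})$ and $(s_j, t_{a_j})$ cross; applying the strong ordering forces $(s_i, t_{a_j}) \in E$, which places $t_{a_j}$ in $N(s_i)$ with $a_j < a_i$, contradicting the minimality of $a_i$. Symmetrically, if $b_i > b_j$, then $t_{b_j} <_\sigma t_{b_i}$ and the crossing edges $(s_i, t_{b_i}), (s_j, t_{b_j})$ force $(s_j, t_{b_i}) \in E$, placing $t_{b_i}$ in $N(s_j)$ with $b_i > b_j$ and contradicting the maximality of $b_j$. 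Hence $a_i \le a_j$ and $b_i \le b_j$.

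I do not expect a genuine obstacle here, as the whole argument is a repeated bookkeeping application of the one-line strong-ordering rule. The only point that needs care is selecting, in the interval step, which \emph{already-known} edge at $s_i$ (namely $(s_i, t_{a_i})$ versus $(s_i, t_{b_i})$) to pair with the witness edge $(s_j, t_b)$ so that the chosen pair genuinely crosses rather than runs parallel; this is precisely what the case split on the position of $s_j$ accomplishes. Connectivity is used twice and is essential: once to guarantee $N(s_i) \neq \varnothing$ so that $a_i, b_i$ exist, and once to supply the neighbor $s_j$ of the intermediate vertex $t_b$, without which the crossing configuration would be unavailable.
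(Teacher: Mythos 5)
Your proof is correct and follows essentially the same route as the paper's: define $a_i,b_i$ as the extreme neighbor indices, use connectivity to produce a neighbor $s_j$ of an intermediate $t_b$, split on the position of $s_j$ relative to $s_i$ to exhibit a genuinely crossing pair, and apply the strong-ordering rule; the monotonicity claims are likewise proved by the same contradiction via a crossing pair. If anything, your explicit identification of which known edge at $s_i$ to pair with $(s_j,t_b)$ in each case is stated more carefully than in the paper's write-up.
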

\begin{proof}
For the first part, let $s \in S$ be an arbitrary vertex and let $t_1$, $t_2$ be the smallest and largest elements in $N(s)$ (with respect to $\sigma$), respectively. Consider any $t \in T$ satisfying $t_1 <_\sigma t <_\sigma t_2$. We want to show that $t \in N(s)$. Since $G$ is connected, there must be some $s' \in S$ adjacent to $t$. Suppose $s' <_\sigma s$. Since $\sigma$ is a strong ordering, we have
\begin{align*}
(s', t_2) \in E \textup{ and } (s, t) \in E \implies (s, t) \in E.
\end{align*} 
A symmetric argument holds for the case $s <_\sigma s'$. Therefore $t \in N(s)$.

For the second part, suppose for the sake of contradiction that $a_j < a_i$. Recall that $s_i <_\sigma s_j$, we have
\begin{align*}
(s_i, t_{a_i}) \in E \textup{ and } (s_j, t_{a_j}) \in E \implies (s_i, t_{a_j}) \in E.
\end{align*}
That means $t_{a_j} \in N(s_i) = [t_{a_i}, t_{b_i}]$, contradicting with $a_j < a_i$. A symmetric argument can be used to prove $b_i \le b_j$.
\end{proof}

Another important property of (connected) bipartite permutation graphs is that they contain crossing-free spanning trees. 

\begin{definition}
Given a bipartite permutation graph $G=(S,T,E)$ and a strong ordering $\sigma$ of $G$, we say a subgraph $H$ has an \emph{edge crossing} (w.r.t. the strong ordering $\sigma$) if $H$ contains two edges $(s_1, t_1)$ and $(s_2, t_2)$ such that $s_1 <_\sigma s_2$ and $t_2 <_\sigma t_1$. Otherwise we say $H$ is \emph{crossing-free}.
\end{definition}

We need the following theorem from \cite{smith2011minimum}.

\begin{theorem}[Corollary 4.19 of \cite{smith2011minimum}] \label{thm:crossing-free}
Let $G$ be a strongly ordered connected bipartite permutation graph. There exists a minimum degree spanning tree (MDST) of $G$ which is crossing-free.
\end{theorem}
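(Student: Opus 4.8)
The plan is to prove the statement by an \emph{uncrossing} argument driven by a potential function. Among all minimum degree spanning trees (MDSTs) of $G$, I would fix one, call it $T$, that maximizes the potential $\Phi(T) := \sum_{(s,t) \in E(T)} \mathrm{rk}_\sigma(s)\cdot \mathrm{rk}_\sigma(t)$, where $\mathrm{rk}_\sigma(v)$ is the position of $v$ in the strong ordering $\sigma$; such a maximizer exists since $G$ has only finitely many spanning trees. The goal is to show that this particular $T$ is already crossing-free, which yields the theorem. I would proceed by contradiction: assuming $T$ contains a crossing, I would exhibit a modification of $T$ that is again an MDST but has strictly larger $\Phi$, contradicting the choice of $T$.

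The basic local move is a \emph{two-edge swap}. Suppose $T$ has a crossing, i.e.\ two tree edges $(s,t)$ and $(s',t')$ with $s <_\sigma s'$ and $t' <_\sigma t$. By the definition of strong ordering (and the nested-interval structure recorded in Lemma~\ref{lem:neighborhood-interval}), the \emph{parallel} edges $(s,t')$ and $(s',t)$ both lie in $E(G)$. Replacing $\{(s,t),(s',t')\}$ by $\{(s,t'),(s',t)\}$ leaves the degree of every vertex unchanged, since each of $s,s',t,t'$ trades exactly one incident tree edge for another; hence the maximum degree is preserved, so the resulting edge set --- \emph{if it is a spanning tree} --- is automatically still an MDST. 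A direct computation shows the swap changes the potential by $(\mathrm{rk}_\sigma(s') - \mathrm{rk}_\sigma(s))(\mathrm{rk}_\sigma(t) - \mathrm{rk}_\sigma(t')) > 0$, so it strictly increases $\Phi$. Thus it suffices to produce \emph{some} crossing of $T$ whose uncrossing keeps the edge set connected and acyclic.

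I expect the spanning-tree validity of the swap to be the main obstacle. Deleting the two crossing edges splits $T$ into three components that form a path $C_1 - C_2 - C_3$ in the contracted tree, and whether the two parallel edges reconnect them into a tree depends on which endpoints land in the middle component $C_2$. In the \emph{favorable} topology, $C_2$ contains the two endpoints from the same side of the bipartition (both in $S$ or both in $T$), and then the two parallel edges reconnect $C_1,C_2,C_3$ into a single tree, completing the contradiction. In the \emph{unfavorable} topology, $C_2$ contains one $S$-endpoint and one $T$-endpoint, the two crossing edges then lie on a common tree path from $s$ to $t'$, and the naive swap closes a cycle inside $C_2$. The plan to rule this case out is to select, among all crossings of $T$, an \emph{innermost} one minimizing the span $(\mathrm{rk}_\sigma(s')-\mathrm{rk}_\sigma(s)) + (\mathrm{rk}_\sigma(t)-\mathrm{rk}_\sigma(t'))$, and to argue, via the nested-interval structure of neighborhoods in Lemma~\ref{lem:neighborhood-interval}, that in the unfavorable topology the tree path through $C_2$ must itself carry a crossing of strictly smaller span, contradicting innermost-ness. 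This reduction of the unfavorable configuration to the favorable one, carried out through the interval structure of the strong ordering, is the technically delicate heart of the argument.

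Once an innermost crossing is shown to sit in the favorable topology, its two-edge swap produces an MDST (degrees are preserved) with strictly larger $\Phi$, contradicting the maximality of $\Phi(T)$. We conclude that a $\Phi$-maximizing MDST has no crossing, establishing Theorem~\ref{thm:crossing-free}.
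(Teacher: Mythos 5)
First, a remark on scope: the paper does not prove Theorem~\ref{thm:crossing-free} at all --- it is imported verbatim as Corollary~4.19 of \cite{smith2011minimum} --- so there is no in-paper argument to compare yours against. Judged on its own terms, your proposal has a genuine gap, and it is not merely the ``technically delicate heart'' you defer: the overall strategy fails. A $\Phi$-maximizing MDST need not be crossing-free, because your unfavorable topology can occur at the global maximum of $\Phi$ with no smaller-span crossing available to recurse on. Concretely, take $G=K_{2,2}$ with $S=\set{s_1<_\sigma s_2}$ and $T=\set{t_1<_\sigma t_2}$, a strongly ordered connected bipartite permutation graph. Its four spanning trees are all Hamiltonian paths, hence all MDSTs. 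The tree $H^\star$ with edges $\set{s_2,t_1},\set{s_2,t_2},\set{s_1,t_2}$ has $\Phi(H^\star)=2+4+2=8$, strictly more than the other three spanning trees (two crossing-free ones with $\Phi=7$, one with $\Phi=5$), so $H^\star$ is your unique $\Phi$-maximizer --- yet $\set{s_1,t_2}$ and $\set{s_2,t_1}$ cross. This crossing is trivially innermost; deleting the two crossing edges leaves middle component $\set{s_2,t_2}$, which contains one endpoint from each side (your unfavorable topology); one of the two parallel edges, $\set{s_2,t_2}$, is already in the tree; and the path through the middle component is a single edge carrying no crossing whatsoever. The contradiction you hope to extract from innermost-ness simply is not there.

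The sound parts of your argument are the existence of the parallel edges (immediate from the strong ordering), degree preservation under the two-edge swap, and the computation that an \emph{executable} uncrossing increases $\Phi$ by $(\mathrm{rk}_\sigma(s')-\mathrm{rk}_\sigma(s))(\mathrm{rk}_\sigma(t)-\mathrm{rk}_\sigma(t'))>0$. What is missing is any mechanism for making progress when no crossing of the current tree admits a tree-preserving two-edge swap, and the example above shows this situation genuinely occurs at the optimum of your potential. A repair would need a richer family of degree-non-increasing local moves (for instance, single-endpoint shifts exploiting the interval structure of Lemma~\ref{lem:neighborhood-interval}, which may pass to a structurally different MDST rather than swapping within the same four vertices), a different potential or ordering on MDSTs, or a direct constructive argument as in \cite{smith2011minimum}. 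As written, the proof does not go through.
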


\begin{lemma} \label{lem:mbt-crossing-free}
Let $G$ be a strongly ordered bipartite permutation graph. There exists a maximum binary tree in $G$ which is crossing-free.
\end{lemma}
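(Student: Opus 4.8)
The plan is to start from an arbitrary maximum binary tree $T$ in $G$ and replace it by a crossing-free tree on the \emph{same} vertex set, thereby preserving optimality. The key observation driving the argument is that a binary tree is precisely a tree of maximum degree at most $3$, which lets me bring the minimum degree spanning tree result (Theorem \ref{thm:crossing-free}) to bear.

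First I would let $T$ be a maximum binary tree in $G$, so that $|V(T)| = OPT(G)$, and pass to the induced subgraph $H := G[V(T)]$. Since $T$ is connected and spans $V(T)$, the graph $H$ is connected; by Corollary \ref{cor:biperm-indsubgraph} it is a bipartite permutation graph, and (as noted in the proof of that corollary) the projection of the strong ordering $\sigma$ onto $V(T)$ is a strong ordering of $H$. Crucially, $T$ is now a \emph{spanning} tree of $H$ of maximum degree at most $3$.

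Next I would invoke Theorem \ref{thm:crossing-free} on the strongly ordered connected bipartite permutation graph $H$ to obtain a crossing-free minimum degree spanning tree $T'$ of $H$. Because $T$ already witnesses that $H$ admits a spanning tree of maximum degree at most $3$, the minimum achievable maximum degree over spanning trees of $H$ is at most $3$; hence the MDST $T'$ also has maximum degree at most $3$, i.e. $T'$ is itself a binary tree. Since $T'$ spans $V(T)$, we have $|V(T')| = |V(T)| = OPT(G)$, so $T'$ is a maximum binary tree in $G$.

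Finally I would verify that crossing-freeness transfers from $H$ back to $G$: the relative order of any two vertices of $V(T)$ under the projected ordering coincides with their order under $\sigma$, so any edge crossing of $T'$ seen in $G$ would already be an edge crossing in $H$. As $T'$ is crossing-free in $H$, it is crossing-free with respect to $\sigma$, completing the proof. The only genuine subtlety — and the point I would be most careful to state — is this compatibility of the two orderings, coupled with the observation that minimizing the maximum degree cannot exceed the degree bound of $3$ already attained by $T$; everything else follows directly from the cited structural results.
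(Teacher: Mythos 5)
Your proposal is correct and follows essentially the same route as the paper: pass to the induced subgraph on the vertex set of a maximum binary tree, note it is a connected bipartite permutation graph, and apply the crossing-free MDST theorem, observing that the MDST must have maximum degree at most $3$ because the original tree already witnesses that bound. The paper's proof is a more terse version of exactly this argument, so no further comment is needed.
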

\begin{proof}
Consider any maximum binary tree $H=(V_H,E_H)$ in $G$ and the induced subgraph $G[V_H]$. By Corollary \ref{cor:biperm-indsubgraph} we have that $G[V_H]$ is a bipartite permutation graph. Moreover, $G[V_H]$ contains a spanning binary tree and is thus connected. By Theorem \ref{thm:crossing-free} there is a crossing-free MDST of $G[V_H]$, which we will denote by $H'$. We note that $H'$ is a binary tree, and that $H'$ has the same size as $H$ since they are both spanning trees of $G[V_H]$. Therefore $H'$ is a maximum binary tree in $G$ which is crossing-free. 
\end{proof}

The next lemma inspires the definition of subproblems which lead us to the Dynamic Programming based algorithm.

\begin{lemma} \label{lem:first-edge}
Let $G = (S, T, E)$ be a strongly ordered bipartite permutation graph, and let $H=\tp{V_H, E_H}$ be a connected crossing-free subgraph of $G$. Let $s_1$ and $t_1$ be the two minimum vertices (w.r.t. the strong ordering) in $S \cap V_H$ and $T \cap V_H$, respectively. Then we have $\set{s_1, t_1} \in E_H$, and that one of $s_1$ and $t_1$ has degree 1. 
\end{lemma}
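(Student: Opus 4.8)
The plan is to derive both conclusions from the crossing-free hypothesis via a single ``two crossing edges'' gadget, using only minimality of $s_1,t_1$ together with connectedness. First I note that for $s_1$ and $t_1$ to be well-defined we must have $V_H \cap S \neq \varnothing$ and $V_H \cap T \neq \varnothing$; since $H$ is connected and bipartite this forces $|V_H| \ge 2$, so $H$ has at least one edge and in particular $s_1$ and $t_1$ each have at least one neighbor in $H$. This is the only place connectedness is used.

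For the edge claim $\set{s_1,t_1}\in E_H$, I would argue by contradiction. Suppose $t_1$ is not a neighbor of $s_1$. Pick any neighbor $t^\ast$ of $s_1$ and any neighbor $s^\ast$ of $t_1$ (both exist by connectedness). Since $t^\ast \in V_H\cap T$ and $t^\ast \neq t_1$, minimality of $t_1$ gives $t_1 <_\sigma t^\ast$; symmetrically $s_1 <_\sigma s^\ast$. Then the two edges $\set{s_1,t^\ast}$ and $\set{s^\ast,t_1}$ satisfy $s_1 <_\sigma s^\ast$ and $t_1 <_\sigma t^\ast$, which is exactly an edge crossing, contradicting that $H$ is crossing-free. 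Hence $\set{s_1,t_1}\in E_H$.

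For the degree claim I would again argue by contradiction, assuming both $s_1$ and $t_1$ have degree at least $2$ in $H$. Then $s_1$ has a neighbor $t' \neq t_1$ and $t_1$ has a neighbor $s' \neq s_1$; by the same minimality reasoning $t_1 <_\sigma t'$ and $s_1 <_\sigma s'$. The edges $\set{s_1,t'}$ and $\set{s',t_1}$ then form a crossing, again contradicting crossing-freeness. Thus at least one of $s_1,t_1$ has degree exactly $1$ (each already has degree at least $1$ by the edge claim).

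The proof is short once the right gadget is identified, so the only real care needed is bookkeeping: checking that the minima are well-defined and that connectedness guarantees the neighbors I invoke, and matching the inequalities $s_1 <_\sigma s^\ast$ and $t_1 <_\sigma t^\ast$ against the precise orientation in the definition of an edge crossing (so that the pair I exhibit is genuinely a crossing rather than a harmless parallel pair). I do not expect to need the interval/consecutiveness structure of Lemma \ref{lem:neighborhood-interval}, nor the strong-ordering axiom directly; both parts follow purely from the crossing-free assumption together with minimality of $s_1$ and $t_1$.
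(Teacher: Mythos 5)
Your proposal is correct and follows essentially the same argument as the paper: both parts are obtained by exhibiting a pair of crossing edges incident to $s_1$ and $t_1$, using minimality of $s_1,t_1$ under the strong ordering. The only cosmetic difference is that the paper obtains the two neighbors $t^\ast, s^\ast$ as the first and last internal vertices of an $s_1$--$t_1$ path, whereas you take arbitrary neighbors; the two are interchangeable.
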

\begin{proof}
Suppose for the sake of contradiction that $\set{s_1, t_1} \notin E_H$. Since $H$ is connected, there exists a path $\tp{s_1, t_2, \cdots, s_2, t_1}$ where $t_2 >_\sigma t_1$ and $s_2 >_\sigma s_1$. However, the two edges $\set{s_1, t_2}$ and $\set{s_2, t_1}$ constitute an edge crossing which is a contradiction to the assumption that $H$ is crossing-free.

We proved that $\set{s_1, t_1} \in E_H$. Suppose both $s_1$ and $t_1$ have at least one more neighbors, say $t_2$ and $s_2$, respectively, then once more $\set{s_1, t_2}$ and $\set{s_2, t_1}$ constitute an edge crossing. Therefore one of $s_1$ and $t_1$ has degree 1.
\end{proof}

\subsection{The algorithm}

In this section, we give a dynamic programming approach for solving \umaxBT on bipartite permutation graphs. We first focus on connected, strongly ordered bipartite permutation graphs. Theorem \ref{theorem:bipartite-permutation-graphs} will follow from the fact that a strong ordering can be found in linear time.

Let $G=(S,T,E)$ be a strongly ordered bipartite permutation graph with $|S|=p$ and $|T|=q$. For $i \in [p], j \in [q]$, define
\begin{align*}
S_i \coloneqq \set{s_i, s_{i+1}, \cdots, s_p}, \quad T_j \coloneqq \set{t_j, t_{j+1}, \cdots, t_q}.
\end{align*}
We also use the convention $S_{p+1} = T_{q+1} = \varnothing$. Define $[i,j] \coloneqq G[S_i \cup T_j]$, i.e. the subgraph of $G$ induced by $S_i \cup T_j$.

For $i \in [p]$ and $j \in [q]$, let $\mbt_S(i,j)$ be the maximum cardinality (number of edges) of a binary tree $H$ in $[i,j]$ such that 
\begin{enumerate}
\item $H$ is crossing-free, 
\item $\set{s_i, t_j} \in E_H$, 
\item $s_i$ is a leaf node in $H$. 
\end{enumerate}
$\mbt_T(i,j)$ is similarly defined except that in the last constraint we require $t_j$ (instead of $s_i$) to be a leaf node in $H$. 
Finally let 
\begin{align*}
\mbt(G) \coloneqq \max_{i \in [p], j \in [q]}\set{\max\set{\mbt_S(i,j), \mbt_T(i,j)}}.
\end{align*}

\begin{lemma} \label{lem:dp-correctness}
Let $G=(S,T,E)$ be a strongly ordered bipartite permutation graph. Let $\opt(G)$ be the cardinality of the maximum binary tree in $G$. Then $\mbt(G) = \opt(G)$.
\end{lemma}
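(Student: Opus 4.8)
The plan is to prove the two inequalities $\mbt(G) \le \opt(G)$ and $\opt(G) \le \mbt(G)$ separately. The first is immediate from the definitions: for every $i \in [p]$ and $j \in [q]$, the values $\mbt_S(i,j)$ and $\mbt_T(i,j)$ are each the cardinality of some (crossing-free) binary tree contained in the induced subgraph $[i,j] = G[S_i \cup T_j]$, which is a subgraph of $G$. Hence every such value is at most $\opt(G)$, and taking the maximum over all $i,j$ and over the two choices yields $\mbt(G) \le \opt(G)$. This direction requires nothing beyond unwinding the definition of $\mbt(G)$.

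For the reverse inequality I would invoke the structural results proved above. By Lemma \ref{lem:mbt-crossing-free} there is a maximum binary tree $H = (V_H, E_H)$ in $G$ that is crossing-free, so that $|E_H| = \opt(G)$. If $\opt(G) = 0$ the claim is trivial, so assume $H$ has at least one edge; being connected and bipartite, $H$ then meets both sides, i.e.\ $S \cap V_H \neq \varnothing$ and $T \cap V_H \neq \varnothing$. Let $s_i$ be the $\sigma$-minimum element of $S \cap V_H$ and $t_j$ the $\sigma$-minimum element of $T \cap V_H$. Because $s_i$ and $t_j$ are these minima, every $S$-vertex of $H$ lies in $S_i = \set{s_i, \cdots, s_p}$ and every $T$-vertex of $H$ lies in $T_j = \set{t_j, \cdots, t_q}$; consequently $V_H \subseteq S_i \cup T_j$, so $H$ is a connected crossing-free subgraph of $[i,j]$.

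Now I would apply Lemma \ref{lem:first-edge} to $H$: since $s_i$ and $t_j$ are exactly the two minimum vertices of $S \cap V_H$ and $T \cap V_H$, the lemma gives $\set{s_i, t_j} \in E_H$ and that one of $s_i, t_j$ has degree $1$ in $H$. If $s_i$ is the degree-$1$ vertex, then $H$ satisfies all three defining constraints of $\mbt_S(i,j)$ (crossing-free, contains the edge $\set{s_i,t_j}$, and $s_i$ is a leaf), so $\mbt_S(i,j) \ge |E_H| = \opt(G)$; symmetrically, if $t_j$ has degree $1$ then $\mbt_T(i,j) \ge \opt(G)$. In either case $\mbt(G) \ge \opt(G)$, completing the argument.

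The main obstacle here is not any hard computation but the bookkeeping that lines up $H$ with the correct subproblem: one must check that the chosen indices $(i,j)$ genuinely place $H$ inside $[i,j]$ (the containment $V_H \subseteq S_i \cup T_j$) and that $H$ meets exactly the three constraints defining $\mbt_S$ or $\mbt_T$. The crossing-free property and the ``one endpoint is a leaf'' property are supplied precisely by Lemmas \ref{lem:mbt-crossing-free} and \ref{lem:first-edge}, so the only genuine verifications are the subgraph containment and the disposal of the degenerate edgeless case.
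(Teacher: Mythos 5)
Your proof is correct and follows essentially the same route as the paper's: the easy direction $\mbt(G)\le\opt(G)$ from the definitions, and for the reverse direction taking a crossing-free maximum binary tree via Lemma \ref{lem:mbt-crossing-free}, locating it inside $[i,j]$ via the $\sigma$-minima, and applying Lemma \ref{lem:first-edge} to certify the three constraints of $\mbt_S(i,j)$ or $\mbt_T(i,j)$. Your explicit treatment of the edgeless case and of the containment $V_H\subseteq S_i\cup T_j$ is slightly more careful than the paper's, but the argument is the same.
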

\begin{proof}
Since it is trivial that $\opt(G) \ge \mbt(G)$, we focus on the other direction $\opt(G) \le \mbt(G)$.

Let $H=(V_H,E_H)$ be a maximum binary tree in $G$, i.e. $|E_H| = \opt(G)$. By Lemma \ref{lem:mbt-crossing-free}, we can further assume that $H$ is a crossing-free. Let $s_i$ be the minimum vertex in $S \cap V_H$ and let $t_j$ be the minimum vertex in $T \cap V_H$. Since $H$ is a connected crossing-free subgraph, by Lemma \ref{lem:first-edge} we have that $\set{s_i,t_j} \in E_H$, and that one of $s_i$ and $t_j$ is a leaf node in $H$. Observing that $H$ is also a maximum binary tree in the subgraph $[i,j]$, we have
\begin{align*}
\opt(G) = |E_H| = \max\set{\mbt_S(i,j), \mbt_T(i,j)} \le \mbt(G).
\end{align*}
\end{proof}

Now in order to compute $\opt(G)$, it suffices to compute $\mbt(G)$ which amounts to solving the subproblems $\mbt_S(i,j)$ and $\mbt_T(i,j)$. The following recurrence relations immediately give a dynamic programming algorithm:
\begin{align}
\mbt_S(i,j) = \begin{cases}
0 & \textup{if $s_i \notin N_{[i,j]}(t_j)$,} \\
1 & \textup{if $N_{[i,j]}(t_j) = \set{s_i}$,} \\
{\displaystyle \max\set{\mbt_T(i+1,j)+1, \max_{2\le k\le \abs{N_{[i,j]}(t_j)}-1}\set{\mbt_T(i+k,j)+2}} } & \textup{if $\abs{N_{[i,j]}(t_j)} \ge 2$.}
\end{cases} \label{eqn:recurrence1} \\
\mbt_T(i,j) = \begin{cases}
0 & \textup{if $t_j \notin N_{[i,j]}(s_i)$,} \\
1 & \textup{if $N_{[i,j]}(s_i) = \set{t_j}$,} \\
{\displaystyle \max\set{\mbt_S(i,j+1)+1, \max_{2\le k\le \abs{N_{[i,j]}(s_i)}-1}\set{\mbt_S(i,j+k)+2}} } & \textup{if $\abs{N_{[i,j]}(s_i)} \ge 2$.} 
\end{cases} \label{eqn:recurrence2}
\end{align}
The boundary conditions are given by
\begin{align*}
\mbt_S(p+1,q+1) = \mbt_T(p+1,q+1) = 0.
\end{align*}

\begin{lemma} \label{lem:recurrence-correctness}
$\mbt_S(i,j)$ and $\mbt_T(i,j)$ satisfy the recurrence relations (\ref{eqn:recurrence1}) and (\ref{eqn:recurrence2}).
\end{lemma}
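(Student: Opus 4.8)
The plan is to prove the recurrence for $\mbt_S(i,j)$; the relation for $\mbt_T(i,j)$ then follows by the symmetric argument that interchanges the roles of $S$ and $T$. Throughout I will use two structural facts. First, since any tree counted by $\mbt_S(i,j)$ lives inside $[i,j]=G[S_i\cup T_j]$ and must contain the edge $\set{s_i,t_j}$, the vertices $s_i$ and $t_j$ are automatically the minimum (w.r.t.\ $\sigma$) vertices of $S$ and $T$ appearing in it. Second, applying Lemma \ref{lem:neighborhood-interval} with the roles of $S$ and $T$ swapped shows that $N_G(t_j)$ is an interval of consecutive vertices of $S$; intersecting with the suffix $S_i$ and using $s_i\in N_{[i,j]}(t_j)$ gives $N_{[i,j]}(t_j)=\set{s_i,s_{i+1},\dots,s_{i+|N_{[i,j]}(t_j)|-1}}$, which is exactly the consecutive structure the indices $i+1$ and $i+k$ in the recurrence rely on. The two base cases are then immediate: if $s_i\notin N_{[i,j]}(t_j)$ no admissible tree exists and $\mbt_S(i,j)=0$, while if $N_{[i,j]}(t_j)=\set{s_i}$ then $t_j$ can only attach to the leaf $s_i$, forcing the single-edge tree of size $1$.

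For the case $|N_{[i,j]}(t_j)|\ge 2$ I would prove the two inequalities separately. For ``$\text{RHS}\le\mbt_S(i,j)$'' I exhibit an admissible tree matching each term: given an optimal $\mbt_T(i+1,j)$-tree (which has $\set{s_{i+1},t_j}$ as an edge and $t_j$ as a leaf) I attach $s_i$ as a new leaf of $t_j$ to obtain a crossing-free binary tree of size $\mbt_T(i+1,j)+1$; similarly, given an optimal $\mbt_T(i+k,j)$-tree I attach both $s_i$ and $s_{i+1}$ as new leaves of $t_j$, raising its degree to $3$ and producing a tree of size $\mbt_T(i+k,j)+2$. In each case the attached vertices have index smaller than every $S$-vertex of the reused subtree (which lies in $S_{i+1}$ or $S_{i+k}$), so they are genuinely new, the degree of $t_j$ stays at most $3$, and no crossing is created because $t_j$ is the minimum $T$-vertex and $s_i$ has no neighbor other than $t_j$.

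For the reverse inequality ``$\mbt_S(i,j)\le\text{RHS}$'' I take an optimal admissible tree $H$ and split on $\deg_H(t_j)\in\set{1,2,3}$. The crucial structural claim, proved from crossing-freeness, is that among the $S$-neighbors of $t_j$ in $H$ all but the largest-indexed one, say $s_b$, are leaves: if a smaller neighbor $s_a$ had a second neighbor $t_{j'}$ then $j'>j$ and the edges $\set{s_a,t_{j'}},\set{s_b,t_j}$ would cross. A second application of the same idea shows that every $S$-vertex of $H$ other than the leaf neighbors of $t_j$ has index at least $b$, for otherwise such a vertex would carry an edge to some $t_{j'}>t_j$ crossing $\set{s_b,t_j}$. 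Consequently, deleting $s_i$ (and, when $\deg_H(t_j)=3$, the second leaf neighbor) yields a crossing-free tree inside $[b,j]$ with $\set{s_b,t_j}$ an edge and $t_j$ a leaf, hence of size at most $\mbt_T(b,j)$. This gives $|E(H)|\le\mbt_T(b,j)+1$ when $\deg_H(t_j)=2$ and $|E(H)|\le\mbt_T(b,j)+2$ when $\deg_H(t_j)=3$. When $b=i+1$ the degree-$2$ bound is exactly the first term of the recurrence; when $b=i+k$ with $k\ge 2$ both bounds are at most $\mbt_T(i+k,j)+2$, the corresponding second-term value, with $2\le k\le |N_{[i,j]}(t_j)|-1$ guaranteed by the interval property. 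The degenerate $\deg_H(t_j)=1$ case gives $|E(H)|=1$, dominated by the first term.

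I expect the main obstacle to be the second structural claim---that the tree is \emph{index-monotone} away from the corner, so that after deleting the corner leaf the remainder is honestly an $\mbt_T(b,j)$ instance rather than some tree reaching back to smaller-indexed $S$-vertices. Getting the crossing argument and the interval bookkeeping exactly right (in particular verifying $2\le k\le |N_{[i,j]}(t_j)|-1$ and that the reused subtree and the newly attached leaves are vertex-disjoint) is where the care is needed; everything else is routine once these are in place.
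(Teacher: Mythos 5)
Your proof is correct and takes essentially the same route as the paper's: both handle the two base cases directly and, for $\abs{N_{[i,j]}(t_j)}\ge 2$, prove the lower bound by attaching $s_i$ (and $s_{i+1}$) as leaves of $t_j$ to an optimal $\mbt_T$-subtree, and the upper bound by locating the largest-indexed neighbor of $t_j$ in an optimal tree, using crossing-freeness to argue the remaining neighbors are leaves, and deleting them to land in a $\mbt_T$ subproblem. Your explicit verification that all other $S$-vertices have index at least $b$ (so the residual tree genuinely lives in $[b,j]$) is a point the paper's proof leaves implicit, but the underlying argument is the same.
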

\begin{proof}
Since $S$ and $T$ are symmetric, we will only prove that $\mbt_S(i,j)$ satisfies relation (\ref{eqn:recurrence1}).

\emph{Case 1}: $s_i \notin N_{[i,j]}(t_j)$. Clearly $\mbt_S(i,j) = 0$ since $\set{s_i, t_j} \notin E$ implies that constraint 2 cannot be satisfied by any binary tree.

\emph{Case 2:} $N_{[i,j]}(t_j) = \set{s_i}$, i.e. $s_i$ is the unique neighbor of $t_j$ in the graph $[i,j]$. Since by constraint 3 vertex $s_i$ has to be a leaf node in $H$, we know that $\set{s_i, t_j}$ is the only binary tree which satisfies all 3 constraints. In this case $\mbt_S(i,j) = 1$.

\emph{Case 3:} Case 1 and Case 2 do not occur, which implies $d \coloneqq \abs{N_{[i,j]}(t_j)} \ge 2$. Consider the optimal binary tree $H=(V_H, E_H)$ satisfying all 3 constraints. Let $s_{i+k}$ ($1 \le k \le d-1$) be the ``furthest'' neighbor of $t_j$, i.e. the maximal element in $N_H(t_j) \setminus \set{s_i}$. We further consider 2 possibilities.
\begin{itemize}
\item $k=1$. We note that $t_j$ is a degree-2 node in this case. Consider the binary tree $H' = (V_{H'}, E_{H'})$ where $V_{H'}=V_H \setminus \set{s_i}$ and $E_{H'}=E_H \setminus \set{\set{s_i, t_j}}$. $H'$ is a feasible solution to the subproblem $\mbt_T(i+1,j)$ since $H'$ is a crossing-free binary tree which contains $t_j$ as a leaf node and the edge $\set{s_{i+1},t_j}$. We deduce that $|E_H| = |E_{H'}| + 1 \le \mbt_T(i+1,j) + 1$.

\item $k \ge 2$. Since $H$ is maximum, $t_j$ must have another neighbor other than $s_i$ and $s_{i+k}$. By Lemma \ref{lem:neighborhood-interval}, $s_{i+\ell}$ is a neighbor of $t_j$ for any $0 \le \ell \le k$. Since $H$ is crossing-free, that third neighbor of $t_j$ is a leaf node in $H$. Therefore without loss of generality we can assume that it is $s_{i+1}$. Consider the binary tree $H' = (V_{H'}, E_{H'})$ where $V_{H'}=V_H \setminus \set{s_i, s_{i+1}}$ and $E_{H'}=E_H \setminus \set{\set{s_i, t_j}, \set{s_{i+1}, t_j}}$. $H'$ is a feasible solution to the subproblem $\mbt_T(i+k,j)$ since $H'$ is a crossing-free binary tree which contains $t_j$ as a leaf node and the edge $\set{s_{i+k},t_j}$. We deduce that $|E_H| = |E_{H'}| + 2 \le \mbt_T(i+k,j) + 2$. 
\end{itemize}
Thus, we conclude that 
\begin{align*}
\mbt_S(i,j) \le \max\set{\mbt_T(i+1,j)+1, \max_{2\le k\le d-1}\set{\mbt_T(i+k,j)+2}}.
\end{align*}
To see the other direction of the inequality, we note that a feasible solution to $\mbt_T(i+1,j)$ induces a feasible solution to $\mbt_S(i,j)$ by including the edge $\set{s_i, t_j}$, and a feasible solution to $\mbt_T(i+k, j)$ induces a feasible solution to $\mbt_S(i,j)$ by including the edges $\set{s_i, t_j}$ and $\set{s_{i+1},t_j}$.
\end{proof}

We now give a formal proof of Theorem \ref{theorem:bipartite-permutation-graphs}.

\bipermALG*
\begin{proof}
Given a bipartite permutation graph $G$ with $n$ vertices and $m$ edges, there is an $O(m+n)$ time algorithm for finding a strong ordering of $G$ (cf. \cite{spinrad1987bipartite}). Suppose $G$ has connected components $G_1, G_2, \cdots, G_\ell$ and $G_i$ has $n_i$ vertices, hence $\sum_{i=1}^{\ell}n_i = n$. Every $G_i$ is a (strongly ordered) connected bipartite permutation graph. Since any binary tree in $G$ completely resides in one connected component of $G$, it suffices to solve $\mbt\tp{G_i}$ for every $G_i$ and return $\max_{1\le i \le\ell}\set{\mbt\tp{G_i}}$. Solving $\mbt\tp{G_i}$ requires $O\tp{n_i^3}$ time since there are $O\tp{n_i^2}$ subproblems ($\mbt_S\tp{i, j}$ and $\mbt_T\tp{i, j}$ for $i, j \in [n_i]$) solving each of which requires $O\tp{n_i}$ time. The overall running time is $O\tp{\sum_{i=1}^{\ell}n_i^3} = O\tp{n^3}$.
\end{proof}

\section{Conclusion and Open Problems}\label{sec:conclusions}




In this work, we introduced the maximum binary tree problem  (MBT) and presented hardness of approximation results for undirected, directed, and directed acyclic graphs, a fixed-parameter algorithm with the solution as the parameter, and efficient algorithms for 
bipartite permutation graphs. 
Our work raises several open questions that we state below. \\

\noindent {\bf Inapproximability of \dmaxBT.} 
The view that MBT is a variant of the longest path problem leads to the natural question of whether the inapproximability results for MBT match that of longest path:  
Is MBT in directed graphs (or even in DAGs) hard to approximate within a factor of $1/n^{1-\varepsilon}$ (we recall that longest path is hard to approximate within a factor of $1/n^{1-\eps}$ \cite{BHK04})? We remark that the self-improving technique is weak to handle $1/n^{1-\eps}$-approximations since the squaring operation yields no improvement.
The reduction in~\cite{BHK04} showing $1/n^{1-\eps}$-inapproximability of longest paths is from a restricted version of the vertex-disjoint paths problem and is very specific to paths. 
Furthermore, directed cycles play a crucial role in their reduction for a fundamental reason: longest path is polynomial-time solvable in DAGs. 
However, it is unclear if directed cycles are the source of hardness for MBT in digraphs (since MBT is already hard in DAGs). \\

\noindent {\bf Bicriteria Approximations.}
Given our inapproximability results, one natural algorithmic possibility is that of bicriteria approximations: can we find a tree with at least $\alpha \cdot OPT$ vertices while violating the degree bound by a factor of at most $\beta$? In particular, this motivates an intriguing direction concerning the longest path problem: Given an undirected/directed graph $G$ with a path of length $k$, can we find a $c_1$-degree tree in $G$ with at least $k/{c_2}$ vertices for some constants $c_1$ and $c_2$ efficiently?\\

\noindent {\bf Maximum Binary Tree in Permutation DAGs.}
Finally, it would be interesting to resolve the complexity of MBT in permutation DAGs (and permutation graphs). This would also resolve the open problem posed by Byers, Heeringa, Mitzenmacher, and Zervas of whether the maximum heapable subsequence problem is solvable in polynomial time \cite{byers2010heapable}.

\section*{Acknowledgements}

Karthekeyan Chandrasekaran is supported by NSF CCF-1814613 and NSF CCF-1907937. 
Elena Grigorescu, Young-San Lin, and Minshen Zhu are supported by NSF CCF-1910659 and NSF CCF-1910411. Gabriel Istrate was supported by a grant of Ministry of Research and Innovation, CNCS - UEFISCDI project number PN-III-P4-ID-PCE-2016-0842, within PNCDI III.

\bibliographystyle{amsplain}
\bibliography{reference}

\providecommand{\bysame}{\leavevmode\hbox to3em{\hrulefill}\thinspace}
\providecommand{\MR}{\relax\ifhmode\unskip\space\fi MR }
\providecommand{\MRhref}[2]{%
  \href{http://www.ams.org/mathscinet-getitem?mr=#1}{#2}
}
\providecommand{\href}[2]{#2}
\begin{thebibliography}{10}

\bibitem{ADR08}
Louigi Addario-Berry, Ketan Dalal, and Bruce~A Reed, \emph{Degree constrained
  subgraphs.}, Electronic Notes in Discrete Mathematics \textbf{19} (2005),
  257--263.

\bibitem{alon1995color}
Noga Alon, Raphael Yuster, and Uri Zwick, \emph{Color-coding}, J. ACM
  \textbf{42} (1995), no.~4, 844--856.

\bibitem{APPSS09}
Omid Amini, David Peleg, St{\'e}phane P{\'e}rennes, Ignasi Sau, and Saket
  Saurabh, \emph{Degree-constrained subgraph problems: Hardness and
  approximation results}, Approximation and Online Algorithms, 2009,
  pp.~29--42.

\bibitem{ASS08}
Omid Amini, Ignasi Sau, and Saket Saurabh, \emph{Parameterized complexity of
  the smallest degree-constrained subgraph problem}, Parameterized and Exact
  Computation, 2008, pp.~13--29.

\bibitem{AOW12}
Per Austrin, Ryan O’Donnell, and John Wright, \emph{{A new point of
  NP-hardness for 2-to-1 Label-Cover}}, Proceedings of the 15th Annual
  International Workshop on Approximation Algorithms for Combinatorial
  Optimization Problems, APPROX '12, 2012, pp.~1--12.

\bibitem{balogh2017heapability}
J{\'a}nos Balogh, Cosmin Bonchi{\c{s}}, Diana Dini{\c{s}}, Gabriel Istrate, and
  Ioan Todinca, \emph{On the heapability of finite partial orders}, Discrete
  Mathematics and Theoretical Computer Science \textbf{22} (2020), no.~1, paper
  \# 17.

\bibitem{BKN09}
Nikhil Bansal, Rohit Khandekar, and Viswanath Nagarajan, \emph{Additive
  guarantees for degree-bounded directed network design}, SIAM J. Comput.
  \textbf{39} (2009), no.~4, 1413--1431.

\bibitem{bjorklund2017narrow}
Andreas Bj{\"o}rklund, Thore Husfeldt, Petteri Kaski, and Mikko Koivisto,
  \emph{Narrow sieves for parameterized paths and packings}, Journal of
  Computer and System Sciences \textbf{87} (2017), 119--139.

\bibitem{BHK04}
Andreas Bj{\"o}rklund, Thore Husfeldt, and Sanjeev Khanna, \emph{Approximating
  longest directed paths and cycles}, Automata, Languages and Programming,
  2004, pp.~222--233.

\bibitem{byers2010heapable}
John Byers, Brent Heeringa, Michael Mitzenmacher, and Georgios Zervas,
  \emph{Heapable sequences and subseqeuences}, Proceedings of the Meeting on
  Analytic Algorithmics and Combinatorics, ANALCO '11, 2011, pp.~33--44.

\bibitem{CRRT09a}
Kamalika Chaudhuri, Satish Rao, Samantha Riesenfeld, and Kunal Talwar, \emph{A
  push–relabel approximation algorithm for approximating the minimum-degree
  mst problem and its generalization to matroids}, Theoretical Computer Science
  \textbf{410} (2009), no.~44, 4489--4503.

\bibitem{CRRT09b}
Kamalika Chaudhuri, Satish Rao, Samantha Riesenfeld, and Kunal Talwar,
  \emph{{What Would Edmonds Do? Augmenting Paths and Witnesses for
  Degree-Bounded MSTs}}, Algorithmica \textbf{55} (2009), no.~1, 157--189.

\bibitem{FPT-book}
Marek Cygan, Fedor~V. Fomin, Lukasz Kowalik, Daniel Lokshtanov, Daniel Marx,
  Marcin Pilipczuk, Michal Pilipczuk, and Saket Saurabh, \emph{Parameterized
  algorithms}, Springer, 2015.

\bibitem{EFRS90}
Paul Erd{\"o}s, Ralph~J Faudree, CC~Rousseau, and RH~Schelp, \emph{Subgraphs of
  minimal degree k.}, Discrete Math \textbf{85} (1990), no.~1, 53--58.

\bibitem{FR94}
Martin F{\"{u}}rer and Balaji Raghavachari, \emph{Approximating the
  minimum-degree steiner tree to within one of optimal}, Journal of Algorithms
  \textbf{17} (1994), no.~3, 409--423.

\bibitem{Gab83}
Harold~N. Gabow, \emph{An efficient reduction technique for degree-constrained
  subgraph and bidirected network flow problems}, Proceedings of the Fifteenth
  Annual ACM Symposium on Theory of Computing, STOC '83, 1983, pp.~448--456.

\bibitem{GJ-book}
Michael Garey and David Johnson, \emph{Computers and intractability}, W. H.
  Freeman and Company, 1979.

\bibitem{Goe06}
Michel~X. Goemans, \emph{Minimum bounded degree spanning trees}, Proceedings of
  the 47th Annual IEEE Symposium on Foundations of Computer Science, FOCS '06,
  2006, pp.~273--282.

\bibitem{guruswami2013improved}
Venkatesan Guruswami and Ali~Kemal Sinop, \emph{Improved inapproximability
  results for maximum k-colorable subgraph}, Theory of Computing \textbf{9}
  (2013), 413--435.

\bibitem{istrate2016heapability}
Gabriel Istrate and Cosmin Bonchi{\c{s}}, \emph{Heapability, interactive
  particle systems, partial orders: Results and open problems}, Proceedings of
  {DCFS}'2016, 18th International Conference on Descriptional Complexity of
  Formal Systems, Springer, 2016, pp.~18--28.

\bibitem{karger1997approximating}
David~R. Karger, Rajeev Motwani, and G.~D.~S. Ramkumar, \emph{On approximating
  the longest path in a graph}, Algorithmica \textbf{18} (1997), no.~1, 82--98.

\bibitem{KKN08}
Rohit Khandekar, Guy Kortsarz, and Zeev Nutov, \emph{On some network design
  problems with degree constraints}, Journal of Computer and System Sciences
  \textbf{79} (2013), no.~5, 725--736.

\bibitem{KLOKS1998313}
Ton Kloks, Dieter Kratsch, and Haiko Müller, \emph{Bandwidth of chain graphs},
  Information Processing Letters \textbf{68} (1998), no.~6, 313--315.

\bibitem{KR02}
Jochen K\"{o}nemann and R.~Ravi, \emph{A matter of degree: Improved
  approximation algorithms for degree-bounded minimum spanning trees}, SIAM J.
  Comput. \textbf{31} (2002), no.~6, 1783--1793.

\bibitem{koutis2008faster}
Ioannis Koutis, \emph{Faster algebraic algorithms for path and packing
  problems}, International Colloquium on Automata, Languages, and Programming,
  Springer, 2008, pp.~575--586.

\bibitem{koutis2009limits}
Ioannis Koutis and Ryan Williams, \emph{Limits and applications of group
  algebras for parameterized problems}, International Colloquium on Automata,
  Languages, and Programming, Springer, 2009, pp.~653--664.

\bibitem{KR05}
Jochen Könemann and R.~Ravi, \emph{Primal-dual meets local search:
  Approximating msts with nonuniform degree bounds}, SIAM Journal on Computing
  \textbf{34} (2005), no.~3, 763--773.

\bibitem{LNSS09}
Lap~Chi Lau, Joseph~(Seffi) Naor, Mohammad Salavatipour, and Mohit Singh,
  \emph{Survivable network design with degree or order constraints}, SIAM
  Journal on Computing \textbf{39} (2009), no.~3, 1062--1087.

\bibitem{papadimitriou1993traveling}
Christos~H. Papadimitriou and Mihalis Yannakakis, \emph{The traveling salesman
  problem with distances one and two}, Math. Oper. Res. \textbf{18} (1993),
  no.~1, 1--11.

\bibitem{porfilio2015combinatorial}
Jaclyn Porfilio, \emph{A combinatorial characterization of heapability},
  Master's thesis, Williams College, 2015.

\bibitem{RMRRH01}
R.~Ravi, Madhav Marathe, S.~S. Ravi, Daniel Rosenkrantz, and Harry~B. Hunt~III,
  \emph{Approximation algorithms for degree-constrained minimum-cost
  network-design problems}, Algorithmica \textbf{31} (2001), no.~1, 58--78.

\bibitem{SL15}
Mohit Singh and Lap~Chi Lau, \emph{Approximating minimum bounded degree
  spanning trees to within one of optimal}, J. ACM \textbf{62} (2015), no.~1,
  1--19.

\bibitem{smith2011minimum}
Jacqueline Smith, \emph{Minimum degree spanning trees on bipartite permutation
  graphs}, Master's thesis, University of Alberta, 2011.

\bibitem{spinrad1987bipartite}
Jeremy Spinrad, Andreas Brandst{\"a}dt, and Lorna Stewart, \emph{Bipartite
  permutation graphs}, Discrete Applied Mathematics \textbf{18} (1987), no.~3,
  279--292.

\bibitem{Efficient_Alg_Longest_Path}
Ryuhei Uehara and Yushi Uno, \emph{Efficient algorithms for the longest path
  problem}, Proceedings of the 15th International Conference on Algorithms and
  Computation, ISAAC '04, 2004, pp.~871--883.

\bibitem{williams2009finding}
Ryan Williams, \emph{Finding paths of length {$k$} in {$O^*(2^k)$} time},
  Information Processing Letters \textbf{109} (2009), no.~6, 315--318.

\end{thebibliography}

\end{document}